\newcommand{\note}[1]{}
\providecommand{\note}[1]{~\\\frame{\begin{minipage}[c]{\textwidth}\vspace{2pt}\center{#1}\vspace{2pt}\end{minipage}}\vspace{3pt}\\}
\date{November 15, 2008}
\newcommand{\nothing}{}
\def\kfigure{%
   \let\@makecaption\nothing%
   \global\@floatcaption={}%
   \let\@oldfigcaption\@klu@figcaption%
   \let\@klu@figcaption\nothing%
   \@ifnextchar[{\f@gurewithoptions}{\f@gurewithoptions[tbp]}}%
\def\endkfigure{
       \vspace{-22pt}
       \egroup
       \@getindent \gdef\cap@type{0}%
       \hfuzz=\floatindent
       \if@kaprotate \@setrotatedfigbox \pagebreak
       \else \@setnotrotatedfigbox \fi
       \let\centerline\saved@centerline
       \if@fixedfloat \vskip\intextsep \@fixedfloatfalse
       \else \end@float \fi
       \hfuzz=0.1pt
       \let\@klu@figcaption\@oldfigcaption
       }%
\newcommand{\E}[1]{\mathbb{E}\left[#1\right]}
\newcommand{\rv}{RV$_{\sigma(t)}$}
\newcommand{\agent}[1]{\texttt{#1}}
\newcommand{\zgame}[1]{\textit{#1}}
\newcommand{\awesome}{\agent{AWE\-SOME}}
\newcommand{\Awesome}{\agent{AWE\-SOME}}
\newcommand{\determined}{\agent{det\-er\-min\-ed}}
\newcommand{\Determined}{\agent{Det\-er\-min\-ed}}
\newcommand{\fp}{\agent{fict\-it\-ious play}}
\newcommand{\FP}{\agent{Fict\-it\-ious play}}
\newcommand{\giga}{\agent{GIGA-\-WoLF}}
\newcommand{\gsa}{\agent{GSA}}
\newcommand{\meta}{\agent{meta}}
\newcommand{\Meta}{\agent{Meta}}
\newcommand{\minimax}{\agent{mini\-max-Q}}
\newcommand{\Minimax}{\agent{Mini\-max-Q}}
\newcommand{\minimaxidr}{\agent{mini\-max-Q-IDR}}
\newcommand{\Minimaxidr}{\agent{Mini\-max-Q-IDR}}
\newcommand{\q}{\agent{Q-learn\-ing}}
\newcommand{\Q}{\agent{Q-Learn\-ing}}
\newcommand{\random}{\agent{ran\-dom}}
\newcommand{\Random}{\agent{Ran\-dom}}
\newcommand{\rvs}{\agent{\rv}}
\newtheorem{ob}{Observation}
\newtheorem{theorem}{Theorem}[section]
\newtheorem{lem}[theorem]{Lemma}
\newenvironment{proof}[1][Proof]{\begin{trivlist}
\item[\hskip \labelsep {\bfseries #1}]}{\end{trivlist}}
\newcommand{\var}[1]{Var\left[#1\right]}
\newcommand{\cov}[1]{Cov\left[#1\right]}
\newenvironment{packed_list}{
\begin{itemize}
  \setlength{\itemsep}{1pt}
  \setlength{\parskip}{0pt}
  \setlength{\parsep}{0pt}
}{\end{itemize}}
\begin{document}
\begin{opening}
\title{Empirically Evaluating Multiagent Learning Algorithms}

\author{Erik \surname{Zawadzki}}
\author{Asher \surname{Lipson}}
\author{Kevin \surname{Leyton-Brown}}
\institute{Department of Computer Science, University of British Columbia, Vancouver, Canada\\\{epz, alipson, kevinlb\}@cs.ubc.ca}


\begin{abstract}
There exist many algorithms for learning how to play repeated bimatrix games. Most of these algorithms are justified in terms of some sort of theoretical guarantee. On the other hand, little is known about the empirical performance of these algorithms. Most such claims in the literature are based on small experiments, which has hampered understanding as well as the development of new multiagent learning (MAL) algorithms.
We have developed a new suite of tools for running multiagent experiments: the MultiAgent Learning Testbed (MALT). These tools are designed to facilitate larger and more comprehensive experiments by removing the need to build one-off experimental code. MALT also provides baseline implementations of many MAL algorithms, hopefully eliminating or reducing differences between algorithm implementations and increasing the reproducibility of results.
Using this test suite, we ran an experiment unprecedented in size. We analyzed the results according to a variety of performance metrics including reward, maxmin distance, regret, and several notions of equilibrium convergence. We confirmed several pieces of conventional wisdom, but also discovered some surprising results. For example, we found that single-agent $Q$-learning outperformed many more complicated and more modern MAL algorithms.

\end{abstract}
\keywords{Game theory, multiagent systems, reinforcement learning, empirical algorithmics}
\end{opening}

\section{Introduction}
Urban road networks, hospital systems and commodity markets are all examples of complicated multiagent systems that are essential to everyday life. Indeed, any social interaction can be seen as a multiagent problem. As a result of the prominence of multiagent systems, a lot of attention has been paid to designing and analyzing learning algorithms for multiagent environments. A multitude of different algorithms exist for a variety of different settings. Some prominent examples include algorithms by \inlinecite{littman94}, \inlinecite{singh00}, \inlinecite{hu03}, \inlinecite{greenwald03}, \inlinecite{bowling04}, \inlinecite{powers04}, \inlinecite{banerjee06}, and \inlinecite{conitzer07}.

We take the position that the best multiagent learning (MAL) algorithm is the one that achieves the highest possible average reward.\footnote{For alternatives, see \inlinecite{shoham06}---who called the approach that we espouse the ``prescriptive, non-cooperative agenda''---or \inlinecite{sandholm2007}.} Under this view, the problem faced by the designer of a MAL algorithm is qualitatively the same as the problem faced by the designer of a single-agent reinforcement learning algorithm. However, there is a fundamental difference between the two settings. In the stationary environment faced by classical reinforcement learners, the concept of an optimal policy is well defined, and hence learning algorithms can attempt to identify this policy. In a multiagent environment, the best policy to follow depends on the actions taken by the opponent, and thus on the ways in which the opponent's future behavior will be affected by the learner's present actions. The best policy depends on the opponent's strategy, and so there can be no global ``optimum.''

It is this added conceptual complexity that makes MAL problems interesting; however, it has also made them harder to analyze. Theoretical claims about MAL algorithms generally do not speak directly about average reward. Instead, they tend to describe alternative aspects of the algorithm's performance that are intended to `stand in' for reward. Some work has insisted that algorithms should converge to stage-game Nash equilibria, or should do so at least in the case of ``self play.'' Others have insisted on other sorts of convergence properties or on regret bounds. Still others have offered different guarantees for performance against different classes of opponents.

Because many MAL algorithms are incomparable on the basis of their theoretical properties, and further because the extent to which these various properties correlate with an algorithm's ability to achieve high average reward in practice is unclear, it is generally argued that MAL algorithms should be compared empirically. Many such experimental comparisons have been performed in the literature (see, e.g., \cite{nudelman04,powers04}). 
However, for the most part these experiments have been designed to advocate for a newly-designed algorithm rather than to survey the whole landscape. As a consequence, most of these experiments have been small in terms of the number of game instances and opposing algorithms considered. Furthermore, different experiments have in many cases measured performance in different ways, making it difficult to compare their results and draw an overall conclusion. There is therefore considerable opportunity to expand our understanding of how existing MAL techniques compare in practice.

Part of the reason for the relative paucity of large-scale empirical work is that neither a centralized algorithm repository nor a standardized test setup exists. This is unfortunate, not only because considerable work has to be invested in designing one-off testbeds and reimplementing algorithms, but also because centralized and public repositories increase reproducibility and decrease the danger that different experiments will achieve different results because of differences in implementations. Publicly available and scrutinized implementations offer the promise of experiments that are easier to run, reproduce, and compare.

In this article we make two main contributions. First, we describe the design and implementation of a platform for running MAL experiments (\S\ref{sec:platform}). This platform offers several advantages over one-off setups. We hope that it will facilitate new and larger-scale empirical work.

Our second main contribution is the analysis of such an empirical study. This experiment is, to our knowledge, unprecedented in terms of scale. We make suggestions about how empirical MAL performance data should be analyzed (\S\ref{sec:experiment}), and offer a detailed discussion of different algorithms' average reward in practice (\S\ref{sec:result}). Furthermore, we draw connections between different performance metrics that have been explored in theoretical work (\S\ref{sec:other-metrics}), and show that some of the least sophisticated algorithms achieve extremely competitive performance.

\section{Algorithms and Past Experimental Work}

MAL algorithms have been studied for over half a century. This rich investigation has produced not only a profusion of competing algorithms but also various distinct problem formulations. Does an algorithm know the game's reward functions before the game starts, or do reward functions need to be learned? How many opponents can an algorithm face? What signals about the opponent's actions can an algorithm observe? Can an algorithm rely on being able to determine stage-game Nash equilibria or other computationally-expensive game properties? Each of these assumptions changes the learning problem.

In this section we describe the algorithms we study in this paper, and also survey past experimental evaluations of MAL algorithms. The creators of the algorithms that we describe answered the above questions in different ways, reflecting the community's broader disagreement about precisely what problem MAL algorithms should aim to solve. In order to permit the study of a broad range of algorithms, we have answered the above questions permissively: we allow algorithms access to the reward functions, to signals about the opponent's actions, and to computationally-costly game properties. Thus, we are able to compare algorithms that require this information to others that are capable of learning it. 


The other important experimental choice we faced was the class of games upon which to evaluate algorithms. We chose to restrict ourselves to 2-player repeated games. (Note, however, that we do not restrict the number of actions in the repeated game.) We chose this setting instead of $n$-player repeated games or either $2$- or $n$-player stochastic games for two reasons. First, the case of two-player repeated games has received the most past study (though see e.g., \cite{vu06}). Second, considerably more work has been done to identify experimentally-interesting test data for this case. We restricted our attention to algorithms that can play two-player games of any size and with any payoff structure. We thus did not make use of work that insists (e.g.) on two-action games \cite{singh00} or constant-sum games \cite{littman94}. We also mention as an aside that MAL experiments have been conducted in settings that are neither generalizations nor restrictions of our setting, such as the population-based work by \inlinecite{axelrod87} and \inlinecite{airiau2007}.


\subsection{Fictitious Play}

\FP\ \cite{brown51} is probably the earliest example of a learning algorithm for two-player games repeated games. Essentially, \fp\ assumes that the opponent is playing an unknown and potentially mixed stationary strategy, and tries to estimate this strategy from the opponent's empirical distribution of actions---the frequency counts for each of its actions normalized to be probabilities. Clearly, in order to collect the frequency counts \fp\ must be able to observe the opponent's actions. The algorithm then, at each iteration, best responds to this estimated strategy. Because \fp\ needs to calculate a best response, it also assumes complete knowledge of its own payoffs.

Fictitious play is guaranteed to converge to a Nash equilibrium in self play for a restricted set of games. These games are said to have the \textit{fictitious play property} (see, for instance \inlinecite{monderer96b}; for an example of a simple $2 \times 2$ game without this property see \inlinecite{monderer96}). \FP\ will also eventually best respond to any stationary strategy. This algorithm's general structure has been extended in a number of ways, including \textit{smooth fictitious play} \cite{fudenberg93}, and we will see later that \fp\ provides the foundation for several more modern algorithms. 

\FP\ is known to be subject to miscoordination problems, particularly in self play, and particularly in games that reward asymmetric coordination (e.g., dispersion games).
There are some clever measures that can be taken to avoid some of these kinds of problems (e.g., best response tie-breaking rules and randomization), but miscoordination remains a general problem for the \fp\ approach.



\subsection{Determined}

\Determined\, or `bully' (see, for example, \inlinecite{powers04}) is an algorithm that solves the multiagent learning problem by ignoring it. MAL algorithms typically change their behavior by adapting to signals about the game. However, \determined, as its name suggests, simply relies on other algorithms to adapt their strategies to it.

\Determined\ enumerates the stage-game Nash equilibria and selects the one that maximizes its personal reward in equilibrium; then, it plays its corresponding action forever.\footnote{We can imagine variations on the \determined\ idea that do not play an action from a Nash equilibrium. For example, a variant could instead choose the action whose best response yields the algorithm the highest payoff. Note that this differs from a stage-game Nash equilibrium because this \determined-like algorithm need not itself play a best response. Such an outcome amounts to an equilibrium of the Stackelberg version of the stage game. That is, we can change the game so that instead of the two players moving simultaneously, the \determined-like agent moves first.} Certainly, \determined\ can lead to some obvious problems. For instance, in self play two \determined\ agents can stubbornly play actions from different equilibria, leading to sub-equilibrium average reward. Additionally, enumerating all the Nash equilibria not only requires complete knowledge of every agents' reward function, but is also computationally costly, limiting the use of this strategy to relatively small stage games. All the same, \determined\ serves as a useful baseline for comparison. Also, slight variations of this algorithm are, like \fp, at the heart of some more modern algorithms.

%

\subsection{Targeted Algorithms}
\label{sec:targeted}

We next focus on two so-called \textit{targeted} algorithms, which focus on playing against particular classes of opponents.
Both these algorithms are based around identifying what the opponent is doing (with particular attention paid to stationarity and Nash equilibrium), and then updating their behavior based on this assessment.

\awesome\ \cite{conitzer03, conitzer07} tracks the opponent's behavior in different periods of play and tries to maintain hypotheses about its play. For example, \awesome\ attempts to determine whether the other algorithm is playing a particular stage-game Nash equilibrium. If it is, \awesome\ responds with its own component of that special equilibrium. This special equilibrium is known in advance by all implementations of \awesome\ to avoid equilibrium selection issues in self play. There are other situations where it acts in a similar fashion to \fp, and there are still other discrete modes of play that it engages in depending on its beliefs.

\Meta\ \cite{powers04} switches between three simpler strategies: a strategy similar to \fp, a \determined-style algorithm that stubbornly plays a Nash equilibrium, and the maxmin strategy. Strategy selection depends on the recorded history of average reward and the empirical distribution of the opponent's actions across different periods of play. \Meta\ was shown both theoretically and empirically to be nearly optimal against itself, close to the best response against stationary agents, and to approach (or exceed) the security level of the game in all cases.

Because both of these algorithms switch between simpler strategies depending on the situation, they can be viewed as portfolio algorithms. Note that both manage similar portfolios that include a \determined-style algorithm and a \fp\ algorithm.

\subsection{Q-learning Algorithms}

A broad family of MAL algorithms are based on \q{} \cite{watkins92}, which is a algorithm for finding the optimal policy in Markov Decision Processes (MDPs). This family of MAL algorithms does not explicitly model the opponent's strategy choices. They instead settle for learning the expected discounted reward for taking an action and then following a stationary policy encoded in the $Q$-function. In order to learn the $Q$-function, algorithms typically take random exploratory steps with a small (possibly decaying) probability.

Each algorithm in this family has a different way of selecting its strategy based on this $Q$-function. For instance, one could try a straightforward adaptation of single-agent \q\ to the multiagent setting by ignoring the impact that the opponent's action makes on the protagonist's payoffs. The algorithm simply updates its reward function whenever a new reward observation is made, where the new estimate is a convex combination of the old estimate and the new information: \begin{equation}
Q(a_i) = (1-\alpha_t)Q(a_i) + \alpha_t\left[r + \gamma \max_a{Q(a)}\right].\label{eqn:q}
\end{equation} This algorithm essentially considers the opponent's behavior to be an unremarkable part of a noisy and non-stationary environment. The non-stationarity of the environment makes learning difficult but this idea is not entirely without merit: \q\ has been shown to work in other non-stationary environments (see, for instance, \inlinecite{sutton99}).

\Minimax\ \cite{littman94} is one of the first explicitly multiagent applications of the $Q$-learning idea. The $Q$-function that it learns is based on the action profile and not just the protagonist's action: it learns $Q(a_i, a_{-i})$. Minimax-Q uses the mixed maxmin strategy calculated from the $Q$-function as its strategy:
\begin{equation}
Q(a_i, a_{-i})	= {}   (1-\alpha_t)Q(a_i, a_{-i}) + \alpha_t\left[r + \gamma \max_{\sigma_i \in \prod{(A_i)}}\left[\min_{a_{-i} \in A_{-i}}\sum_{a_i}\sigma_i(a_i)Q(a_i, a_{-i})\right]\right].\label{eqn:minmax_q}
\end{equation}
Such a strategy is sensible to the extent that the protagonist believes that the opponent aims to minimize his payoff, or that the protagonist cares about worst-case guarantees. It should be noted that since its maxmin strategies are calculated from learned $Q$-values, they may not be the game's actual maxmin strategies and thus fail to reflect the security value. Like \q, \minimax\ also takes the occasional exploration step.

There are further modifications to this general scheme. \agent{Nash-Q}\ \cite{hu03} learns different $Q$-functions for itself and its opponents and plays a stage-game Nash equilibrium strategy for the game in\-duced by these $Q$\--values. \agent{Co\-rre\-lated-Q}\ \cite{greenwald03} does something similar except that it chooses from the set of correlated equilibria using a variety of different selection methods. Both of these algorithms assume that they are able to observe not only the opponents' actions but also their rewards, and additionally that they have the computational wherewithal to compute the necessary solution concept.

\subsection{Gradient Algorithms}\label{gradient-algs}

Gradient ascent algorithms, such as \giga\ \cite{bowling04} and \rvs\ \cite{banerjee06}, maintain a mixed strategy that is updated in the direction of the payoff gradient. The specific details of this updating process depend on the individual algorithms, but the common feature is that they increase the probability of actions with high reward and decrease the probability of unpromising actions. This family of algorithms is similar to \q\ because they do not explicitly model their opponent's strategies and instead treat them as part of a non-stationarity environment.

\giga\ is the latest algorithm in a line of gradient learners that started with \agent{IGA} \cite{singh00}. \giga\ uses an adaptive step length that makes it more or less aggressive about changing its strategy. It compares its strategy to a baseline strategy and makes the update larger if it is performing worse than the baseline. \giga\ guarantees non-positive regret in the limit (regret is discussed in greater detail in \S\ref{sec:regret}) and strategic convergence to a Nash equilibrium when playing against \agent{GIGA} \cite{zinkevich03} in two-player two-action games.

There are two versions of \giga. The first version assumes prior knowledge of personal reward and the ability to observe the opponent's action---this is the version used in the proofs for \giga's no-regret and convergence guarantees. There is also a second version---on which all the experiments were based---that makes limited assumptions about payoff knowledge and computational power. Instead, like \q, it merely assumes that it is able to observe its own reward.

\rvs\ \cite{banerjee06} belongs to a second line of gradient algorithms that started with \agent{ReD\-VaL\-eR} \cite{banerjee04}. This algorithm also uses an adaptive step size when following the payoff gradient, like \giga, but does so on an action-by-action basis. This means that, unlike \giga, \rvs\ can be aggressive in updating some actions while being cautious about updating others. These updates are performed by comparing current reward to the reward at a Nash equilibrium. Therefore, \rvs\ requires complete information about the game and sufficient computational power to discover at least one stage-game Nash equilibrium. \rvs\ also guarantees no-regret in the limit and additionally provides some convergence results for self play in a restricted class of games.

\giga\ and \rvs\ differ in the way that they ensure that their updated strategies remain valid probability distributions. \giga\ \emph{retracts}: it maps an unconstrained vector to the vector on the probability simplex that is closest in $\ell_2$ distance. This approach has a tenancy to map vectors to extreme points of the simplex, reducing some action probabilities to zero. \rvs\ \emph{normalizes}, which is less prone to removing actions from its support. 

\subsection{Previous Experimental Results}

\begin{table}
			\footnotesize{
		\begin{tabular}{lllllll}
		\tiny{\textbf{Paper}}			& \tiny{\textbf{Algorithms}} 	& \tiny{\textbf{Distributions}} & \tiny{\textbf{Instances}} & \tiny{\textbf{Runs}} & \tiny{\textbf{Iterations}}\\
		\hline

		\inlinecite{littman94}		& 6		&	1		&	1					&	3 						 & $1 \times 10^5$ 					\\
		\inlinecite{claus97}			& 2		& 3		&	1 - 100	  & 1 - 100							& 50-2500						\\
		\inlinecite{greenwald03} 	& 7 	& 5		& 1					& 2500 - 3333		& $1 \times 10^5$		\\
		\inlinecite{bowling04b}		& 2		&	6		&	1					& ?							 & $1 \times 10^6$		\\
		\inlinecite{nudelman04}		&	3		& 13	&	100				&	10						 &	$1 \times 10^5$		\\
		\inlinecite{powers04}			& 11	& 21	& ?					&	?							 & $2 \times 10^5$		\\
		\inlinecite{banerjee06}		& 2		& 1		&	1					&	1							 &	16000							\\
		\inlinecite{conitzer07} 	& 3 	& 2		& 1					& 1							& 2500							\\
		\hline
		\end{tabular}}
		\caption[Previous experiments]{This table shows a summary of the experimental setup for a selection of papers. The summary includes the number of algorithms, the number of game distributions, the number of game instances drawn from these distributions, the number of runs or trials for each instance, and the number of iterations that the simulations were run for. In some cases, the setup was unclear, indicated  with a `?'. In many cases, fewer than $\left[Algorithms \times Distributions \times Instances \times Runs\right]$ runs were simulated, due to some sparsity in the experimental structures.}	
		\label{tab:setup}		
\end{table}

As discussed in the introduction, surprisingly little past work has aimed primarily to use large-scale experiments to compare the performance of MAL algorithms. Nevertheless, a considerable number of papers from the literature describe experimental comparisons, often in the context of arguing for a particular MAL algorithm or approach. We briefly survey that literature here.

Setting up a general-sum repeated two-player game experiment requires a number of design choices. What set of algorithms should be considered? On what set of games should these algorithms be run? If one is dealing with randomized algorithms (which includes any algorithm that is able to submit a mixed strategy), how many different runs should be simulated? For a particular game, for how many iterations should a simulation be run? As can be seen in Table~\ref{tab:setup}, experiments from the literature varied in all of these dimensions. Additionally, some papers do not describe all experimental parameters, making it difficult to compare results.

Overall, most of the tests performed in these papers considered few algorithms. In most cases, a newly proposed algorithm was evaluated by playing against one or two opponents. Some papers superficially appear to have used many algorithms, but in fact considered algorithms that varied only in small details. For example, in \inlinecite{littman94} two versions of \minimax\ and two versions of \q\ were tested, with each version differing only in  its training regime. In \inlinecite{greenwald03}, four versions of \agent{Correlated-Q} were tested against \q\ and \agent{Friend-Q} and \agent{Foe-Q} \cite{littman01}. \agent{Foe-Q} is the same as \minimax.

To our knowledge, the experiment that considered the greatest variety of algorithms was \inlinecite{powers04}. While four of the eleven algorithms tested in this study were simple stationary-strategy baselines, the remaining seven were MAL algorithms including \agent{Hyper-Q} \cite{tesauro04}, \agent{WoLF-PHC} \cite{bowling02b}, and a joint action learner \cite{claus97}.

Previous experiments have tended to investigate only small numbers of game instances, and these instances have tended to come from an even smaller number of game distributions. For example, \inlinecite{banerjee06} used only a single $3 \times 3$ action ``simple coordination game'' and \inlinecite{littman94} probed algorithm behavior with a single grid-world version of soccer. Initially, this limitation was partly due to the difficulty of creating a large number of diverse game instances. However with the creation of GAMUT \cite{nudelman04}, a suite of game generators, generating large game sets is now easy. Indeed, \inlinecite{nudelman04} also performed one of the largest previous MAL experiments, using three MAL algorithms (\minimax, \agent{WoLF} \cite{bowling01:1}, and \q) on $100$ game instances from each of thirteen distributions. Some recent papers have also leveraged GAMUT, such as \inlinecite{powers04}.

Finally, previous experiments have differed substantially in the number of iterations considered, ranging from 50 \cite{claus97} to $1 \times 10^6$ \cite{bowling04b}. Iterations in a repeated game are typically divided into ``settling in'' (also called a ``burn-in'' period) and ``recording'' phases, allowing the algorithms time to converge to stable behavior before results are recorded. \inlinecite{powers04} recorded the final $20~000$ of $200~000$ iterations and \inlinecite{nudelman04} used the final $10~000$ of $100~000$ iterations.

\label{sec:related_work}

\section{Platform}
\label{sec:platform}
The empirical experiments just described were generally conducted using one-off code tailored to the investigation of a particular feature of a given algorithm. This experimental design has a number of negative consequences. First, it decreases the reproducibility of experiments by, for instance, obscuring the details of algorithm implementation. Even when source code for the original experiment is available, its special-purpose nature can make it difficult to repurpose \note{Asher:Is repurpose the right word here? Do you mean generalize?} for follow-on studies or new experiments. Finally, rewriting similar code again and again wastes time that could be spent running more comprehensive experiments.

In this section, we describe our solution to this problem: an open and reusable platform called MALT (MultiAgent Learning Testbed) 2.0. It is available for free download at {\small{\url{http://www.cs.ubc.ca/~kevinlb/malt}}}. \note{If this is the right URL, I need new stuff to put on this page. If not, we should get the new URL. Either way, we should point the appropriate URL to a page that is in my web space (so it will never go away) but that you can edit.} \note{EPZ: Let's push this conversation to email}
This platform is designed for running two-player, general-sum, repeated-game MAL experiments. Basic visualization and analysis features are also included, as is support for running experiments using a computer cluster. Version 1.0 of MALT was introduced by \inlinecite{lipson05}; the version described here is a complete reimplementation of that work in a faster programming language (Java vs. Matlab), offering a wide variety of new features, bug fixes, and efficiency gains. Overall, we hope that other researchers will see MALT not as a finished product, but as a growing repository of tools, algorithms and experimental settings, and that they will use it as a base upon which to build (e.g., for the study of $N$-player repeated games or stochastic games). We have worked hard to make MALT easily extensible. For example, adding a new algorithm to the MALT GUI is as simple as providing a text file with a list of parameters, and adding an algorithm to the engine requires very little coding beyond the implementation of the algorithm itself.

\subsection{Definitions}

We now define some terms. An ordered pair of two algorithms is a \textit{pairing}. This pair is ordered because many two-player games are asymmetric: the payoff structure for the row player is different than the payoff structure for the column player. The case where an algorithm is paired with a copy of itself (but with different internal states and independent random seeds) is called \textit{self play}.

We concentrate on drawing games from distributions called \textit{game generators}. A particular sample from a game generator is a \textit{game instance}. \zgame{Prisoner's Dilemma} is a game generator and an example game instance is a particular set of payoffs that obey the \zgame{Prisoner's Dilemma} preference ordering. Other game generators are more heterogeneous; for example, one that we will discuss later samples from the space of all strategically distinct $2 \times 2$ games.

A pairing and a game instance, taken together, are called a \textit{match}. A  match with one of the algorithms in the pairing left unspecified is a \textit{partially specified match} (PSM). If two algorithms play the same PSM, we conclude that any differences between their performances are due to the algorithms themselves (including any internal randomization) because all else is held constant.

A particular simulation of a match is called a \textit{run} or \textit{trial}. For pairs of deterministic algorithms, a single run is sufficient to characterize a match; for randomized algorithms (including any algorithm that plays a mixed strategy) multiple runs may each yield different behavior. In such cases, the match must be characterized by a solution quality distribution (SQD)---the empirical distribution of a performance metric.\footnote{We use the term SQD because it is standard in the empirical study of algorithms. We note nevertheless that in MAL there is no clear notion of a game having a `solution', and that these distributions might be more meaningfully called `metric distributions'.} Each run consists of a number of \textit{iterations}. In each iteration, the algorithms select strategies and then receive some feedback: e.g., their reward; the action choice of their opponent. Algorithms are allowed to select mixed strategies; in this case, a single action is sampled from the mixing distribution by the game. The iterations are separated into \textit{settling-in iterations} and \textit{recorded iterations}.

\subsection{Platform Structure}

\begin{zfigure}[t]
	\centering
		\includegraphics[width=0.6\textwidth]{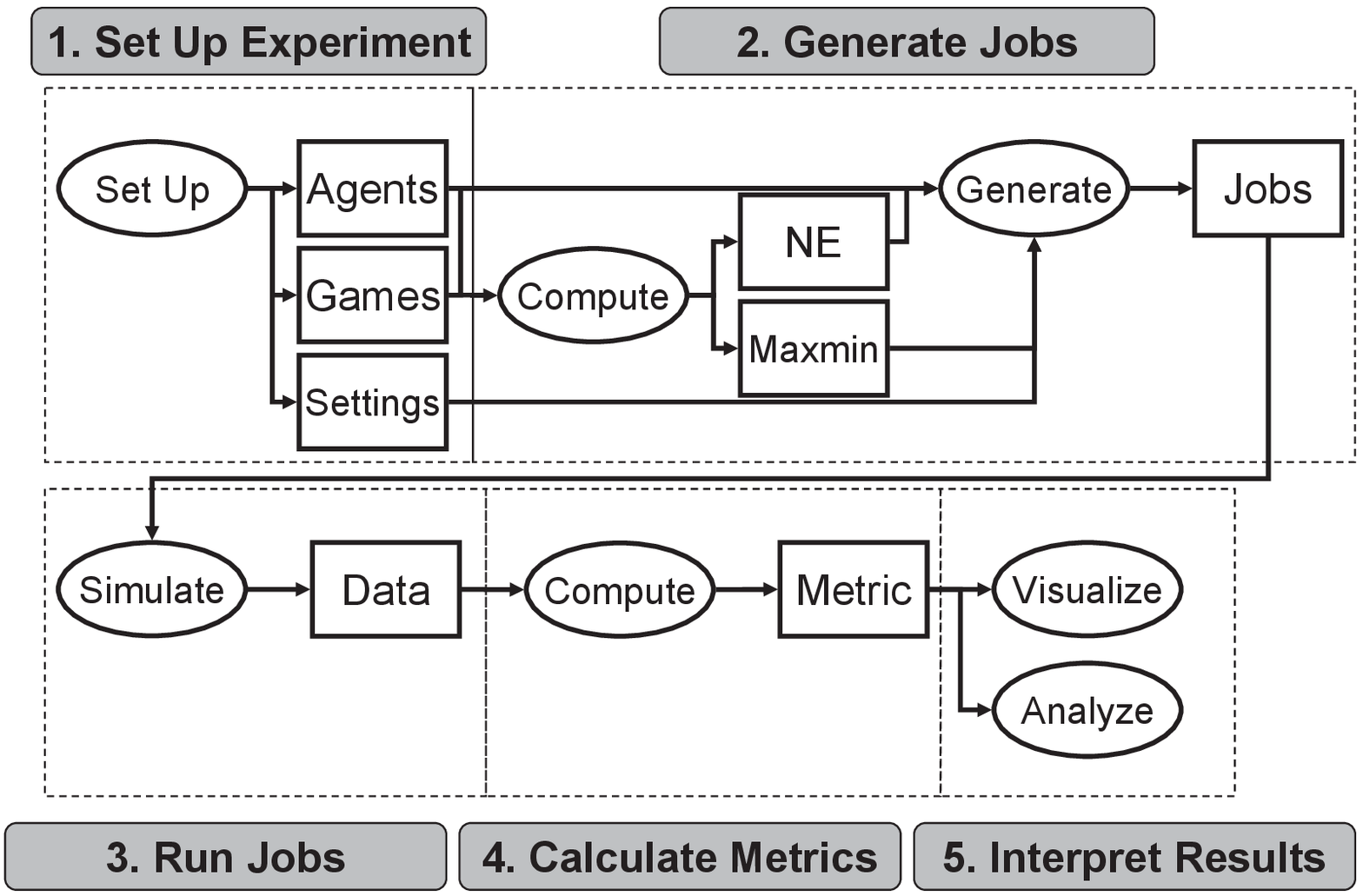}
	\caption{The five steps for running and analyzing an experiment using MALT.}
	\label{fig:platform}
\end{zfigure}

In this section we give an overview of the structure of the platform. The five steps for running an experiment with the platform are summarized in Figure~\ref{fig:platform}. There are three major components to this platform: the configuration GUI, the experiment engine (the piece that simulates the repeated games) and the visualization GUI. We describe each in turn.

The first step is to set up the experiment. First, a group of algorithms must be picked and algorithm parameters set. Second, a set of GAMUT game distributions must be selected and parameters for these games chosen. Third, general experimental parameters must be established, such as the number of iterations for each simulation. These decisions are encoded in human-readable text files, and can either be generated using a provided GUI or using batch scripts.

The second step is to generate a job file for each desired match. Each job file references the agent, game, equilibrium, and maxmin-strategy files. These files are referenced, \note{what does that mean?} \note{EPZ: It means that if an algorithm uses a parameter $\alpha$ then the value is stored in a single referenced file, rather than coping $\alpha$'s value into each of the $m$ jobfiles that use that particular algorithm. This means that if we want to adjust $\alpha$ after creating the jobfiles than it's a single change, rather than $m$ changes.} making altering the job files simple even after they have been generated.

The third step is to run the jobs. This primarily involves running the MALT ``engine''; however, \note{I've moved this to step 3. Please move it as necessary.}\note{EPZ: Are you asking me to change the engine to make this true?}
MALT calls GAMBIT's \cite{mckelvey04} implementation of Lemke-Howson \cite{lemke64} when an algorithm needs to find the set of Nash equilibria for a game instance, and CPLEX when an algorithm needs a maxmin strategy. Jobs may be run in several ways. The most basic is to run them in a batch job on a single machine. However, for large experiments this can be prohibitively expensive. Because each job is independent, it is straightforward to use a compute cluster. To facilitate such parallelization, each job creates an individual data file upon completion that records the history of play. For each recorded iteration and for each agent in the pair, the strategy, sampled action, reward received, and beliefs about the opponents are recorded.

Step four is to compute performance metrics based on these data files. A plain-text file specifies the metrics to be calculated, based on an extensible library of available metrics. As above, metrics can be computed in a batch or can be distributed across a cluster.

Finally, step five is to analyze and visualize these results. To make this task easier, MALT includes some basic analysis tools and a visualization GUI.

\subsection{Algorithm Implementations}
\label{sec:algorithms}

To carry out this study, we selected and implemented eleven MAL algorithms, most of which we discussed previously in \S\ref{sec:related_work}. In cases where reference code was available, we performed extensive validation experiments to ensure that our implementation was correct.

\subsubsection{\FP}

Parameters for \textbf{\fp} are given in Table~\ref{tab:fictitious_parameters}. We note that the initial action frequencies were set to one for each action, which is a uniform and easily overwhelmed prior. Actions were selected from non-singleton best-response sets by favoring an action that was played in the previous iteration if present, and selecting uniformly at random otherwise.

\begin{kfigure}[t]
\begin{minipage}[t]{0.45\linewidth}
\begin{table}[H]
\begin{tabular}{ll}
		\hline
		\textbf{Design Decision~~~~~~~} 						&	 \textbf{Setting~~~~~~~~~~~~~~}\\
		\hline
			BR Tie-Breaking 		&	Previous action if still BR\\
													&	Uniform otherwise\\
			Initial Beliefs			&	Unit virtual action count\\
			\hline
		\end{tabular}
		\caption{Design decisions for \agent{fictitious play}}
    \label{tab:fictitious_parameters}
\end{table}
\end{minipage}
\hfill
\begin{minipage}[t]{0.45\linewidth}
\begin{table}[H]
		\begin{tabular}{ll}
				\hline
		\textbf{Design Decision~~~~~~~} 						&	 \textbf{Setting~~~~~~~~~~~~~~}\\
		\hline
			NE Tie-Breaking 		&	Highest opponent utility\\
			\hline
		\end{tabular}
		\caption{Design decisions for \agent{determined}}
    \label{tab:determined_parameters}
\end{table}
\end{minipage}
\end{kfigure}

\subsubsection{\Determined}

Our implementation of \textbf{\determined} (see Table~\ref{tab:determined_parameters}) repeatedly plays the Nash equilibrium that obtains the highest personal reward, but if there are multiple equilibria with the same protagonist reward, then the equilibrium with the highest opponent reward is selected. If there are any equilibria that are still tied we use the one found first by GAMBIT's implementation of Lemke-Howson.

\subsubsection{\awesome}

\textbf{\awesome} is implemented according to the pseudo-code in \inlinecite{conitzer07}, and largely uses the parameter settings given there; see Table~\ref{tab:awesome_parameters}. Previous work on \awesome\ assumed that the games that it encountered had a single Nash equilibrium. However, the games that we examine in this paper may have multiple equilibrium, so we needed to decide how our implementation of \awesome\ would single out a special equilibrium for convergence in self-play. In our implementation the special equilibrium is the first equilibrium found by GAMBIT's implementation of Lemke-Howson. It would be interesting to compare our implementa\-tion of \awesome\ to one that used the more computationally-expensive approach of picking, say, a socially optimal equilibrium.\footnote{In our validation experiments we observed a small but statistically significant difference between the behavior of our implementation of \agent{AWESOME} and the original implementation from \inlinecite{conitzer07}. (The original implementation was in C and MALT 2.0 is written in Java, so the original implementation could not be used directly.) Specifically, a test involving ten different game instances and 100 runs against the random agent showed a significant difference between solution quality distributions on three instances. We used a two-sample Kolmogorov-Smirnov independence test (see \S\ref{sec:ks_test}) with $\alpha = 0.05$ to check for significance. For these three game instances, our implementation probabilistically dominated (see \S\ref{sec:prob_dom}) the original implementation in terms of reward (i.e., every reward quantile was higher for our implementation). We were not able to track down the source of this behavior difference; however, we spent a considerable amount of time verifying our implementation against the pseudo-code in the paper and were unable to find any difference.}

\begin{kfigure}[t]
\begin{minipage}[t]{0.45\linewidth}
\begin{table}[H]
		\begin{tabular}{ll}
				\hline
		\textbf{Design Decision~~~~~~~} 						&	 \textbf{Setting~~~~~~~~~~~~~~}\\
		\hline
						Special Equilibrium($\pi^{\ast}_p$)					& First found\\
						Epoch period ($N(t)$)												& $\left\lceil \frac{\left|A\right|_{\Sigma}}{\left(1 - \frac{1}{2^{t^{-2}}}\right)\left(\epsilon_{e}^{t}\right)^2} \right\rceil$		 \\
						Equilibrium threshold ($\epsilon_e(t)$)			& $\frac{1}{t + 2}$\\
						Stationarity threshold ($\epsilon_s(t)$) 		& $\frac{1}{t + 1}$\\
					\hline
		\end{tabular}
		\caption{Design decisions for \awesome}
    \label{tab:awesome_parameters}
\end{table}
\end{minipage}
\hfill
\begin{minipage}[t]{0.45\linewidth}
\begin{table}[H]
\begin{tabular}{ll}
		\hline
		\textbf{Design Decision~~~~~~~} 						&	 \textbf{Setting~~~~~~~~~~~~~~}\\
		\hline
						Security threshold ($\epsilon_0$)						&	$0.01$	\\
						Bully threshold ($\epsilon_1$)							&	$0.01$	\\
						``Generous'' BR parameter  ($\epsilon_2$)		&	$0.005$	\\
						Stationarity threshold ($\epsilon_3$)				&	$0.025$	\\
						Coordination/exploration period ($\tau_0$)	&	$90~000$	\\
						Initial period ($\tau_1$)										&	 $10~000$	 \\
						Secondary period ($\tau_2$)									&	$80~000$	 \\
						Security check period ($\tau_3$)						&	$1~000$	\\
						Switching probability ($p$)									&	$0.00005$	 \\
						Window ($H$)																 &	 $1~000$	\\
						$\left\|\cdot\right\|$											& $\ell_2$ \\
			\hline
		\end{tabular}
		\caption{Design decisions for \meta}
    \label{tab:meta_parameters}
\end{table}
\end{minipage}
\end{kfigure}

\subsubsection{\Meta}

\textbf{\Meta} is implemented according to the pseudo-code in \inlinecite{powers04}. The \inlinecite{powers04} implementation of \meta\ used a distance measure based on the Hoeffding Inequality, even though the pseudo-code called for using an $\ell_2$ norm. We follow the pseudo-code and use the $\ell_2$ norm. We do not adjust the default threshold level ($\epsilon_3$) for distance, leaving it at the original value. All parameters for \meta\ are summarized in Table~\ref{tab:meta_parameters}.

\subsubsection{Gradient Algorithms}

Our implementation of \textbf{\giga} follows the original pseudo-code and uses the learning rate and step size schedules from the original experiments by \inlinecite{bowling04} as defaults; see Table~\ref{tab:giga_parameters}. We note, however, that these step sizes were set for drawing smooth trajectories and may not necessarily yield strong performance, and furthermore that the original experiments for \giga\ involved more iterations than we simulated ($10^6$ as compared to $10^5$).
For \giga's retraction map operation (the function that maps an arbitrary vector in $\Re^n$ to the closest probability vector in terms of $\ell_2$ distance) we used an algorithm based on the method described in \inlinecite{govindan03}. \giga\ has two variants: in one it assumes that it can counterfactually determine the reward for playing an arbitrary action in the previous iteration, and in the other it only knows the reward for the the action that it played and has to approximate the rewards for the other actions. We implemented the latter approach, as all of \giga's experimental results are produced by this version. The formula for the approximation is given by
\begin{equation}
	\forall \dot{a} \in A_i \; \hat{r}^{(t + 1)}_{\dot{a}} = (1 - \alpha)r^{(t)}\mathbb{I}_{\dot{a} = a^{(t)}} + \alpha(\hat{r}^{(t)}_{\dot{a}}).
	\label{eq:r_update}
\end{equation}
In this equation, $r^{(t)}$ is the reward that the algorithm experienced while playing action $a^{(t)}$ in iteration $t$. The vector $\hat{r}^{(t)}$ is an $|A_i|$-dimensional vector that reflects the algorithm's beliefs about rewards.

\begin{kfigure}[t]
\begin{minipage}[t]{0.45\linewidth}
\begin{table}[H]		
\begin{tabular}{lll}
		\hline
		\textbf{Design Decision~~~~~~~~~~~~~~~~~} 		&	\textbf{Setting~~~~~~~~~~~~~~}\\
		\hline
		Learning rate ($\alpha(t)$) 			&	$\frac{1}{\sqrt{\frac{t}{10} + 100}}$ \\
		Step size ($\eta(t)$)							& $\frac{1}{\sqrt{10^4t + 10^8}}$ \\
			\hline
		\end{tabular}
		\caption{Design decisions for \giga.}
    \label{tab:giga_parameters}
\end{table}
\end{minipage}
\hfill
\begin{minipage}[t]{0.45\linewidth}
\begin{table}[H]		
\begin{tabular}{lll}
		\hline
		\textbf{Design Decision~~~~~~~} 		&	\textbf{Setting~~~~~~~~~~~~~~}\\
		\hline
		Learning rate ($\alpha(t)$) 			&	$\frac{1}{\sqrt{\frac{t}{10} + 100}}$ \\
		Step size ($\eta(t)$)							& $\frac{1}{\sqrt{10^4t + 10^8}}$ \\
		Noise Weight ($\lambda(t)$)				& $\frac{1}{\sqrt{10^5t + 10^8}}$ \\							 
			\hline
		\end{tabular}
		\caption{Design decisions for \gsa.}
    \label{tab:gsa_parameters}
\end{table}
\end{minipage}
\end{kfigure}

We also tested the Global Stochastic Approximation algorithm, \textbf{\gsa}, of \inlinecite{spall03}; see Table~\ref{tab:gsa_parameters}. To our knowledge we were the first to suggest its use in a MAL setting \cite{lipson05}. This algorithm is a stochastic optimization method that resembles \agent{GIGA}, but takes a noisy, rather than deterministic, step. The \agent{GSA} strategy is updated as
\begin{equation}
	x^{(t+1)} = P(x^{(t)} + \eta^{(t)}r^{(t)} + \lambda^{(t)}\zeta^{(t)}),
	\label{eq:gsa}
\end{equation}
where $x_t$ is the previous mixed strategy, $r_t$ is the reward vector, $\zeta_t$ is a vector where each component is sampled from the standard normal distribution (with variance controlled by the parameter $\lambda^{(t)}$), and $P(\cdot)$ is the same retraction function used for \giga.

\textbf{\rvs} is a implementation of the algorithm given in \inlinecite{banerjee06}. Some initial experiments showed that the settings of the algorithm used in the paper performed very poorly, and so we used some hand-picked parameter settings that were more aggressive and seemed to perform better. These are given in Table~\ref{tab:rv_parameters}.

\begin{kfigure}[t]
\begin{minipage}[t]{0.45\linewidth}
  \begin{table}[H]		
\begin{tabular}{lll}
		\hline
		\textbf{Design Decision~~~~~~~~~~~~~~~~~} 		&	\textbf{Setting~~~~~~~~~~~~~~}\\
		\hline						
		$\sigma$-schedule ($\sigma(t)$) 	&	$\frac{1}{1 + \frac{1}{25}\sqrt{t}}$ \\
		Step size ($\eta(t)$)							& $\frac{1}{ \sqrt{1000t + 10^5}}$ \\
			\hline
		\end{tabular}
		\caption{Design decisions for \rvs.}
    \label{tab:rv_parameters}
\end{table}
\end{minipage}
\hfill
\begin{minipage}[t]{0.45\linewidth}
  \begin{table}[H]		
\begin{tabular}{lll}
		\hline
		\textbf{Design Decision~~~~~~~} 						&	 \textbf{Setting~~~~~~~~~~~~~~}\\
		\hline
		Learning rate	($\alpha(t)$) 					&	$\left(1 - \frac{1}{2000} \right)^t$ \\
		Exploration rate ($\epsilon(t)$)			& $\frac{1}{5}\left(1 - \frac{1}{500} \right)^t$ \\
		Future discount factor ($\gamma$)			& $0.9$\\
		\hline
		\end{tabular}
		\caption{Design decisions for \q.}
    \label{tab:q_parameters}
\end{table}
\end{minipage}
\end{kfigure}

\subsubsection{\Q}

Our implementation  of \textbf{\q} is very basic; see Table~\ref{tab:q_parameters}. Since in a repeated game there is only one `state', \q\ essentially keeps track of $Q$-values for each of its actions. We use an $\epsilon$-greedy exploration policy (perform a random action with probability $\epsilon$) with a decaying $\epsilon$. $400$ exploration steps are expected for this $\epsilon$-schedule, and $\epsilon$ drops below a probability of $0.05$ at approximately iteration $2800$. It is negligible at the end of the settling-in period (less than $3E{-9}$). The learning rate ($\alpha$) decays to $0.01$ at the end of the settling in period. The discount factor of $\gamma = 0.9$ was set rather arbitrarily. There is no need to trade off current reward with future reward: all actions take the algorithm back to the same state.

\subsubsection{\Minimax\ and \Minimaxidr}

For \textbf{\minimax}, we solved a linear program to find the mixed maxmin strategy based on the $Q$-values. This program was
\[\begin{array}{lll}
\textrm{\textbf{Maximize}}		&   U_1 																													 & \\
\textrm{\textbf{subject to}}	&  	\sum_{j \in A_1}{u_1(a_1^j, a_2^k) \cdot \sigma_1^j} \geq U_1 & \forall k \in A_2\\
															&		\sum{\sigma_1^j} = 1 																					 & \\
															&		\sigma_1^j \geq 0 																						 & \forall j \in A_1
\end{array}\]
(see, for example, \inlinecite{MASfoundations08}). We also considered a variant of \minimax\ in which iterative domination removal (IDR) is used as a preprocessing step. To our knowledge, we were the first to propose this algorithm in \inlinecite{lipson05}; we dubbed it \minimaxidr. In each step of the iterative IDR algorithm mixed-strategy domination is checked using a linear program (see, for example, \inlinecite{MASfoundations08}). Both LPs are solved with CPLEX 10.1.1. For both \minimax\ and \minimaxidr, the learning rate, exploration rate, and future discount factor were set as in \q; see Table~\ref{tab:q_parameters}.

\subsubsection{\Random}

The final algorithm, \textbf{\random}, is an simple baseline that uniformly mixes over the available actions. Specifically, it submits a mixed strategy $\sigma$ where $\forall a \in A, \; \sigma(a) = \frac{1}{\left|A\right|}$.

\section{Experimental Setup and Statistical Methods}
\label{sec:experiment}
As described in the preamble, this paper makes two main contributions. The first is the MALT platform, which we have now explained. The second is a demonstration of what MALT can do. Specifically, we conducted an large-scale experiment with the goal of investigating the empirical relationship between average reward and other performance metrics (e.g., equilibrium convergence; regret) that have been considered in the literature. In this section we describe the setup of this experiment and some of the statistical tools we used in our analysis.

We studied all eleven of the algorithms described in \S\ref{sec:algorithms}, and set their parameters as described there. We note in passing that this choice was important, as some algorithms are very sensitive to parameter settings. Nevertheless, we considered the issue of parameter optimization to be beyond the scope of our study, and took parameter settings largely from the literature as given. 

We selected thirteen game generators from the GAMUT game collection; these are summarized in Table~\ref{tab:generators}. Details of each generator are available in GAMUT's online documentation; see \url{gamut.stanford.edu}. We normalized the rewards of all game instances to the $[0,1]$ interval in order to make the results more interpretable and comparable.
We generated a total of $600$ different game instances. Specifically, we generated games of five different sizes: $2 \times 2$, $4 \times 4$, $6 \times 6$, $8 \times 8$ and $10 \times 10$. For each size, we generated 100 game instances, drawing uniformly from the first twelve generators. We drew an additional 100 instances from the last distribution, D13, which spans all strategically distinct $2 \times 2$ games \cite{rapoport22}. We call the distribution induced by mixing over all 13 GAMUT generators the \textit{grand distribution}.

\begin{table}[t]
    \begin{tabular}{ll}        \hline
     D1 & A Game With Normal Covariant Random  Payoffs  \\
     D2 & Bertrand Oligopoly      \\
     D3 & Cournot Duopoly        \\
     D4 & Dispersion Game       \\
     D5 & Grab the Dollar             \\
     D6 & Guess Two Thirds of the Average   \\
     D7 & Majority Voting               \\
     D8 & Minimum Effort Game          \\
     D9 & Random  Symmetric Action Graph Game   \\
    D10 & Travelers Dilemma      \\
    D11 & Two Player Arms Race Game    \\
    D12 & War of Attrition     \\
    D13 & Two By Two Games     \\
    		\hline
    \end{tabular}
    \caption{The number and name of each game generator.}
    \label{tab:generators}
\end{table}

With eleven algorithms and $600$ game instances there were $11 \times 11 \times 600 = 72~600$ matches. We ran each match once\footnote{We note that each match could have been run multiple times instead of just once, and indeed that doing so would have been essential if we wanted to understand the behavior of randomized algorithms in individual matches. However, holding CPU time constant, conducting more runs per match would have meant either experimenting with fewer games or with fewer algorithms.
Indeed, we show in Appendix~A that not stratifying (holding one experimental variable fixed while varying another; as opposed to varying both) on game instances reduces variance for sample estimates of summary statistics like mean and median. Thus, we ran each match only once, and therefore use the terms `run' and `PSM' interchangeably in what follows.} for $100~000$ iterations, recording the last $10~000$ iterations. This generated $143\,GB$ of data and took about a third of a CPU-year to run. 
In order to interpret the results we relied upon a variety of different empirical methods. We briefly describe some of them below.

\subsection{Bootstrapping}
\label{sec:boots}

If we conduct an experiment where two algorithms are run on a number of PSMs then a natural way to compare their performance is to compare the sample means of some measure of their performance (average reward, for example). However, if we have the conclusion that `the sample mean of algorithm $A$ is higher than the sample mean of algorithm $B$', how robust is this claim? If we ran this experiment again are we confident  that it would support the same conclusion?

A good way to check the results of an experiment is to run it multiple times. For example, imagine that we ran an experiment $100$ times and found that $95$ of the experiments had a sample mean for algorithm $A$ of between $[\underline{a},\, \overline{a}]$, and that $95$ of the experiments had a sample mean for algorithm $B$ of between $[\underline{b},\, \overline{b}]$. If $\underline{a} > \overline{b}$ (the lower bound of $A$'s interval was greater than the upper bound of $B$'s) then we can be confident that $A$ is significantly better in terms of mean. (Specifically, these intervals are the $95\%$ percent confidence intervals of the sample mean distribution, and the fact that they do not overlap serves as sufficient evidence that there is a significant relationship between the means.)

While such repeated experimentation can be used to ensure that results are significant, it is also expensive. To verify the summary statistics from one experiment, we had to run many more. This is not always possible (e.g., our experiments took $7$ days on a large computer cluster, so to rerun them a hundred more times would have taken the better part of two years). Bootstrapping is a technique that allows us to use the data from a \emph{single} experiment to construct confidence intervals of summary statistics. Given an experiment with $m$ data points, we can `virtually' rerun the the experiment by subsampling from the empirical distribution defined by those $m$ points. For example, if we have a sample with $100$ data points, we could subsample $50$ data points (with replacement) from these $100$ and look at the statistic for this subsample. We can cheaply repeat this procedure as many times as we like, creating a distribution for each estimated statistic. From these bootstrapped estimator distributions we can form bootstrapped confidence intervals and check for overlap. 

There are two parameters that control the bootstrapped distribution: we form the distribution by subsampling $l$ points from the original $m$, and we repeat this process $k$ times. For our analysis we chose $l$ to be $\left\lfloor  m/2 \right\rceil$ and $k$ to be around $2~500$. These particular parameters were chosen to ensure that there would be diversity among the subsamples (this explains the moderate size of $l$) and that the empirical distributions would be relatively smooth (this explains the large $k$).

\subsection{Kolmogorov-Smirnov Test}
\label{sec:ks_test}

While bootstrapping is useful for seeing if summary statistics are significantly different, we will also want to check if two distributions are themselves significantly different. A beta distribution and a Gaussian distribution might coincidentally have the same mean, but are nevertheless different distributions. We use the Kol\-mo\-gorov\-Smir\-nov (KS) 
test for determining whether two distributions are different. This test is non\-parametric, meaning that it does not assume that the underlying data is drawn from a known (e.g., normal) probability distribution. The KS test works by examining the maximum vertical distance between two CDFs \note{Asher: Does this need to be spelled out?}. Two distributions are considered significantly different if this maximum vertical distance exceeds a given significance level, $\alpha$. In our analysis we use the standard $\alpha = 0.05$ unless otherwise noted.

\subsection{Spearman's Rank Correlation Test}
\label{sec:spearman}

Spearman's rank correlation test is a way to establish whether or not there is a significant monotonic relationship between two paired variables. For example, we might want to show that there is some significant monotonic relationship between the size of a game's action set size and an agent's average reward. Like the KS test, the Spearman's rank correlation test is non-parametric. The relationship between the two variables can be positive (high values of one variable are correlated with high values of the other variable) or negative (high values of one variable are correlated with low values of the other). \note{What does this last sentence mean? Shouldn't it describe what it means to pass or fail the test?} \note{EPZ: Correlation and significance are different. When we run a test, we get a $p$-value and a $\rho$-value. If the $p < 0.05$, then the results is significant. Separately, we have the coefficient of correlation $\rho$ which tells use how strong the correlation is---this is related to the $p$-value but not identical to it---and whether it is positive or negative correlation. For instance, if $x_1 = x_2 + \mathcal{N}(0,\sigma)$ then they are positively correlated and if $x_1 = -x_2 + \mathcal{N}(0,\sigma)$ then they are negatively correlated. $\vert\rho\vert$ and $p$ depends on both $\sigma$ and the number of data points.}

\subsection{Assessing Convergence}
\label{sec:threshold}

We are interested in studying the convergence behavior of MAL algorithms. One issue in doing so based on empirical data is dealing with runs that appear ``not quite'' to have converged because of random fluctuations in the empirical action frequency. A natural solution to this problem is to perform a statistical test to determine whether one part of the run exhibits the same action distribution as a later part. For example, we might check whether a later empirical action distribution was drawn from the same distribution as an earlier sample (establishing that  empirical mixed strategies were stationary) or that an empirical action distribution profile was drawn from a given mixed-strategy profile (establishing convergence to a Nash equilibrium).

Two obvious candidates for such a test are the Fisher exact test (FET) and Pearson's $\chi^2$-test, which can be used for checking whether two multinomial samples are drawn from a distribution. However, each test was unfortunately inappropriate for our problem. The $\chi^2$ test does not handle situations where some of the actions are rare or not present. The FET is very computationally expensive, and the implementation of it that we used \cite{r06} failed on some of the larger and more balanced action vectors (typically in the $10 \times 10$ case).

Instead, we used the incomplete set of FET results to calibrate a threshold based on vector distance, where we considered any two action distribution vectors \note{Asher: Is this vector specifically the distribution over actions?} that were closer than the threshold $\theta$ to be the same. We calibrated $\theta$ using a receiver operating characteristic (ROC) curve. We used the incomplete FET results as ground truth, and plotted the change in true positive rate and false positive rate as we varied $\theta$. We picked the threshold that led to an equal number of false positives and false negatives. Based on this ROC analysis, we picked a $\theta$ of $0.02$.

\subsection{Probabilistic Domination}
\label{sec:prob_dom}

%
%
%
The concept of probabilistic domination can be used to argue that one distribution should be preferred to another in terms of a given performance metric. Specifically, a solution quality distribution (SQD) $A$ dominates another SQD $B$ if $\forall q \in [0,1]$, the $q$-quantile of $A$ is higher than the $q$-quantile of $B$. If there are two algorithms, $A$ and $B$, that are trying to maximize reward, and $A$'s SQD probabilistically dominates $B$ then regardless of the reward value $r$, there are more runs of $A$ than of $B$ that attain a reward of at least $r$.
Probabilistic domination is stronger than a claim about the mean of the distributions: domination implies higher means.

Checking for probabilistic domination between two samples can be performed visually. If one of the CDF curves is below the other curve everywhere, than the former dominates the latter. Intuitively, this is because the better SQD has less probability mass on low solution qualities, and more mass on higher solution qualities; better distributions are right-shifted.

\section{Empirical Evaluation of MAL Algorithms: Average Reward}
\label{sec:result}
As we discussed at the beginning of this paper, we consider average reward to be the most fundamental metric for assessing the performance of a MAL algorithm. We take the average with respect to the sampled actions rather than the submitted mixed strategy. Formally, where the iterations $1$ to $T$ refer to the $10~000$ iterations we recorded, we define the average reward an algorithm $i$ obtains in a single match as $\bar{r}_i^{(T)} = \frac{1}{T}\sum_{t=1}^{T}r_i^{(t)}.$

In this section, we investigate the average reward metric in detail. We begin in \S\ref{sec:reward} by comparing algorithms according to their ``raw'' average reward, averaging not only over iterations but also across both generators and opponents \note{Asher: Also averaged over number of actions?} \note{EPZ: Don't understand: are you asking if we are dividing by the size of the action set, or the number of iterations?}. Next, we investigate each of these dimensions separately. In \S\ref{sec:block-generators} we explore algorithm performance across different generators, and also examine the effect of game size. In \S\ref{sec:block-opponent} we explore algorithm performance across different opponents, and also analyze the equilibria of the ``algorithm game'', in which available actions are different choices of MAL algorithms. \S \ref{sec:reward-probdom} investigates probabilistic domination relationships between different algorithms and \S\ref{sec:selfplay} considers each algorithm's performance in self play. Finally, \S \ref{sec:alg-similarity} explores similarities between different algorithms.

\subsection{``Raw'' Average Reward}
\label{sec:reward}

First we consider each algorithm's ``raw'' average performance, averaged across the number of iterations, games and opponents. \note{Asher: Ditto for previous note on averaging over the number of actions as well?}

\begin{ob}
\Q\ and \rvs\ attained the highest rewards on the grand distribution.
\end{ob}

\begin{zfigure}[t]
\begin{minipage}[t]{0.45\linewidth}
	\includegraphics[width=1\textwidth]{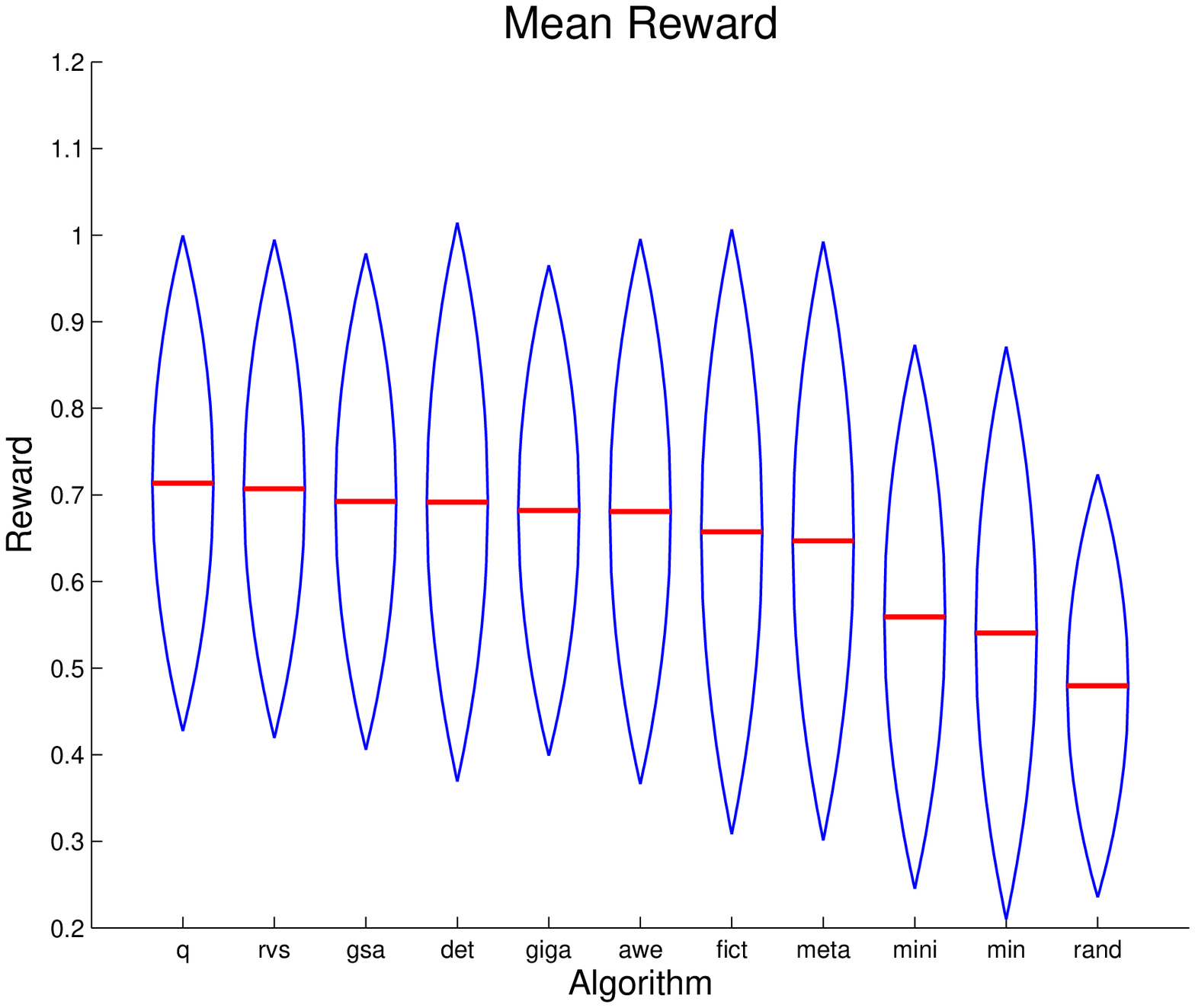}
	\caption[Lens plot for reward]{A plot that shows the mean reward (bar) for each algorithm and one standard deviation in either direction (the size of the lens).}
	\label{fig:reward_mvplot}
\end{minipage}
\hfill
\begin{minipage}[t]{0.45\linewidth}
	\includegraphics[width=1\textwidth]{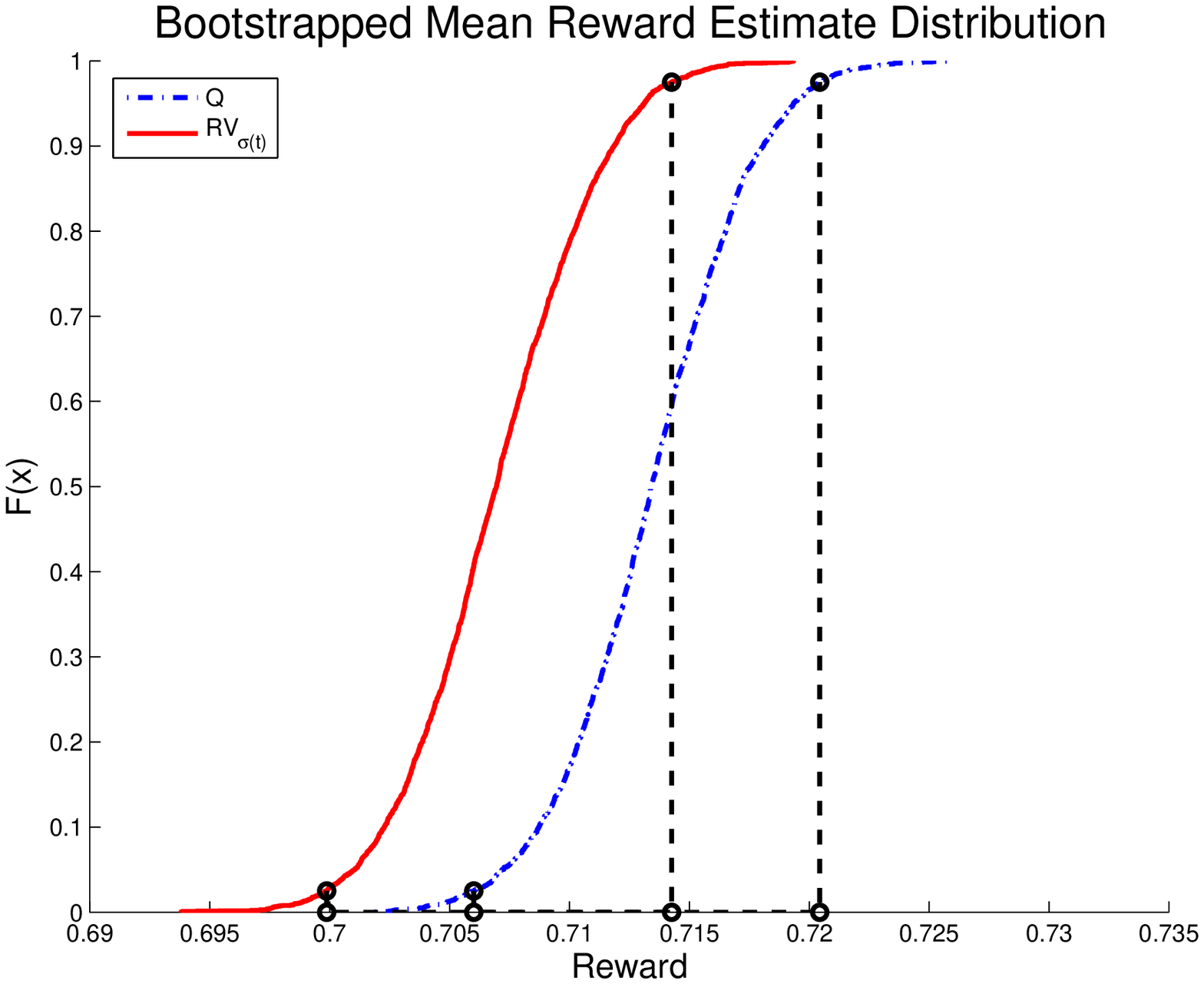}
	\caption[Bootstrapped mean estimator distributions for \q\ and \rvs]{The distribution of mean reward estimates for \q\ and \rvs, constructed by bootstrapping. The $95\%$ confidence intervals are indicated by the dark circles and dashed-lines.}
		\label{fig:bootstrap_mean_reward}
\end{minipage}
\end{zfigure}

\q\ had the highest mean reward at $0.714$, although \rvs\ was close with an average of $0.710$ (see Figure~\ref{fig:reward_mvplot}). We noticed considerable variation within the reward data, and all of the other algorithms' sample means still were within one standard deviation of \q, including \random\ (which obtained a sample mean of $0.480$).

These rankings were not all significant. The slight difference in means between \q\ and \rvs\ does not in fact indicate that \q\ was a better algorithm (in terms of means) on the grand distribution of games and opponents. However, these two algorithms attained significantly higher reward than any other algorithm. We determined this by examining the $95\%$ percentile intervals on bootstrapped mean estimator distributions (see \S\ref{sec:boots}) and seeing which intervals overlapped (see Figure~\ref{fig:bootstrap_mean_reward}). We obtained the distributions by subsampling $2~500$ times, where each subsample consisted of $6~600$ runs (half as many as the $13~200$ runs that each algorithm participated in).

The distribution of reward was not symmetric, and specifically tended to exhibit negative skewness, indicating that the proportion of runs that attained high reward was larger than the proportion of runs that attained low reward. (\Random\ was the only exception.) \q's distribution had the highest skewness, $-0.720$, so was the best algorithm in this respect. \note{Asher: Should this have an extra sentence or two saying whether this is interesting, or is this put in as an example, without comment?} \note{EPZ: added a fragment to nail down that having a high negative skew is ``good'', and that \q\ was the most skewed.}

\subsection{Per-Generator Average Reward and the Effect of Game Size}
\label{sec:block-generators}

\begin{zfigure}[t]
\begin{minipage}[t]{0.45\linewidth}
			\includegraphics[width=1\textwidth]{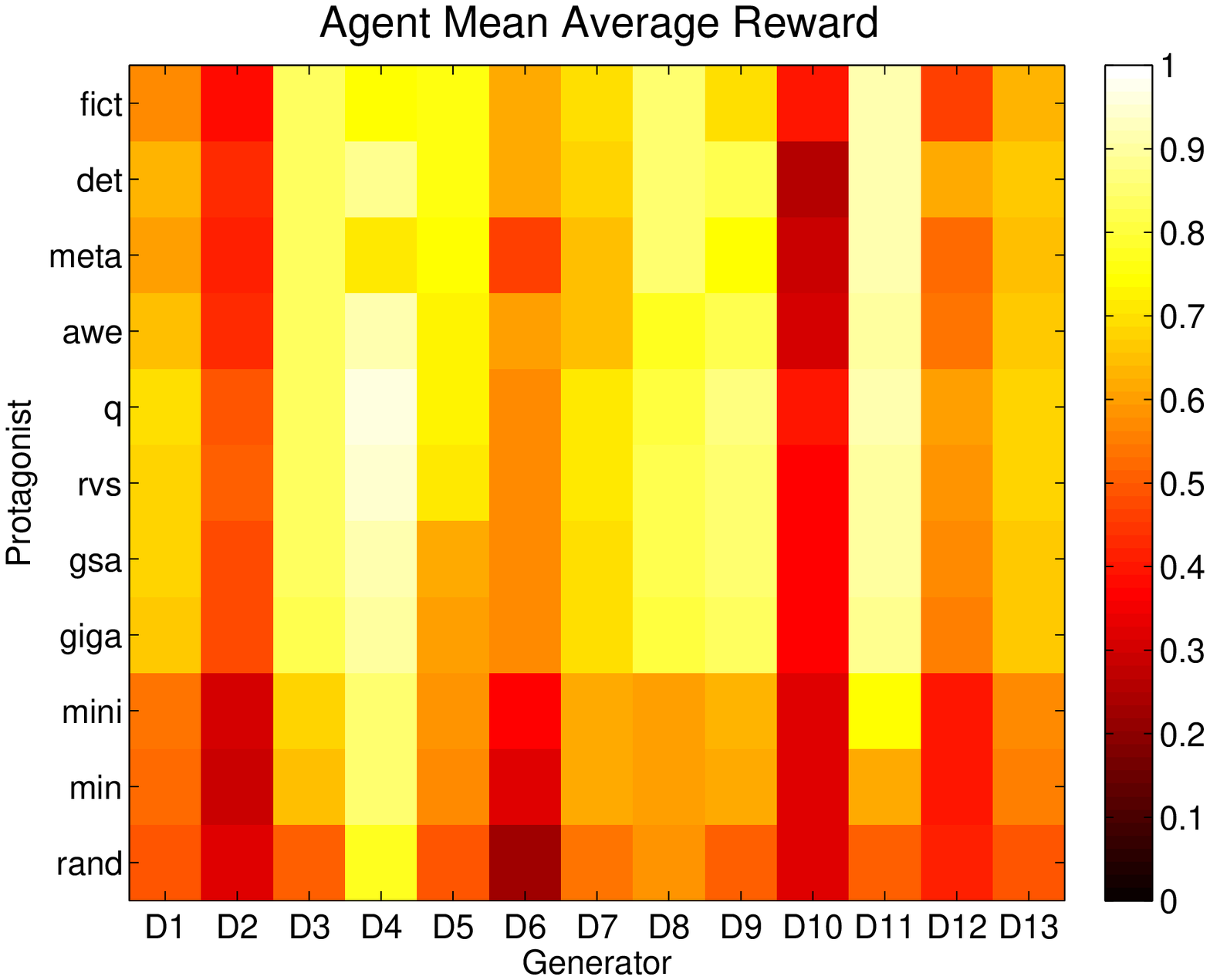}
	\caption[Algorithm-generator heatmap for reward.]{A heatmap showing the reward for the protagonist algorithm playing PSMs from a particular generator, averaged over both iterations and PSMs.}
		\label{fig:ag_heatmap_utility_mean_grand}
\end{minipage}
\hfill
\begin{minipage}[t]{0.45\linewidth}
		\includegraphics[width=1\textwidth]{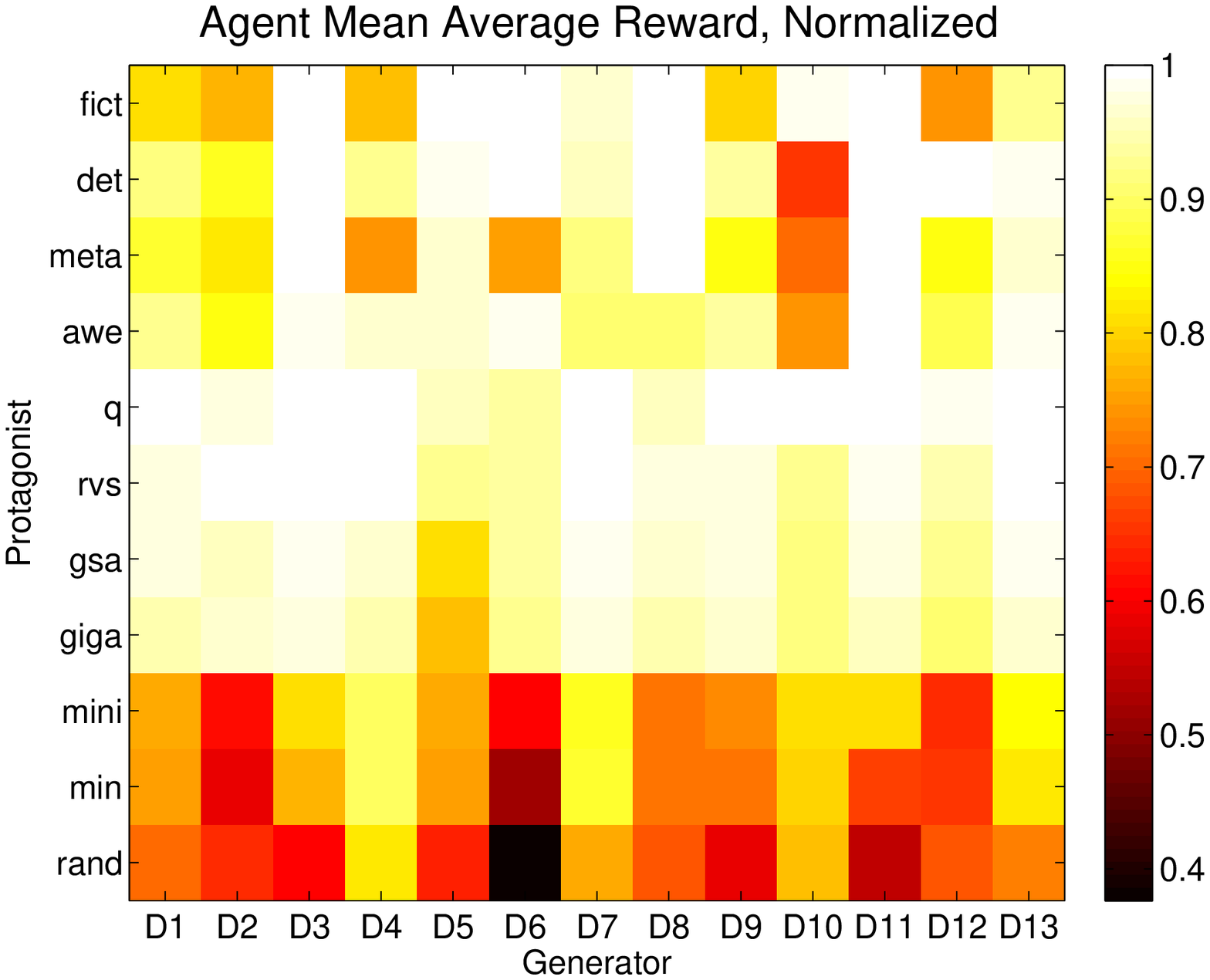}
	\caption[Normalized algorithm-generator heatmap for reward.]{A heatmap showing the mean reward for the protagonist algorithm, playing against the opposing algorithm. These cells have been normalized. Each column has been divided by the maximum average reward attained by any algorithm on that particular generator.}
		\label{fig:ag_heatmap_utility_mean_grand_normalized}
\end{minipage}
\end{zfigure}

\note{Asher: I like these heatmaps a lot - a very nice way of visualising everything} Now we go beyond performance on the grand distribution. First we consider each algorithm's performance across individual game distributions. 
%
As can be seen in Figure~\ref{fig:ag_heatmap_utility_mean_grand}, every algorithm's performance varies considerably across the different game generators. However, this figure makes it difficult to determine the best algorithm for generators that all algorithms found challenging. Thus, we also present a normalized version of these per-generator reward results, obtained by dividing the results for each algorithm on a particular generator by the maximum reward attained by any algorithm (Figure~\ref{fig:ag_heatmap_utility_mean_grand_normalized}). We can see that \minimax, \minimaxidr\ and \random\ were all worse than the other algorithms across a broad range of generators, and \q\ and \rvs\ tended to do well.

\begin{ob}
\Q\ was the best or one of the best algorithms to use for most generators.
\end{ob}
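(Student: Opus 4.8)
This is an empirical claim, so the ``proof'' consists of extracting the relevant summary from the per-generator reward data and certifying it statistically. The plan is to work from the algorithm-by-generator reward matrix already displayed in Figure~\ref{fig:ag_heatmap_utility_mean_grand}: for each of the thirteen generators $\mathrm{D}1,\dots,\mathrm{D}13$ and each of the eleven algorithms, take the mean of $\bar r_i^{(T)}$ over all recorded iterations, all game instances drawn from that generator, and all eleven opponents (respecting the ordered-pairing convention of \S\ref{sec:platform}, so that row and column roles are both included in the average). First I would, for every generator, identify the algorithm attaining the top entry in that column and record the gap between \q's entry and that top entry; the normalized heatmap of Figure~\ref{fig:ag_heatmap_utility_mean_grand_normalized}, whose columns are exactly these columns divided by their maxima, already exhibits these gaps and shows \q\ sitting at or very near $1$ in most columns.

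The substantive step is turning ``at or near the top'' into a defensible count. For each generator I would bootstrap the per-generator mean-reward estimator for \q\ and for that generator's apparent best algorithm, subsampling runs restricted to instances of that generator (with $l \approx \lfloor m/2\rceil$ subsampled points and $k\approx 2\,500$ repetitions, as in \S\ref{sec:boots}), and form $95\%$ percentile intervals. A generator then counts in \q's favor if either (i) \q\ attains the column maximum, or (ii) \q's interval overlaps the best algorithm's interval, so that \q\ cannot be statistically separated from the best. Tallying these, the claim ``best or one of the best for most generators'' reduces to verifying that the count is at least $7$ of $13$; in fact one expects the count to be considerably larger, since Figures~\ref{fig:ag_heatmap_utility_mean_grand} and~\ref{fig:ag_heatmap_utility_mean_grand_normalized} show \minimax, \minimaxidr, and \random\ trailing broadly while \q\ and \rvs\ stay high across the board, consistent with the raw-reward finding of \S\ref{sec:reward}. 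As a robustness check I would cross-validate with the KS test of \S\ref{sec:ks_test} comparing \q's per-generator reward SQD against the best algorithm's, and with probabilistic-domination inspection (\S\ref{sec:prob_dom}) of the corresponding CDFs, to confirm that the mean-based ranking is not an artifact of skew.

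The main obstacle is the one already flagged for the raw-reward analysis: within-generator variance is large and each per-generator sample is only a fraction of the full $13\,200$ runs per algorithm, so the bootstrapped intervals will be wide and ``one of the best'' must be read as ``not significantly worse than the best'' rather than as a sharp ranking. A secondary difficulty is that a few generators (plausibly the ones where nearly every algorithm clusters near a common reward level, such as some of the oligopoly or voting games) admit no meaningful separation at all, so they contribute to the count only weakly; and there may be an isolated generator on which \q\ is genuinely beaten (for instance a dispersion- or coordination-flavored game, where \q's lack of opponent modeling hurts), which is precisely why the statement is hedged with ``most'' rather than ``all.'' Handling these cases honestly — reporting the count, the exceptions, and the width of the intervals — is what makes the observation defensible.
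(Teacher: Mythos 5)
Your proposal matches the paper's own analysis essentially exactly: the paper defines the per-generator ``set of best algorithms'' as those whose bootstrapped mean-estimator $95\%$ percentile intervals overlap with the algorithm having the best sample mean, and then counts that \q\ was the unique best (D1, D4, D9) or in the best set (D2, D3, D7, D10, D11, D12, D13) for $10$ of the $13$ generators, with D5, D6, and D8 as the exceptions. Your added KS/probabilistic-domination robustness checks and your guess about which generators would be exceptions are extra, but the core method and conclusion are the same.
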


%

We define the set of best algorithms for a generator as the set of algorithms whose bootstrapped mean estimator $95\%$ percentile intervals overlapped with the algorithm with the best sample mean.
\Q\ was the unique best algorithm or was one of the best algorithms for $10$ of our $13$ generators (see Table~\ref{tab:reward_generator_best}). It was the only algorithm that was the unique best choice for any generator, taking this role for generators D1, D4, and D9. Furthermore, \q\ also belonged to the set of best algorithms for generators D2, D3, D7, D10, D11, D12 and D13. While \q\ most frequently was a member of a generator's best algorithm set, \fp\ and \determined\ were also frequently in these sets ($6$ and $7$ generators respectively).

\begin{zfigure}[t]
\begin{minipage}{0.46\linewidth}
\begin{table}[H]
		\begin{tabular}{lp{.92\linewidth}}
		\textbf{Gen} & \textbf{Set of Best Algorithms}\\\hline
		D1	&	\q\\
		D2	&	\q, \rvs\\
		D3	&	\awesome, \determined, \fp, \gsa, \meta, \q, \rvs\\
		D4	&	\q\\
		D5	&	\determined, \fp\\
		D6	&	\awesome, \determined, \fp\\
		D7	&	\gsa, \q, \rvs\\
		D8	&	\determined, \fp, \meta\\
		D9	&	\q\\
		D10	&	\fp, \q\\
		D11	&	\determined, \fp, \meta, \q\\
		D12	&	\determined, \q\\
		D13	&	\awesome, \determined, \gsa, \q, \rvs\\			
		\end{tabular}
		\caption{The set of best algorithms for each generator.}
				\label{tab:reward_generator_best}
\end{table}
\end{minipage}
\hfill
\begin{minipage}{0.45\linewidth}
		\includegraphics[width=1\textwidth]{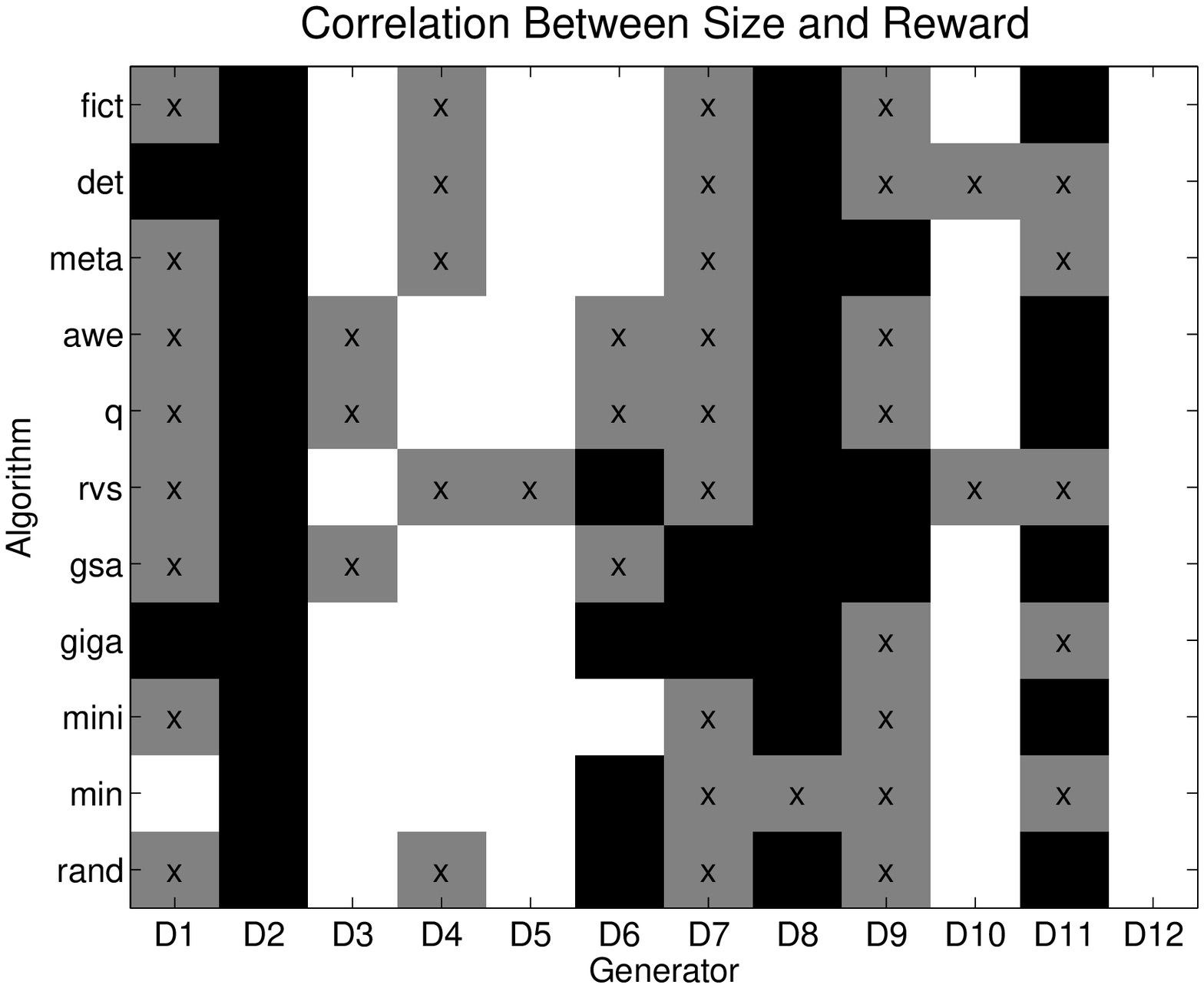}
	\caption[Correlation between size and reward]{A heatmap summarizing the correlations between size and reward for different agents on different generators. A white cell indicates positive correlation, a black cell indicates negative correlation, and a gray cell with an `x' indicated an insignificant result.}
		\label{fig:ag_heatmap_spearman_utility_size}
\end{minipage}
\end{zfigure}

The gradient algorithms were especially strong on D7; indeed, this was the only generator for which all three gradient algorithms were in the best algorithm set. 
D5, D6, and D8 were interesting distributions for \awesome\ and \meta. In D5, neither \awesome\ nor \meta\ managed to be one of the best algorithms despite the fact that both \fp\ and \determined---two of the algorithms that they manage---were. In D6, \awesome\ joined \fp\ and \determined\ but \meta\ did not, and in D8 the reverse happened: \meta, \fp\ and \determined\ were the three best algorithms. These three generators illustrate situations where portfolio algorithms failed to capitalize on one of their managed algorithms. It would be interesting to run further experiments to determine why this occurred and if it could be remedied. \note{Asher: It is interesting that Giga-WoLF, as a gradient descent algorithm similarish to  RV and GSA, does not appear to do as well, especially as per Table XI tab: reward generator best}

%



For all but one of our generators (D13: $2 \times 2$ games) we generated games of varying sizes. Now we consider how the size of a game's action set affected performance. Our hypothesis was that larger action spaces raise the possibility of more complicated game dynamics, and that such complex dynamics can slow learning. Thus, we expected to see average reward decreasing as the size of the game grew. 

\begin{ob}
There was no general relationship between game size and reward: for some generators, algorithms achieved higher rewards on larger games, and for other generators algorithms achieved higher rewards on smaller games.
\end{ob}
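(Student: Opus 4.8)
The plan is to treat this as an empirical claim to be established by direct measurement rather than a mathematical deduction, mirroring the methodology used elsewhere in the paper. First I would, for each of the twelve multi-size generators (D1--D12) and each of the eleven algorithms, collect the paired observations $(n, \bar{r}_i^{(T)})$, where $n \in \{2,4,6,8,10\}$ is the action-set size and $\bar{r}_i^{(T)}$ is the protagonist's average recorded reward on each PSM of that size drawn from that generator (averaging over opponents as in \S\ref{sec:reward}, or, as a robustness check, holding the opponent class fixed). This yields on the order of a hundred reward values per (algorithm, generator) cell, spread across the five size levels, which is the raw material for a per-cell trend test.

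Second, to each such cell I would apply Spearman's rank correlation test (\S\ref{sec:spearman}), obtaining a coefficient $\rho$ and a $p$-value, and classify the cell as positive if $p < 0.05$ and $\rho > 0$, negative if $p < 0.05$ and $\rho < 0$, and insignificant otherwise --- exactly the white / black / gray-``x'' coding of Figure~\ref{fig:ag_heatmap_spearman_utility_size}. The observation then follows by exhibiting at least one generator whose column contains a significantly positive cell and at least one generator whose column contains a significantly negative cell, and, more strongly, by noting that no single sign dominates across all generators, so no monotone ``bigger games are uniformly harder'' (or uniformly easier) rule holds. Since the original hypothesis was a uniform negative trend (reward decreasing in size), it suffices to falsify that hypothesis, which even a single significantly-positive column does.

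The main obstacle I anticipate is not the computation but the inference: with only five distinct size levels and roughly ten instances per level per generator, individual Spearman tests have limited power, so many cells will fall in the insignificant bucket, and the large number of simultaneous tests raises a multiple-comparisons concern. I would address this by (i) presenting the heatmap as a descriptive summary rather than claiming every cell is individually reliable; (ii) emphasizing that the conclusion is a \emph{negative} one --- the absence of a general relationship --- and is therefore strengthened, not weakened, by the coexistence of significant correlations of both signs alongside many insignificant ones; and (iii) optionally corroborating the sign pattern with bootstrapped confidence intervals (\S\ref{sec:boots}) on per-size mean reward within a generator, checking whether the size-$2$ and size-$10$ intervals are separated and in which direction. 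A secondary subtlety worth flagging is that rewards are normalized to $[0,1]$ instance-by-instance, so the scaling cannot itself manufacture a spurious size trend; I would state this explicitly so the reader does not attribute the effect (or its absence) to the normalization.
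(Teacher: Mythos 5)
Your proposal matches the paper's own methodology essentially exactly: the paper applies Spearman's rank correlation test (at $\alpha = 0.05$) to each algorithm--generator pair, codes the results as positive, negative, or insignificant as in Figure~\ref{fig:ag_heatmap_spearman_utility_size}, and concludes the observation by noting that significant correlations were entirely negative for some generators (e.g., D2, D7, D8, D9, D11) and entirely positive for others (e.g., D3, D4, D5, D10, D12), with many cells insignificant. Your additional caveats about test power and normalization are reasonable but do not change the argument.
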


Our experiment showed that this intuition did not always hold. First, for many algorithms on many generators we could not reject the null hypothesis of the Spearman rank correlation test---that there was no significant correlation between size and performance---at a significance level of $\alpha = 0.05$. For instance, in D7 only \gsa\ and \giga\ had significant trends (both exhibited negative correlation; reward was lower in larger games).
Second, even when a significant correlation did exist, it was not always negative. We did observe that for most distributions, significant correlations were either entirely negative or entirely positive. For D2, D7, D8, D9, and D11 the correlation was negative; for D3, D4, D5, D10, and D12 it was positive. D1 and D6 exhibited both kinds of correlation for different algorithms. \note{Asher: Would it not be better to mention with the majority for D1, D7, D9 as being insignificant correlation?}
\note{EPZ: Not sure that I follow: I already called out that many of the distributions did not have many significant relationships...}

Overall, the relationship between game size and reward appears to depend strongly on the choice of generator. It could be the case that when the action spaces increase in size, important game features tied with high reward become more common, or it could be that larger action spaces make it easier for MAL algorithms to miscoordinate, which is desirable for some games. Indeed, D4---\zgame{Dispersion Games}--- show positive correlation between the number of actions and reward, and this is a game where agents need to miscoordinate to do well.

As Figure~\ref{fig:ag_heatmap_spearman_utility_size} shows, D2 and D12 were the only two distributions on which we could reject the null hypothesis for all algorithms, and they supported opposite conclusions. On instances from D2, correlation was completely and strongly negative: the larger the game, the worse every algorithm performed. The least correlated algorithm was \random\ with a Spearman's coefficient of correlation $\rho = -0.329$. Correlation was entirely positive for D11, but some of the coefficients were smaller. \FP\ was the least sensitive to size ($\rho = 0.07$), but it was anomalous. The algorithm with the next smallest coefficient was \giga, with $\rho = 0.267$.

\subsection{Per-Opponent Average Reward and the Algorithm Game}
\label{sec:block-opponent}

We now consider each algorithm's average reward on a per-opponent basis. 

\begin{ob}
Al\-gor\-ithm per\-for\-mance de\-pended sub\-stant\-ially on which op\-po\-nent was played.
\end{ob}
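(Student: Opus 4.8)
This ``Observation'' is not a mathematical theorem but an empirical claim about the experimental data, so the plan is not to derive it from axioms but to establish it statistically from the recorded runs. First I would fix the average-reward metric $\bar{r}_i^{(T)}$ defined in \S\ref{sec:result}, and for each ordered pairing $(i,j)$ of algorithms form the per-opponent mean reward of the protagonist $i$ against opponent $j$, averaged over all $600$ game instances (equivalently, over all PSMs with protagonist $i$ and opponent $j$). This produces, for each protagonist $i$, a vector of eleven numbers indexed by the opponent $j$; the claim is exactly that this vector is far from constant for (essentially) every $i$.

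Next I would quantify ``depended substantially'' in two complementary ways. The first is a simple spread statistic: for each protagonist $i$, report $\max_j \bar r_{i,j} - \min_j \bar r_{i,j}$ and compare it to the within-cell standard deviation, so that the between-opponent variation dwarfs the noise. The second, and the one I would lean on for rigor, is the bootstrap machinery of \S\ref{sec:boots}: for each protagonist $i$ and each opponent $j$, resample the $600$ PSMs with replacement $k\approx 2\,500$ times, subsampling $l=\lfloor 600/2\rceil$ points each time, to obtain a $95\%$ bootstrapped confidence interval for $\bar r_{i,j}$. The claim is then made precise as: for every protagonist $i$ there exist opponents $j, j'$ whose confidence intervals for $\bar r_{i,j}$ and $\bar r_{i,j'}$ do not overlap. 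A heatmap of the $11\times 11$ matrix of $\bar r_{i,j}$ values (analogous to Figures~\ref{fig:ag_heatmap_utility_mean_grand} and \ref{fig:ag_heatmap_utility_mean_grand_normalized}, but blocked on opponent rather than generator) gives the visual summary, with the non-overlapping-interval check supplying the significance.

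The key steps in order are therefore: (i) extract $\bar r_i^{(T)}$ from every recorded run; (ii) aggregate into the $11\times 11$ protagonist-by-opponent mean matrix and the companion within-cell variance matrix; (iii) run the per-cell bootstrap to get confidence intervals; (iv) for each row, identify the best and worst opponents and verify the intervals separate; (v) optionally corroborate with a Spearman or rank-based check (\S\ref{sec:spearman}) that the ordering of opponents is itself stable. As a sanity anchor one expects, e.g., that most protagonists earn near the top of their range against \random\ and substantially less against \determined\ or against a copy of themselves in self play, since \determined\ stubbornly plays its own favorite equilibrium regardless of the protagonist.

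The main obstacle is the same one flagged in the footnote in \S\ref{sec:experiment}: each match was run only once, so the per-cell ``sample'' is an average over heterogeneous game instances drawn from thirteen very different generators rather than over repeated trials of a fixed match. The bootstrap over the $600$ instances is thus estimating variability due to the game distribution, not due to an algorithm's internal randomization, and one must argue (as Appendix~A does) that not stratifying on instances is acceptable and that the resulting intervals are still meaningful for the population-level claim. A secondary subtlety is multiple comparisons: with $11$ rows and up to $\binom{11}{2}$ pairwise interval checks per row, I would either restrict attention to the single widest gap in each row or apply a correction so that the family-wise error rate stays controlled at the stated $\alpha = 0.05$.
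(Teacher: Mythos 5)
Your proposal is correct and uses essentially the paper's own machinery: the paper supports this observation with exactly the protagonist-by-opponent matrix of mean rewards you describe, displayed as a heatmap (Figure~\ref{fig:aa_heatmap_utility_grand}), together with the bootstrapped $95\%$ percentile-interval method of \S\ref{sec:boots} to decide which differences in means are significant. The one genuine difference is the direction in which the significance analysis is organized. You test row-wise: for each protagonist, locate the best and worst opponents and check that their bootstrapped intervals separate---the most literal quantification of ``performance depended on the opponent.'' The paper instead works column-wise, constructing for each \emph{opponent} a ``best-response set'' of protagonists whose intervals overlap that of the best sample mean (Table~\ref{tab:reward_best_response}), and then extends this into the ``algorithm game'' analysis (domination and pure-strategy Nash equilibria over $10~000$ subsampled games); the observation itself rests mainly on the heatmap and qualitative commentary (e.g., \minimax, \minimaxidr\ and \random\ weak against a broad range of opponents, \fp\ and \determined\ notably worse in self play and against each other). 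Your formulation buys a sharper, directly tested version of the stated observation; the paper's buys an answer to the practical follow-up of which algorithm to field against a known opponent. Your caveats are also apt: the paper likewise treats variability across game instances (not internal randomization) as the relevant noise, per the single-run-per-match design and Appendix~A, and it does not apply any family-wise multiple-comparison correction, so your proposed restriction to the widest per-row gap or an explicit correction would if anything be more conservative than what the paper does.
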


\begin{zfigure}[t]
\begin{minipage}[t]{0.45\linewidth}
		\includegraphics[width=1\textwidth]{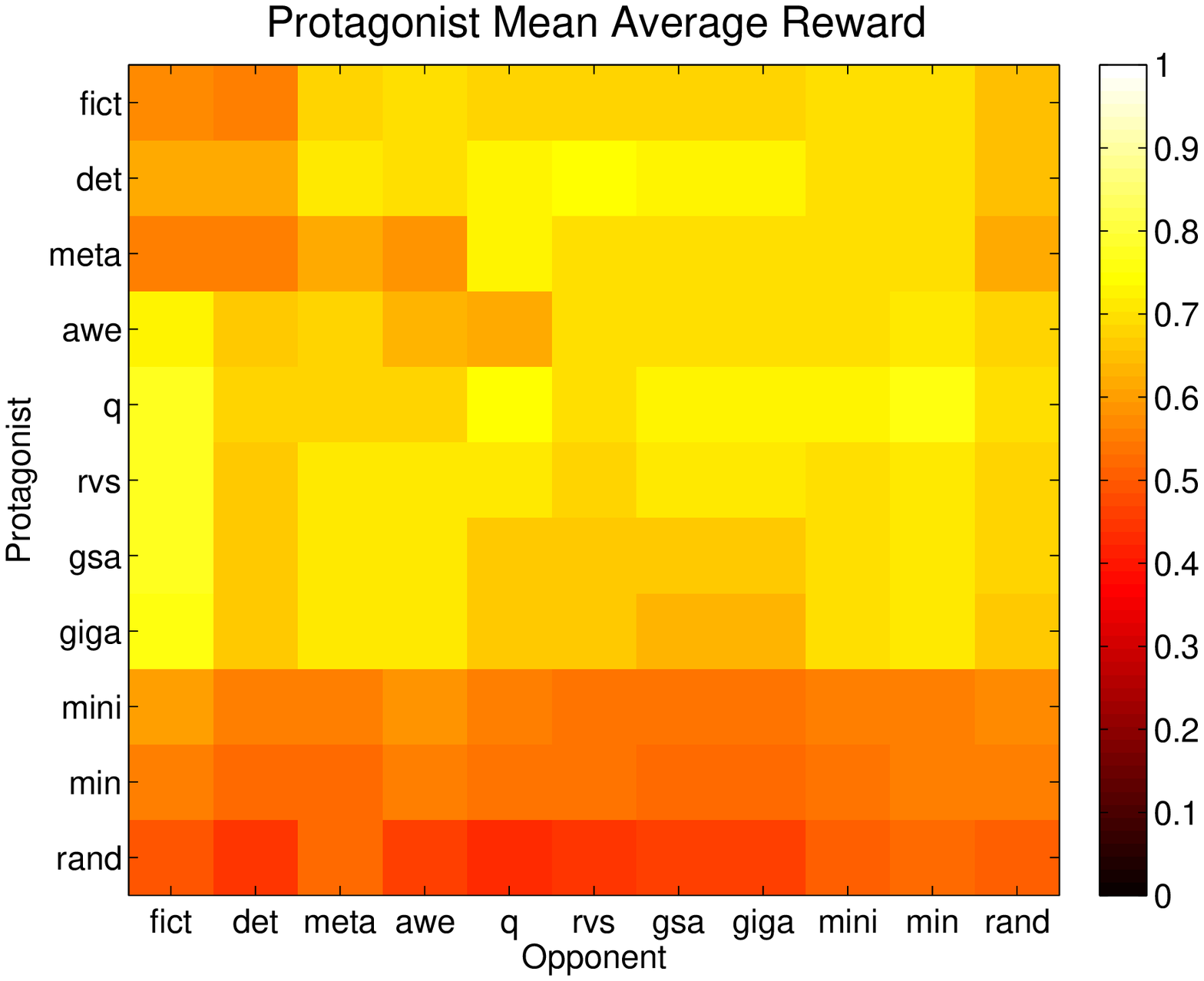}
	\caption[A heatmap showing the mean reward for each algorithm]{A heatmap showing the mean reward for each protagonist algorithm (ordinate) playing against each opposing algorithm (abscissa).}
		\label{fig:aa_heatmap_utility_grand}
\end{minipage}
\hfill
\begin{minipage}[t]{0.45\linewidth}
	\includegraphics[width=1\textwidth]{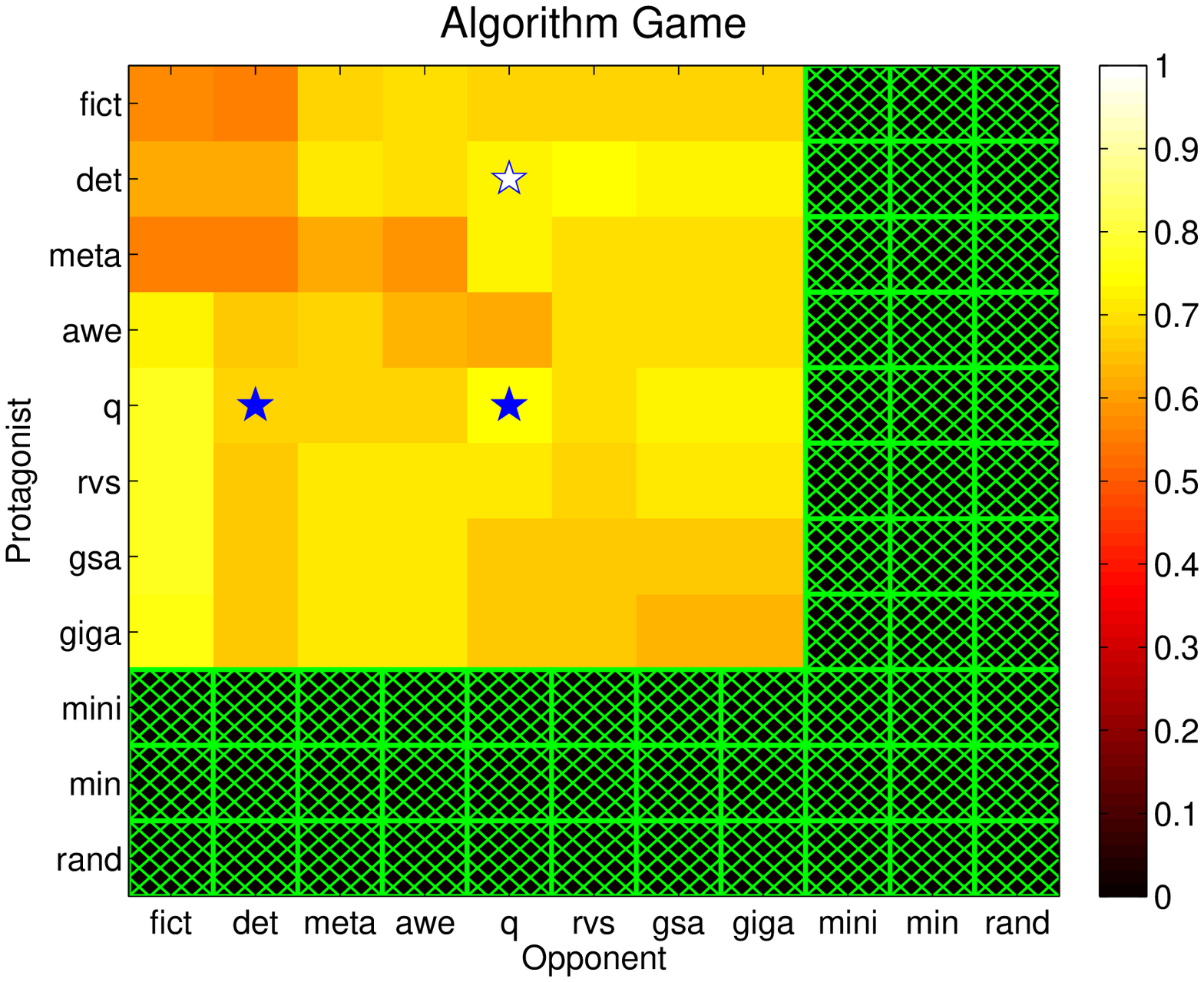}
	\caption[The mean reward algorithm game]{Interpreting the mean reward results as a one-shot game. The cells that are cross-hatched are dominated and the `$\star$'s indicate pure-strategy Nash equilibria. Because the \determined{} vs. \q\ equilibrium is asymmetric, it appears twice. To indicate this, we make one of the corresponding stars hollow.}
		\label{fig:reward_algorithm_game}
\end{minipage}
\end{zfigure}

Figure~\ref{fig:aa_heatmap_utility_grand} shows the mean reward achieved by each algorithm against every possible opponent. One striking fea\-ture of this fig\-ure is that \minimax, \minimaxidr\ and \random\ were all relatively weak against a broad range of opponents. We also observe that \fp\ and \determined\ tended to get lower reward in self play and against each other than against other opponents. \Meta---an algorithm that manages a profile of algorithms including \fp\ and \determined---also appear to have inherited these performances issues, while \awesome---the other portfolio algorithm---substantially avoided them.

If we know what algorithm the opponent is using, which algorithm should we use? We constructed ``best-response sets'' for each possible opponent using bootstrapped percentile intervals. We call the algorithm with the highest mean against a particular opponent a best response, but also assign any algorithm with an overlapping bootstrapped $95\%$ percentile interval to the set---we cannot claim that these algorithms do significantly worse than the apparent best algorithm. These best response sets are summarized in Table~\ref{tab:reward_best_response}. \q\ and \rvs\ were most frequently best responses, while \fp, \meta, \minimax, \minimax\ and \random\ were never best responses.

\begin{kfigure}[t]
\begin{minipage}[t]{0.55\linewidth}
\begin{table}[H]
		\begin{tabular}{p{.3\linewidth}p{.6\linewidth}}
			\textbf{Opponent} & \textbf{Best-Response Set}\\\hline
			\awesome & \giga, \gsa\, \rvs\\
			\Determined & \awesome, \giga, \gsa,\\
			 & \q\, \rvs\\
			\FP & \gsa, \q\, \rvs\\
			\giga & \determined, \q\, \rvs\\
			\gsa & \determined, \q\, \rvs\\
			\Meta & \determined, \giga, \gsa\, \rvs\\
			\Minimax & \q\\
			\Minimaxidr & \q\\
			\Q & \determined, \q\, \rvs\\
			\Random & \determined, \q\, \rvs\\
			\rvs & \determined\\
		\end{tabular}
		\caption{The different algorithms and their best-response sets}
				\label{tab:reward_best_response}
\end{table}
\end{minipage}
\hfill
\begin{minipage}[t]{.4\linewidth}
\begin{table}[H]
		\begin{tabular}{lll}
			\textbf{Algorithm} & \textbf{SD} & \textbf{WD}\\\hline
			\awesome 		& $10.8\%$ 	& $11.7\%$\\
			\Determined & $0.0\%$	 	& $0.0\%$\\
			\FP 				& $35.9\%$ 	& $36.4\%$\\
			\giga 			& $54.1\%$ 	& $55.1\%$\\
			\gsa 				& $0.4\%$		& $0.4\%$\\
			\Meta 			& $28.8\%$ 	& $28.2\%$\\
			\Minimax 		& $100.0\%$ & $100.0\%$\\
			\Minimaxidr & $100.0\%$ & $100.0\%$\\
			\Q 					& $0.0\%$ 	& $0.0\%$\\
			\Random 		& $100.0\%$ & $100.0\%$\\
			\rvs 				& $0.0\%$ 	& $0.0\%$\\
		\end{tabular}
		\caption{The proportion of subsampled algorithm games in which each algorithm was strictly dominated (SD) or weakly dominated (WD).}
				\label{tab:domination}
\end{table}
\end{minipage}
\end{kfigure}

One interesting way to interpret these best response results is to consider the one-shot ``algorithm game'': a single-shot normal-form game in which the actions correspond to our 11 algorithms and the payoff for using algorithm $A$ against algorithm $B$ is the mean reward that algorithm $A$ attained against $B$. There were three algorithms that were strictly dominated in this grand distribution algorithm game: \minimax, \minimaxidr\ and \random. Strict domination of algorithm $A'$ by $A$ means that regardless of what algorithm the opponent selects, $A$ is always a better choice than $A'$.
As with best responses, we required domination to be significant: we wanted to be confident that if the experiment were repeated, we would get a similar result. We used bootstrapping to check this, subsampling $6~600$ PSMs $10~000$ times and from these forming $10~000$ `subsampled' games. We checked for strict domination in each game, and considered an algorithm dominated if it was dominated in at least $95\%$ of the subsampled games. The proportion of subsampled algorithm games in which each algorithm was dominated is shown in Table~\ref{tab:domination}; we also distinguish strict domination from weak domination.

\begin{ob}
\Determined\ and \q\ were the only algorithms to participate in pure-strategy Nash equilibria of the algorithm game.
\end{ob}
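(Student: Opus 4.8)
The plan is to verify this by a finite computation on the empirical ``algorithm game'' whose payoffs are the data underlying Figure~\ref{fig:aa_heatmap_utility_grand}. Write $M_{A,B}$ for the mean reward that protagonist $A$ obtained against opponent $B$, averaged over all $600$ instances; the row player picks $A$ and receives $M_{A,B}$, the column player picks $B$ and receives $M_{B,A}$, so the two players share the same payoff function. A pure-strategy profile $(A,B)$ is then a Nash equilibrium exactly when $M_{A,B} = \max_{A'} M_{A',B}$ and $M_{B,A} = \max_{B'} M_{B',A}$; equivalently, letting $\beta(X)$ denote the algorithm attaining the maximum of column $X$ --- a single well-defined value, since these empirical means exhibit no exact ties --- the profile $(A,B)$ is a pure equilibrium iff $A = \beta(B)$ and $B = \beta(A)$. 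Hence the pure equilibria are precisely the symmetric profiles $(A,A)$ with $A$ a fixed point of $\beta$, together with the pairs of profiles $(A,B)$, $(B,A)$ for which $\{A,B\}$ is a $2$-cycle of $\beta$.

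First I would discard the three strictly dominated algorithms --- \minimax, \minimaxidr, and \random\ (Table~\ref{tab:domination}) --- since a strictly dominated algorithm is never a pure best response to anything, so it can never equal $\beta(X)$ for any $X$, and hence never appears in any equilibrium, pure or mixed; in particular $\beta$ restricts to a self-map on the eight surviving algorithms. It then remains to read off, from the matrix of mean rewards, the column maximizer for each of those eight algorithms and to determine the fixed points and cycles of $\beta$. The computation shows that $\beta$ has no fixed point among the eight, and that its only cycle is the pair \{\determined, \q\}, with $\beta$ sending \determined\ to \q\ and \q\ to \determined; every other algorithm is eventually carried into this cycle under iteration of $\beta$. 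Consequently the only pure-strategy equilibria of the algorithm game are the two orientations of the asymmetric \determined--\q\ profile, recorded by the two stars in Figure~\ref{fig:reward_algorithm_game}, and the set of algorithms that participate in some pure equilibrium is exactly \{\determined, \q\}.

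Two points deserve care. First, the argument presumes there are no exact ties among the relevant column maxima; because these are empirical means over $600$ instances, ties are non-generic, but one should still confirm numerically that in each column --- in particular the columns for \determined\ and \q{} --- the maximizing algorithm is strictly ahead of the runner-up. Second, the claim concerns the point-estimate game, not the bootstrapped best-response \emph{sets} of Table~\ref{tab:reward_best_response}, which are deliberately widened by statistical ties; for instance \rvs\ lies in the best-response set of \determined\ yet is not the point-estimate best response to \determined, so the profile (\rvs, \determined) is not an equilibrium in the present sense. To be sure the finding is not an artifact of sampling noise I would re-run the bootstrap from the domination analysis --- resample $6\,600$ PSMs, rebuild the algorithm game, recompute its pure equilibria --- and check that \determined\ and \q\ remain the only participating algorithms in the large majority of subsampled games. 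I expect this robustness check, rather than the enumeration over the $11\times 11$ matrix, to be the only real obstacle.
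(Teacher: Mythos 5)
There is a genuine gap: the factual outcome you assert for your ``finite computation'' contradicts the paper's own data. You claim that the best-response map $\beta$ has no fixed point, so that the only pure equilibria are the two orientations of the asymmetric \determined--\q\ profile, and you read the stars in Figure~\ref{fig:reward_algorithm_game} as recording exactly that pair. The paper reports the opposite emphasis: the \emph{symmetric} \q--\q\ profile (i.e., \q\ as a best response to itself) was a pure-strategy Nash equilibrium of the algorithm game, and indeed it is the equilibrium the paper calls ``particularly convincing,'' occurring in $90.2\%$ of the bootstrapped subsampled games, while \q--\determined\ occurred only in the remaining $9.8\%$. So your asserted fixed-point/cycle structure is wrong, and your closing expectation that the \determined--\q\ pair would dominate the robustness check gets the proportions backwards. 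The observation itself (only \determined\ and \q\ participate) survives, but only because both of the equilibria the paper actually found happen to involve these two algorithms --- not because the symmetric equilibrium is absent.

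There is also a methodological mismatch worth noting. The paper does not establish this observation from the single point-estimate matrix with a ``no exact ties'' argument; it establishes it by resampling: $6~600$ PSMs were subsampled $10~000$ times, pure-strategy equilibria were enumerated in each subsampled algorithm game, and it turned out that the only pure equilibria that \emph{ever} occurred were \q--\q\ and \q--\determined\ (they never co-occurred in the same subsample, since both hinge on the best response to \q). Your proposal inverts this, treating the point-estimate enumeration as primary and the bootstrap as an afterthought; that is defensible as an alternative reading of the claim, but it is weaker (a hairline difference in one column could flip the point-estimate equilibrium) and it is not what the paper's observation rests on. Your preliminary step of discarding \minimax, \minimaxidr\ and \random\ via Table~\ref{tab:domination} is sound, and your caution about the widened best-response sets of Table~\ref{tab:reward_best_response} (e.g., \rvs\ against \determined) is well taken; the error is solely in the claimed result of the enumeration and in which equilibrium carries the statistical weight.
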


Only two pure-strategy Nash equilibria ever occurred in the subsampled games for the grand distribution: \q\ in self play, and \q\ against \determined.  The \q--\q{} equilibrium is particularly convincing because it is symmetric and so does not require that the players coordinate to playing different strategies, and furthermore because it occurred in $90.2\%$ of the subsampled games. The other equilibrium occurred in the remaining $9.8\%$ of games. (Because both equilibria involved \q, we did not observe them together in the same subsampled games.)

We looked more deeply into the algorithm games by restricting attention to individual generators. The generators varied substantially in their pure-strategy Nash equilibria. Overall, \Determined\ in self play constituted the most common symmetric pure-strategy Nash equilibrium. It was a significant Nash equilibrium for seven of the generators. (That is, \determined\ in self play was a pure-strategy Nash equilibrium in more than $95\%$ of the subsampled games for each of these generators.) \Q\ in self play was the second most common symmetric pure-strategy Nash equilibrium, arising in the algorithm games for four generators.

Generators also differed substantially in their \emph{number} of pure-strategy Nash equilibria. For instance D1 (\zgame{A Game with Normal Covariant Payoffs}) had no significant pure-strategy Nash equilibrium.  D4 (\zgame{Dispersion Game}), at the other extreme, had $22$ pure-strategy Nash equilibria (see Figure~\ref{fig:reward_algorithm_game_d4}). Part of the reason for the large number of equilibria in $D4$ was that a majority of runs for many of the algorithms yielded a reward of $1$ (e.g., $84.6\%$ of \awesome's runs yielded a reward of $1$). This meant that in many of the subsampled games, the majority of payoffs were exactly $1$ and so there were many weak Nash equilibria. For example, both \rvs\ and \q\ attained a reward of $1$ against \fp, and \fp\ itself attained a reward of $1$ against \rvs\ and \fp. Therefore both \rvs--\fp\ and \q--\fp\ were pure Nash equilibria.

\begin{zfigure}[t]
\begin{minipage}[t]{0.45\linewidth}
			\includegraphics[width=1\textwidth]{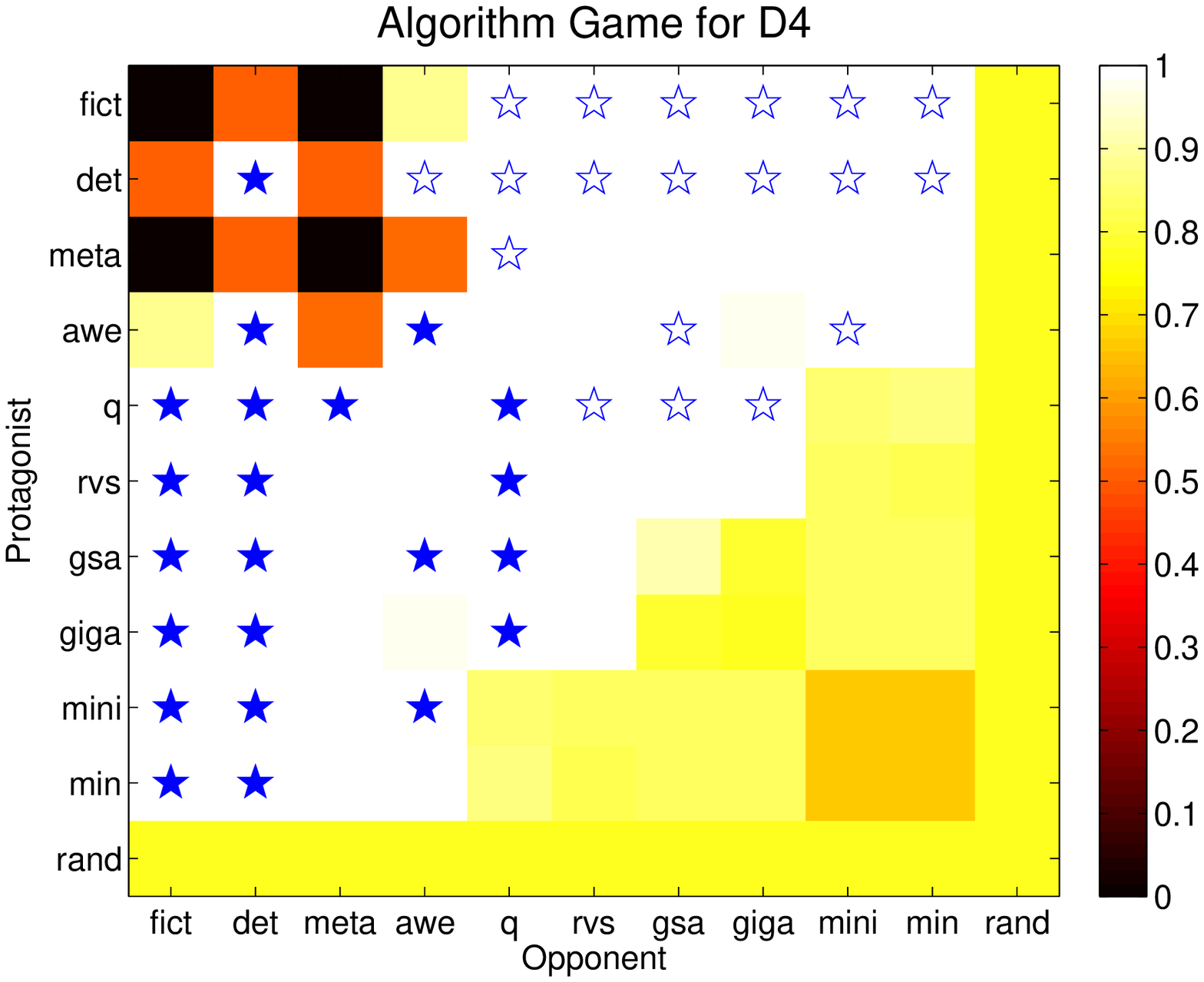}
	\caption[The mean reward algorithm game for D4]{Interpreting the mean reward results for D4 (\zgame{Dispersion Game}) as a one-shot game. No cells were dominated; the `$\star$'s indicate pure-strategy Nash equilibria. Asymmetric equilibria appear twice; to indicate this we make one of the corresponding stars hollow.}
		\label{fig:reward_algorithm_game_d4}
\end{minipage}
\hfill
\begin{minipage}[t]{0.45\linewidth}
		\includegraphics[width=1\textwidth]{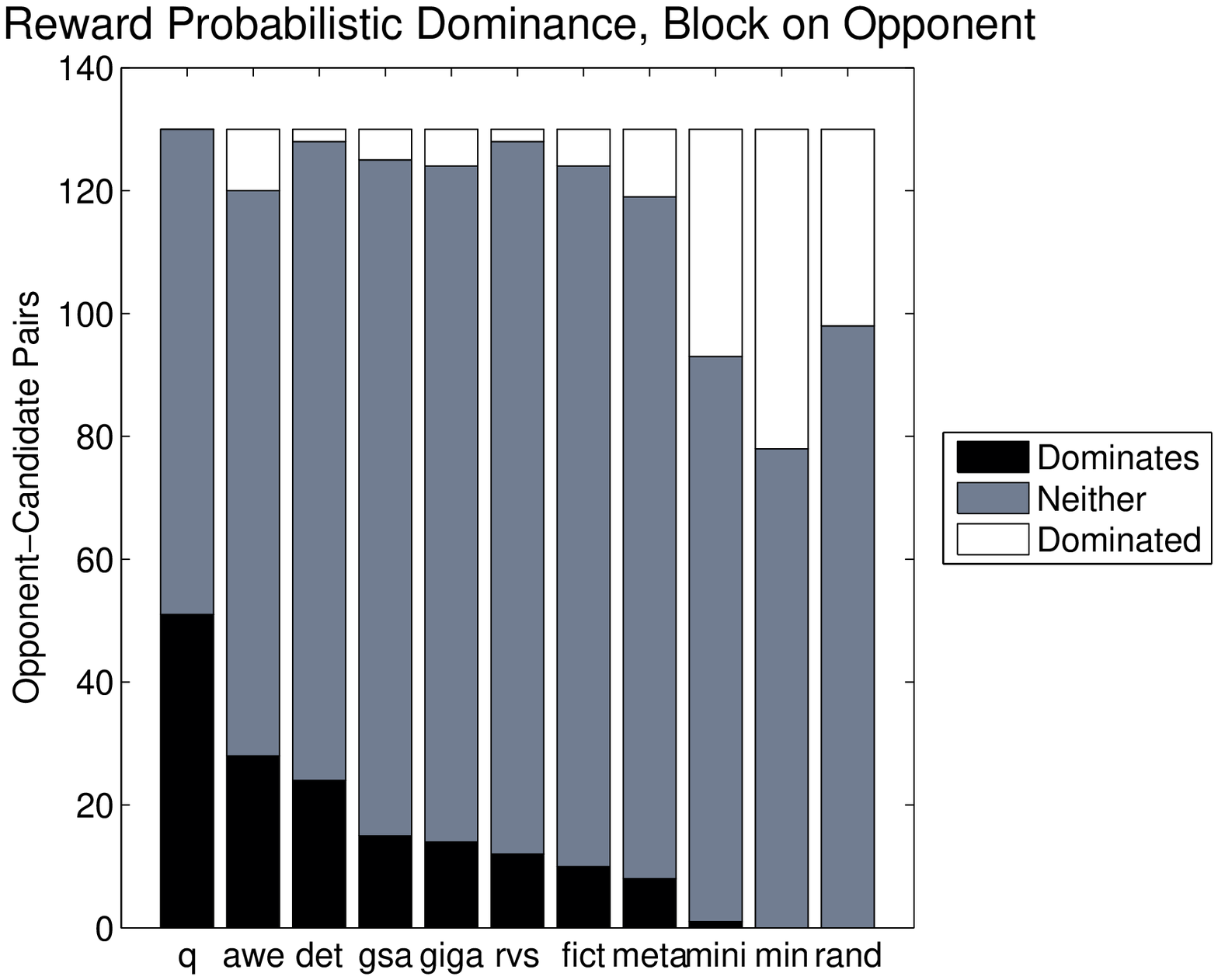}
	\caption[Reward distribution probabilistic domination between the algorithms]{For each algorithm, the number of opponents and candidate algorithms the algorithm dominated, was dominated by, or neither.}
		\label{fig:bar_utility_dominance_opponent}
\end{minipage}
\end{zfigure}

\subsection{Probabilistic Domination of One Algorithm by Another}
\label{sec:reward-probdom}

Now we consider the following question: given a fixed opponent, is a given algorithm probabilistically dominated by any alternative algorithm in terms of average reward?

\begin{ob}
\Q\ was the only algorithm that was never probabilistically dominated by any other algorithm when playing any opponent.
\end{ob}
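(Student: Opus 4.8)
The plan is to verify the observation by an exhaustive, opponent-by-opponent comparison of solution quality distributions, using the probabilistic-domination test of \S\ref{sec:prob_dom} hardened against sampling noise by the bootstrap of \S\ref{sec:boots}. Fix an opponent algorithm $B$. For each protagonist $A$, its performance against $B$ is summarized by the empirical distribution of $\bar{r}_A^{(T)}$ over the $600$ game instances (recall that each match was run once, so the relevant SQD is the spread over generators and game sizes). To decide whether $A$ probabilistically dominates $A'$ against $B$, overlay their two empirical CDFs and check whether the CDF of $A$ lies weakly below that of $A'$ at every reward level --- equivalently, whether every quantile of $A$'s SQD is at least the corresponding quantile of $A'$'s. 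Since $600$-point CDFs can cross, or fail to cross, by chance, we do not trust a single comparison: resample the $600$ PSMs with replacement (as in \S\ref{sec:boots}), recompute the test on each subsample, and record a domination only if it holds in at least $95\%$ of subsamples --- the same convention used for the mean-reward algorithm game.

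The observation then splits into a universal part and an existential part. The universal part, ``\q\ is never dominated,'' requires checking, for each of the $11$ opponents $B$ and each of the $10$ competitors other than \q, that the competitor does \emph{not} robustly dominate \q\ against $B$; concretely, exhibiting for every such pair a reward threshold $r$ at which a larger fraction of \q's runs exceed $r$ than of the competitor's. The existential part, ``every other algorithm is dominated by something against some opponent,'' requires only, for each algorithm $X$ other than \q\ --- i.e., each of \awesome, \determined, \fp, \giga, \gsa, \meta, \minimax, \minimaxidr, \random, and \rvs --- producing one opponent $B_X$ and one dominator $A_X$ with $A_X$ robustly dominating $X$ against $B_X$. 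For \minimax, \minimaxidr, and \random\ this is read directly off their CDFs, which lie entirely to the left of those of (say) \q\ or \rvs\ against essentially every opponent --- a stronger statement than, but consistent with, their strict domination already in the coarser mean-reward game (Table~\ref{tab:domination}).

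Running this procedure over all $11$ opponents and all $\binom{11}{2}$ protagonist pairs (both directions checked) produces the tallies plotted in Figure~\ref{fig:bar_utility_dominance_opponent}: for each algorithm, how many of the $10$ competitors dominate it, how many it dominates, and how many are incomparable, aggregated over opponents. The observation is exactly the claim that the ``was dominated'' tally is strictly positive for every algorithm except \q, for which it is zero. The main obstacle --- there being no genuine mathematical difficulty here --- is handling the borderline pairs whose empirical CDFs nearly touch; the $95\%$-of-subsamples rule is what converts ``the curves happen not to cross in this particular sample'' into a defensible conclusion. A secondary subtlety worth flagging is that probabilistic domination is only a partial order, so ``neither dominates'' is a bona fide third outcome; the universal part of the observation is precisely the assertion that \q\ always lands in that third category (or acts as a dominator), and never as the dominated party.
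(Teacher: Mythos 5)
Your verification plan matches the paper's own analysis in \S\ref{sec:reward-probdom}: the same per-opponent empirical CDF/quantile comparisons of average-reward SQDs, the same split into ``\q\ is never dominated'' versus ``every other algorithm is dominated by someone against some opponent'' (with the specific witnesses the paper cites, e.g.\ \determined\ dominated by \awesome\ against \fp, \rvs\ dominated by \q\ against the \minimax\ variants), and the same summary tally reported in Figure~\ref{fig:bar_utility_dominance_opponent}. The only deviation is your 95\%-of-bootstrap-subsamples robustness rule, which the paper applies to mean-reward comparisons and the algorithm game (Table~\ref{tab:domination}) but not to the probabilistic-domination checks, where it compares the empirical CDFs directly as described in \S\ref{sec:prob_dom}; that is a harmless, arguably more careful, strengthening rather than a different argument.
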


\Q{} had the best performance in terms of probabilistic domination. \Determined\ and \rvs\ were the next-least-dominated algorithms: \determined\ was only probabilistically dominated by \awesome\ against a \fp\ opponent, which was in turn dominated by \q; \rvs\ was dominated by \q\ when playing against the \minimax\ variants, and also by \determined\ when playing against \rvs. On the whole, domination by another algorithm in self play was a common trend; only \awesome, \determined\ and \q\ avoided being dominated by another algorithm when playing themselves. It is interesting that \determined\ was not dominated: we see this as a property of the specific game distributions that we studied.

Overall, while we observed some strong domination relationships, these were the exceptions while ambiguity was the rule. For most algorithm pairs against most opponents, no probabilistic domination relationship existed (see Figure~\ref{fig:bar_utility_dominance_opponent}). Furthermore, there was no opponent for which one algorithm probabilistically dominated all others.

\subsection{Self Play}
\label{sec:selfplay}

We have already seen evidence that self-play was challenging for many algorithms (e.g., see the tendency towards `cool' cells on the main diagonal of Figure~\ref{fig:aa_heatmap_utility_grand}). A closer analysis shows that for most algorithms there was indeed a significant relationship between self play and low reward.

\begin{ob}
Most algorithms attained lower average reward in self play.
\end{ob}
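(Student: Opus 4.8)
The statement ``Most algorithms attained lower average reward in self play'' is an empirical claim about the experimental data, so the ``proof'' here is really a description of the statistical procedure that establishes it. The plan is to compare, for each algorithm $i$, two solution quality distributions: the SQD of $i$'s average reward $\bar{r}_i^{(T)}$ over the matches in which $i$ plays against itself (its self-play runs across all $600$ game instances), versus the SQD of $i$'s average reward over all matches in which $i$ plays against some \emph{other} algorithm (its $10 \times 600$ non-self-play runs). We want to show that for a strict majority of the eleven algorithms the self-play distribution is shifted toward lower reward.

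First I would compute, for each algorithm, the sample mean of its self-play reward and the sample mean of its non-self-play reward, and check the sign of the difference. Second, to establish that the observed gap is not an artifact of sampling noise, I would apply the bootstrapping procedure of \S\ref{sec:boots}: subsample (with replacement, $l \approx m/2$ points, repeated $k \approx 2\,500$ times) from the self-play runs and independently from the non-self-play runs, form bootstrapped mean-estimator distributions for each, and read off the $95\%$ percentile intervals. If the self-play interval lies entirely below the non-self-play interval (no overlap), the difference in means is significant for that algorithm. Third, I would reinforce this with the KS test of \S\ref{sec:ks_test} comparing the two full SQDs, and where appropriate check for probabilistic domination (\S\ref{sec:prob_dom}) of the self-play SQD by the non-self-play SQD, which is the strongest version of the claim. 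Finally, I would tally: if the significant-and-negative count exceeds $5$ (a majority of $11$), the observation stands, and I would note the exceptions (we already know from \S\ref{sec:block-opponent} that \awesome\ largely avoided the self-play penalty, and \determined, \q, and \rvs\ were never dominated in self play, so the natural exceptions to flag are \awesome\ and possibly \determined).

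The main obstacle is disentangling two confounds. One is that ``self play'' is not a single match but an average over $600$ heterogeneous game instances, so a raw mean comparison mixes the self-play effect with generator composition; since both the self-play SQD and the non-self-play SQD are built from the same set of $600$ generators, this is controlled for, but I would want to double-check by also blocking on generator (as in \S\ref{sec:block-generators}) to make sure no single distribution is driving the effect. The second and more serious issue is asymmetric games: in a non-self-play match algorithm $i$ appears as both row and column player across different matches, whereas its ``self-play'' reward is the reward of one copy against an identically-configured but independently-seeded copy — so for algorithms that miscoordinate (\fp, \determined, and the portfolios that manage them, per the discussion around Figure~\ref{fig:aa_heatmap_utility_grand}) the low self-play reward is precisely the expected symptom, while for an algorithm like \determined\ the question is whether the single socially-optimal-equilibrium it picks is in fact played by its mirror image. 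I expect the bootstrap intervals to come out cleanly separated for the gradient algorithms, \fp, \meta, \minimax, \minimaxidr, and \random, comfortably clearing the majority threshold, with \awesome\ as the clean counterexample — but confirming significance (rather than just a visible ``cool diagonal'') is where the real work lies.
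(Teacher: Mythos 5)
Your statistical procedure is essentially the paper's own: for each algorithm, compare the self-play reward SQD against the non-self-play SQD, using probabilistic domination where it holds and falling back on bootstrapped $95\%$ percentile intervals for the means otherwise, then tally a majority. That is exactly how \S\ref{sec:selfplay} proceeds (domination for \awesome, \determined, \fp\ and \meta; significantly lower bootstrapped means for the gradient algorithms), so the method itself is sound and would reproduce the observation.

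However, your anticipated exceptions are wrong, and the error traces to a conflation of two distinct domination notions. The statement in \S\ref{sec:reward-probdom} that \awesome, \determined\ and \q\ ``avoided being dominated by another algorithm when playing themselves'' concerns cross-algorithm domination at a fixed opponent (namely, the algorithm itself); it says nothing about an algorithm's own self-play SQD versus its non-self-play SQD, which is the comparison your tally needs. In the actual data, \awesome\ is not the ``clean counterexample'' --- its self-play distribution is probabilistically dominated by its non-self-play distribution, despite its special-equilibrium machinery (the paper returns to this puzzle in \S\ref{sec:ne_convergence}). The genuine reversal is \q, whose self-play runs probabilistically dominate its non-self-play runs and whose self-play mean is the highest of any algorithm, and \minimax\ and \minimaxidr\ show no significant relationship at all (contrary to your expectation of cleanly separated intervals for them). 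None of this breaks your procedure --- running it would surface the correct $7$-of-$11$ majority and the correct exceptions --- but the supporting inference you drew from \S\ref{sec:reward-probdom} does not support the conclusion you attached to it.
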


The distributions of reward in self-play runs for \awesome, \determined, \fp\ and \meta\ were probabilistically dominated by the distribution of reward in non-self-play runs.
While the same was not true for the gradient algorithms (they achieved fewer low-reward runs in self play), their self-play means were nevertheless significantly lower than their non-self-play means. We verified this by looking at the $95\%$ bootstrapped percentile intervals. There was no significant relationship for \minimax\ and \minimaxidr, and this self-play trend was reversed for \q: its self-play runs probabilistically dominated its non-self-play runs. Furthermore, \q\ achieved a higher mean reward in self play than any other algorithm (see Figure~\ref{fig:reward_sp_mvplot}).

\begin{zfigure}[t]
\begin{minipage}[t]{0.45\linewidth}
	\includegraphics[width=\textwidth]{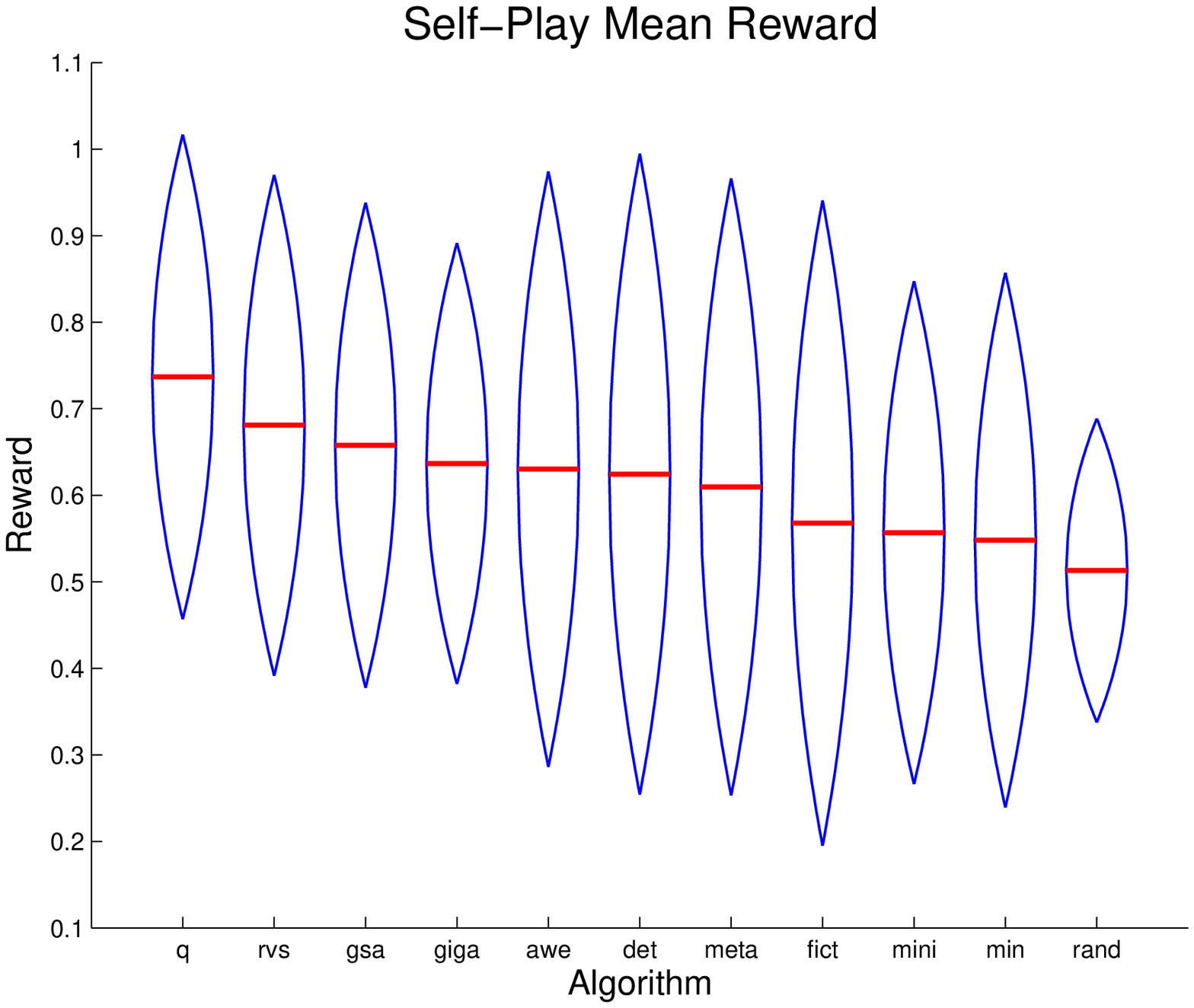}
	\caption[Self-play lens plot for reward]{A plot that shows the mean reward (bar) for each algorithm in self play and one standard deviation in either direction (the size of the lens).}
		\label{fig:reward_sp_mvplot}
\end{minipage}
\hfill
\begin{minipage}[t]{0.45\linewidth}
	\includegraphics[width=\textwidth]{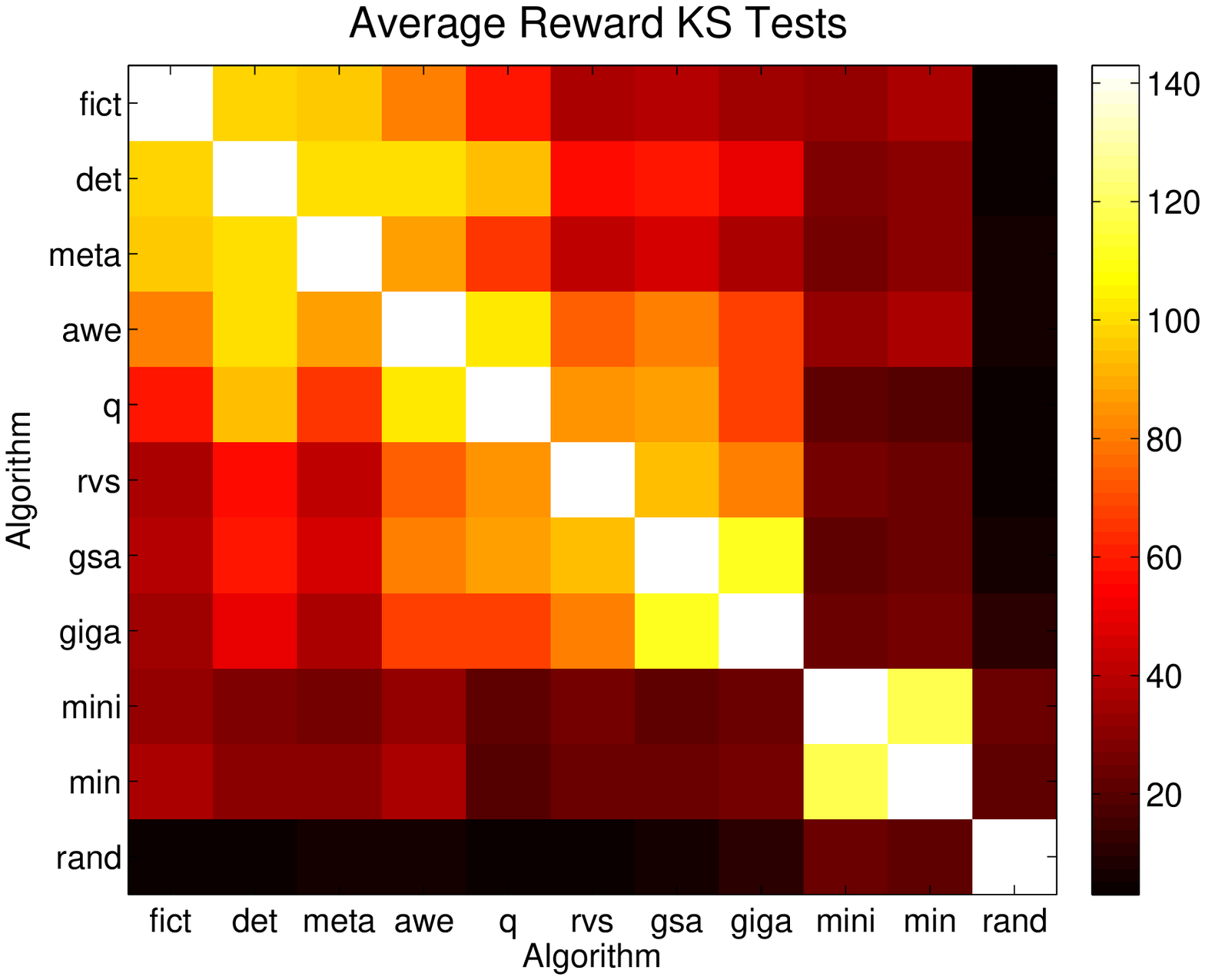}
	\caption[Similarity among different algorithms]{A heatmap that summarizes the number of opponent/generator pairs two algorithms are similar on in terms of reward distribution. This relationship is symmetric, so only the lower half of the plot is presented. The hotter the cell, the more situations the two algorithms are similar in.}
\note{I think this figure would be much easier to see if it showed the whole square, even though it's symmetric. (This would make it much clearer how much of the density lives on the diagonal.) Can you try it and see?}
		\label{fig:utility_heatmap_ks}
\end{minipage}
\end{zfigure}
		
Interestingly, \awesome\ was one of the algorithms with poorer self-play runs, despite its machinery for converging to a special equilibrium in self play. We wondered whether this occurred because \awesome\ did not converge due to an overly-conservative threshold for detecting whether its opponent was playing part of an equilibrium, or because \awesome\ did converge to the special equilibrium, but that equilibrium did not offer high reward. (Note that our implementation of \awesome\ coordinates to the first Nash equilibrium found by GAMBIT's implementation of Lemke-Howson.) At the risk of keeping the reader in suspense, we defer the answer to \S\ref{sec:ne_convergence}, in which we examine equilibrium convergence results.

\subsection{Algorithm Similarity}
\label{sec:alg-similarity}

Finally, we investigate similarities between algorithms' abilities to achieve high reward.
We can assign some of our algorithms to one of three major blocks. First, \awesome\ and \meta\ are similar because they both manage portfolios incorporating \fp\ and \determined; likewise, we expect them to be similar to the \fp\ and \determined\ algorithms themselves. Second, \giga, \gsa\ and \rvs\ are similar because they all follow a reward gradient. Finally, \minimax\ and \minimaxidr\ are similar because the latter is the same as the former except for the addition of an IDR preprocessing step. We call these the portfolio, gradient, and minimax blocks. We also might suspect that \q, an algorithm that does not explicitly model the opponent, might be similar to the gradient algorithms. \note{so why not put it in that block?} Nevertheless, we do not assign \q\ to a block; likewise, we leave \random\ unassigned.

We tested all pairs of algorithms for similarity by comparing their average reward distributions for all generator--opponent pairs. Thus, we tested each algorithm pair $13 \times 11 = 143$ times---every algorithm is of course similar to itself and so we did not check these cases. Failing to reject the null hypothesis of the KS test (that both samples were drawn from the same population) is some evidence for the samples being similar.
\note{Why not use the methods discussed in the convergence section to test for two distributions being the same? It's odd to go to all that care in that section, but here to treat failure to reject a null hypothesis as passing the test, for the same kind of claim. This subsection comes across as the only real technical weakness in the paper.}
This rough-and-ready approach does not establish significant similarity and is merely suggestive of similarity; failing to reject a null hypothesis is not the same as having shown that the null hypothesis is true. However, with this caveat in mind, we observed some interesting trends.

\begin{ob}
Similar algorithms tended to exhibit similar per\-for\-mance.
\end{ob}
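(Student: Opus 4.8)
The statement in question is an empirical \textbf{Observation} rather than a formal theorem, so what is needed is not a deductive proof but a demonstration that the similarity structure predicted \emph{a priori} from the algorithms' designs (the portfolio, gradient, and minimax blocks) is reflected in the reward data. The plan is to use the pairwise similarity scores already collected: for each of the $\binom{11}{2}$ algorithm pairs, count the number of the $143$ generator--opponent pairs on which the two algorithms' average-reward distributions are statistically indistinguishable according to the KS test at $\alpha=0.05$ (\S\ref{sec:ks_test}). These counts populate the heatmap in Figure~\ref{fig:utility_heatmap_ks}, and the argument reduces to showing that the mass of this matrix concentrates (i) on the diagonal-adjacent entries corresponding to the three design blocks and (ii) on the entries linking \fp\ and \determined\ to the portfolio algorithms \awesome\ and \meta, as anticipated in the paragraph preceding the Observation.

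First I would report, block by block, the within-block similarity counts: \minimax\ versus \minimaxidr\ (expected to be very high, since they differ only by an IDR preprocessing step that is vacuous on many generators); the three pairwise counts among \giga, \gsa, and \rvs; and the counts among \fp, \determined, \awesome, and \meta. Second, I would contrast these with the between-block counts and with the counts involving the two unassigned algorithms \q\ and \random, to show that cross-block pairs are similar on substantially fewer situations. Third, I would call out the informative exceptions --- e.g., that \q\ is in fact fairly similar to the gradient block (consistent with the remark that \q\ does not explicitly model the opponent), and that \random\ is dissimilar to essentially everything --- since honest reporting of where the block structure does and does not predict similarity strengthens rather than weakens the Observation.

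The main obstacle is the one already flagged in the text: failing to reject the KS null hypothesis is \emph{not} evidence that two reward distributions coincide, so a raw count of non-rejections could be inflated by low statistical power (small effective sample sizes on some generator--opponent cells) and would then spuriously make \emph{all} algorithms look similar. To guard against this I would check that the non-rejection counts are not uniformly high --- i.e., that dissimilar pairs really do get rejected often --- and would note that any power deficiency biases the comparison \emph{against} the Observation's differential claim (it would wash out the block structure), so the structure we actually see is conservative. A secondary concern is multiple testing across $143$ comparisons per pair; since we are using the counts descriptively rather than making a single significance claim per pair, I would simply be explicit that the heatmap is ``suggestive'' and not a formal similarity proof, exactly as the surrounding prose already concedes. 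No new statistical machinery is required beyond what \S\ref{sec:ks_test} provides; the work is entirely in tabulating and interpreting the $11 \times 11$ similarity matrix.
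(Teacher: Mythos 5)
Your proposal matches the paper's own approach: it supports the Observation by tabulating, for each algorithm pair, the number of the $13 \times 11$ generator--opponent cells on which the KS test fails to reject equality of reward distributions, showing that the predicted portfolio, gradient, and minimax blocks emerge in the resulting heatmap (Figure~\ref{fig:utility_heatmap_ks}), noting the exceptions involving \q\ and \random, and conceding explicitly that non-rejection is only suggestive of similarity. The extra robustness caveats you add (power, multiple testing) go slightly beyond the paper's discussion but do not change the argument.
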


All three predicted blocks emerge, as can be seen in Figure~\ref{fig:utility_heatmap_ks}. First, \meta, \awesome, \fp\ and \determined\ were all similar to each other on a large number of opponent--generator pairs. Both \meta\ and \awesome\ were similar in more cases to \determined\ than to \fp. For instance, \awesome\ was similar to \determined\ in $101$ out of $130$ cases while similar to \fp\ in only $81$ cases. \Meta\ and \awesome\ were also quite similar to each other ($88$ cases). \q\ was similar to the algorithms in this block, especially \determined\ and \awesome, which we had not expected. \awesome\ was more similar to \q\ than to any other algorithm: they were similar in $103$ cases, while even \determined\ and \awesome\ were only similar in $101$ cases.

The block of algorithms consisting of \rvs, \giga\ and \gsa\ were all similar in a large number of cases, with a particularly tight relationship evident between \giga\ and \gsa\ (similar in $111$ cases). \Q\ also bore similarities to the gradient-algorithm block. \note{Stronger or weaker than its similarities discussed above?}These algorithms also showed somewhat weaker similarity to \determined\ and \awesome.

The connection between \minimax\ and \minimaxidr\ was particularly strong (similar in $118$ cases). These were also the algorithms most similar to \random---indeed, similar almost twice as often as the next-most-similar algorithm (\awesome: it was similar to \random\ in $11$ cases, as compared to \minimax's $21$ cases). 

\section{Empirical Evaluation of MAL Algorithms: Other Metrics}
\label{sec:other-metrics}
So far, all of our experimental discussion has concerned the average reward metric. However, a wide variety of other metrics have also been proposed and studied in the literature. Here we consider many of the most prominent. This allows us to understand our experimental results in different ways, and furthermore sheds light on the extent to which each metric correlates with high reward in practice. In \S\ref{sec:regret} we investigate regret, specifically considering mean regret, probabilistic domination of one algorithm by another, and the relationship to reward. In \S\ref{sec:stationarity} we assess algorithms' tendencies to converge to stationary strategies. \S\ref{sec:ne_convergence} considers convergence to Nash equilibrium of the stage game, and relates this metric to reward. In \S\ref{sec:maxmin} we consider algorithms' abilities to achieve at least their maxmin payoffs, and consider both per-opponent maxmin performance and the relationship to reward. Finally, in \S\ref{sec:rne_convergence}, we measure algorithms' tendency to converge to payoff profiles consistent with Nash equilibria of the infinitely-repeated stage game.

\subsection{Regret}
\label{sec:regret}
Regret is the difference between the reward that an algorithm could have received by playing the best static pure strategy and the reward that it did receive:  
\begin{equation}
Regret(\vec{\sigma}_i,\vec{a}_{-i}) = \max_{a \in A_{i}}\sum_{t=1}^{T}\left[\,r(a,a_{-i}^{(t)}) - \E{\,r(\sigma_{i}^{(t)}, a_{-i}^{(t)})\,}\right].
\label{eqn:regret}
\end{equation} 
The best static pure strategy is determined after the run, based on the assumption that the opponent's strategy in each round would not change. We use the strategy formulation of regret---as opposed to one that uses the sampled actions that the algorithm played---following \inlinecite{bowling04}. Rather than looking at the total sum of regret over all $10~000$ recorded iterations, we will discuss the mean regret over these iterations. Since player payoffs are restricted to the $\left[0,1\right]$ interval, mean regret can give a better sense of the magnitude of regret with respect to possible reward.
%
%

Regret has been suggested as a measure of how exploitable an algorithm is. If an agent accrues significant regret one possible explanation is that it did the wrong thing. However, in some games (e.g., the Traveler's dilemma) ignoring regret can lead to greater long-term reward.

Some algorithms, including \giga\ and \rvs, are \textit{no-regret} learners: they come with the guarantee that they will always approach zero regret as the number of iterations approaches infinity. However, to our knowledge it has not been shown experimentally how the regret achieved by these algorithms compares to the regret achieved by other algorithms that lack such a guarantee; nor has it been demonstrated whether these algorithms achieve better than zero regret in practice.

\begin{ob}
\Q\ best minimized regret. \giga\ most frequently achieved negative-regret runs.
\end{ob}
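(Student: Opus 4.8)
The plan is to support this observation empirically, following the template of our average-reward analysis. First I would compute, for each of the $72\,600$ matches and each player, the mean regret $\frac{1}{T}Regret(\vec{\sigma}_i,\vec{a}_{-i})$ over the $T = 10\,000$ recorded iterations from Equation~(\ref{eqn:regret}), taking the best static pure strategy to be the one determined ex post against the opponent's realized play. This produces, for every algorithm, an empirical distribution of mean regret over the $13\,200$ runs in which it participated, averaging over both generators and opponents exactly as in \S\ref{sec:reward}.

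Next I would rank the algorithms by sample mean regret and determine which gaps are significant. As in \S\ref{sec:boots}, I would construct bootstrapped mean-regret estimator distributions by subsampling $6\,600$ runs roughly $2\,500$ times and comparing the $95\%$ percentile intervals for overlap. The claim that \q\ ``best minimized regret'' then reduces to showing that \q\ lies in the set of algorithms with lowest mean regret and has the smallest point estimate---ideally with an interval disjoint from those of the other algorithms. Along the way I would verify that the no-regret learners \giga\ and \rvs\ do indeed exhibit small mean regret, as their limit guarantees predict, and contrast them with algorithms carrying no such guarantee, since part of the interest of the observation is that an algorithm with no regret guarantee came out on top.

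For the second sentence I would, for each algorithm, count the fraction of its $13\,200$ runs whose mean regret is strictly negative---runs in which the algorithm outperformed, in hindsight, the best fixed pure response---and bootstrap this proportion to attach a confidence interval, so that ``most frequently'' can be asserted with significance. The step I expect to be the main obstacle is handling the boundary at zero. Regret distributions are heavily right-skewed, and several algorithms---notably \determined, which plays a single fixed action and is therefore essentially regret-free against most opponents---will pile mass at or just above zero, so a naive negative-regret count is sensitive to numerical noise; I would use a small tolerance, or the threshold-calibration idea of \S\ref{sec:threshold}, to separate genuine negative regret from near-zero regret. The second subtlety is that \q\ and \giga\ may have overlapping intervals on the mean-regret metric yet clearly different negative-regret frequencies, so the two halves of the observation must be carried by two distinct statistics and argued separately rather than conflated.
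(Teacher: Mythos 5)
Your plan matches the paper's analysis: per-run mean regret under the strategy formulation of Equation~(\ref{eqn:regret}), comparison of algorithms via bootstrapped $95\%$ percentile intervals on mean regret (the paper finds all intervals disjoint, with \q\ lowest at $0.008$), and the second claim carried by a separate statistic, namely the count of negative-, zero-, and positive-regret runs per algorithm (\giga\ leads with $5.8\%$ negative-regret runs). Your extra refinements---bootstrapping the negative-regret proportion and using a tolerance at the zero boundary---go slightly beyond what the paper does, which simply reports the raw run counts and notes that the most negative observed average regret was only $-2\times 10^{-6}$.
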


\begin{zfigure}[t]
\begin{minipage}[t]{0.45\linewidth}
	\includegraphics[width=1\textwidth]{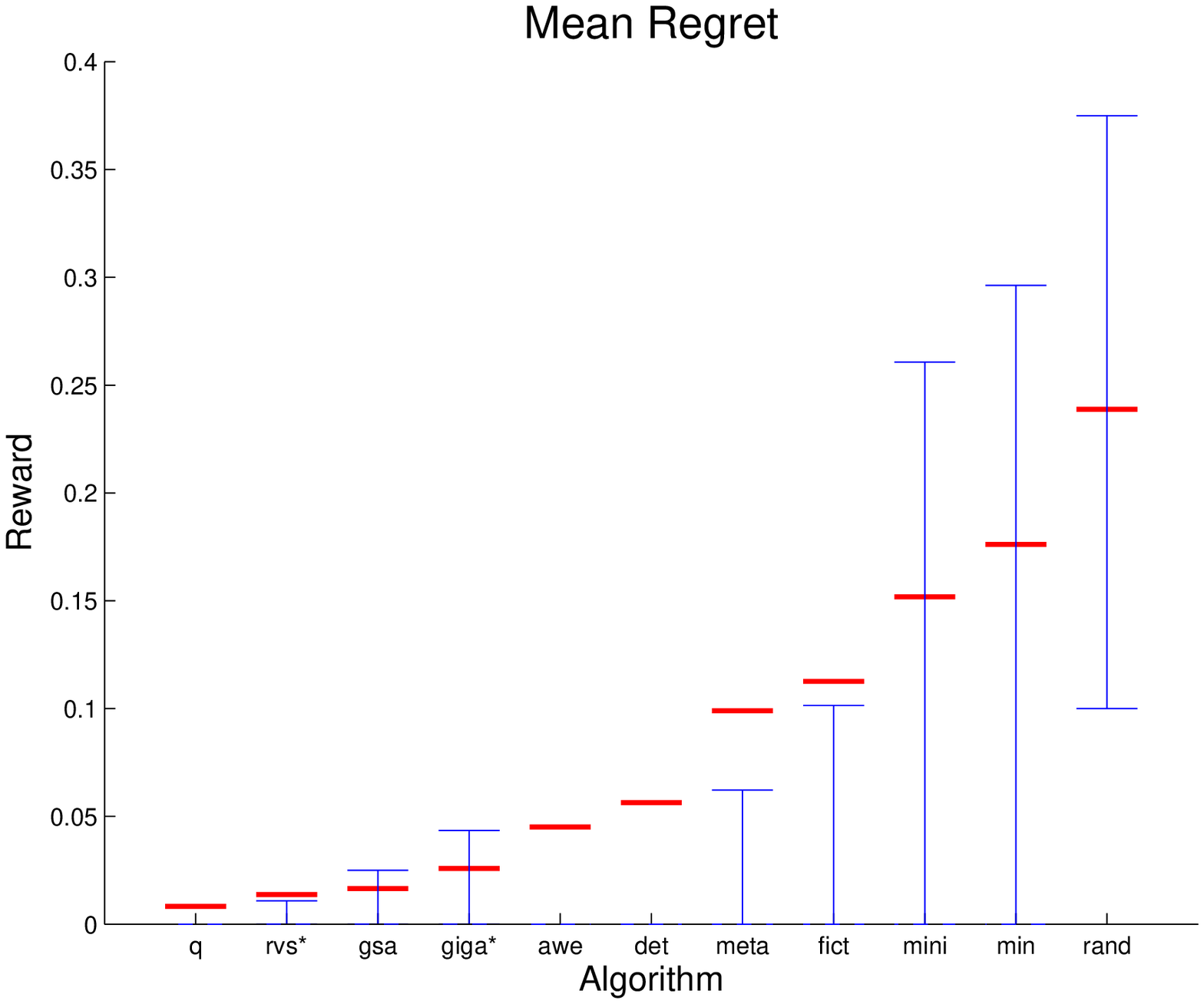}
\note{How reasonable is a lens plot here, as compared to a box plot? It seems likely that variation around the mean was dramatically asymmetric.}
\note{EPZ: While more accurate: I think that this hybrid IQR/mean chart  looks ugly.}
	\caption[Lens plot for regret.]{A plot that shows the mean regret (bar) for each algorithm and the IQR. Note that for highly skewed distributions, the mean need not be contained in the IQR. Algorithms with an asymptotic no-regret guarantee are indicated `$\ast$'.}
	\label{fig:regret_mvplot}
\end{minipage}
\hfill
\begin{minipage}[t]{0.45\linewidth}
	\includegraphics[width=1\textwidth]{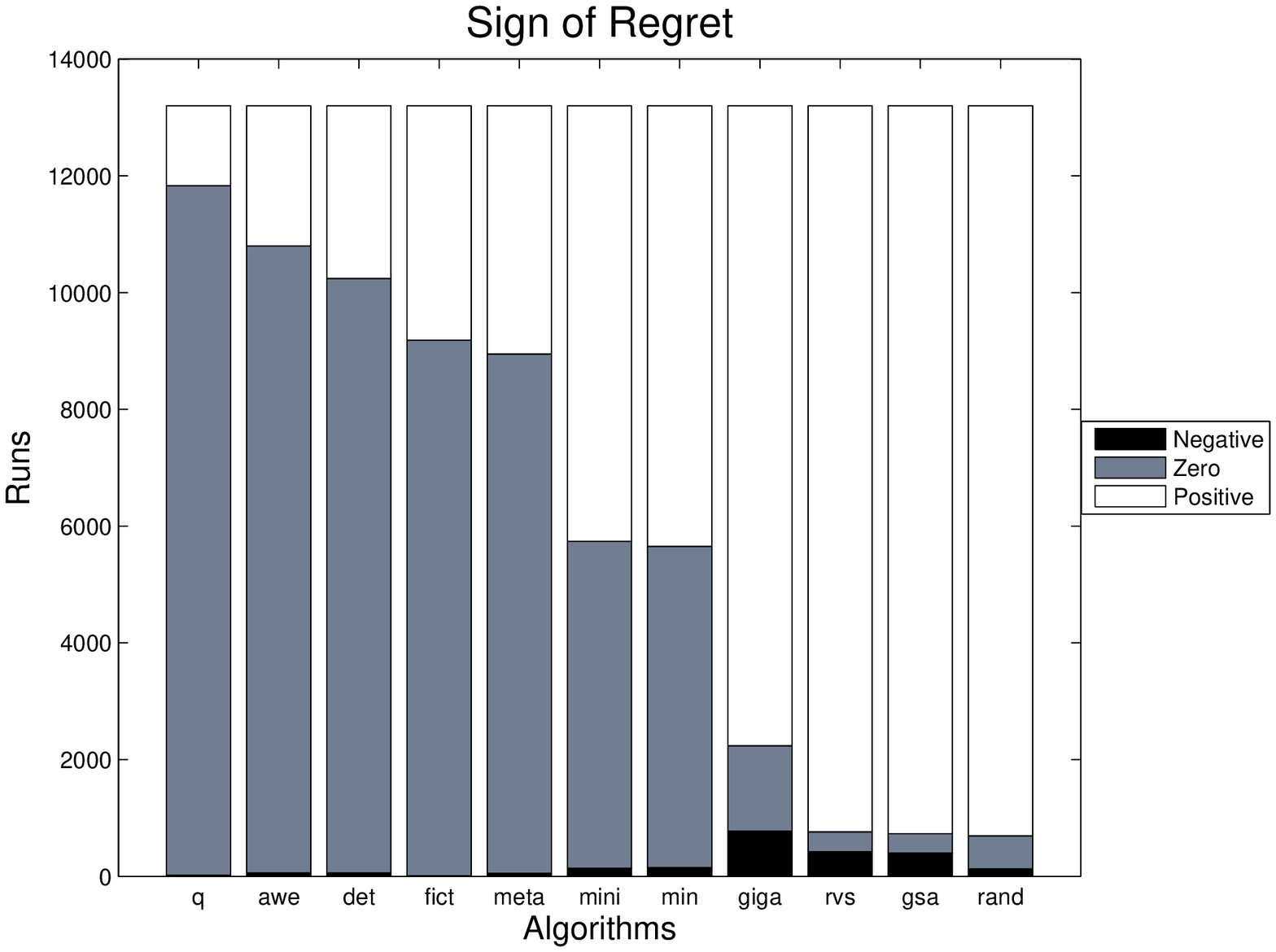}	
	\caption{The number of runs in which each algorithm achieved negative, zero, or positive regret.}
	\label{fig:bar_regret_frequency_grand}
\end{minipage}
\end{zfigure}

In our experiment, all algorithms achieved positive mean regret (Figure~\ref{fig:regret_mvplot}), though they differed substantially in the fraction of their matches in which they achieved positive regret (Figure~\ref{fig:bar_regret_frequency_grand}). All the means were significantly different, based on overlaps in the $95\%$ percentile intervals (there was none). Of these, \q\ had the lowest regret, at $0.008$. The gradient algorithms---\giga, \gsa\ and \rvs---had the next lowest mean regret after \q.  Among the gradient algorithms, \gsa\ achieved the lowest mean regret, followed by \rvs\ and then by \giga. These empirical results are concordant with \giga\ and \rvs's theoretical no-regret guarantees---not only are these algorithms guaranteed zero regret in the limit, but they also achieved low regret in practice. At the same time, it is interesting that the algorithm with the best results, \q, comes with no such guarantee.

Considering only mean regret masks an interesting difference between \q\ and the gradient algorithms: they achieve low mean regret in different ways (see Figures~\ref{fig:bar_regret_frequency_grand} and \ref{fig:ecdf_utility_q_giga_grand}). \Q\ achieved low mean regret by attaining zero regret in most ($89.5\%$) of its runs. It had the fewest positive-regret runs ($10.4\%$; the next lowest was \awesome\ at $18.2\%$), and also had the second-fewest negative-regret runs ($0.1\%$; only \fp\ had (slightly) fewer). On the other hand, the gradient algorithms rarely achieved zero regret (the algorithms with the fewest zero runs were \gsa with $2.49\%$, \rvs with $2.58\%$, \random\ with $4.29\%$ and \giga with $11.10$) but often achieved negative regret (the three algorithms with the most negative regret runs were \giga\ ($5.8\%$), \rvs\ ($3.2\%$) and \gsa\ ($3.0\%$)).

\begin{zfigure}[t]
\begin{minipage}[t]{0.45\linewidth}
	\includegraphics[width=1\textwidth]{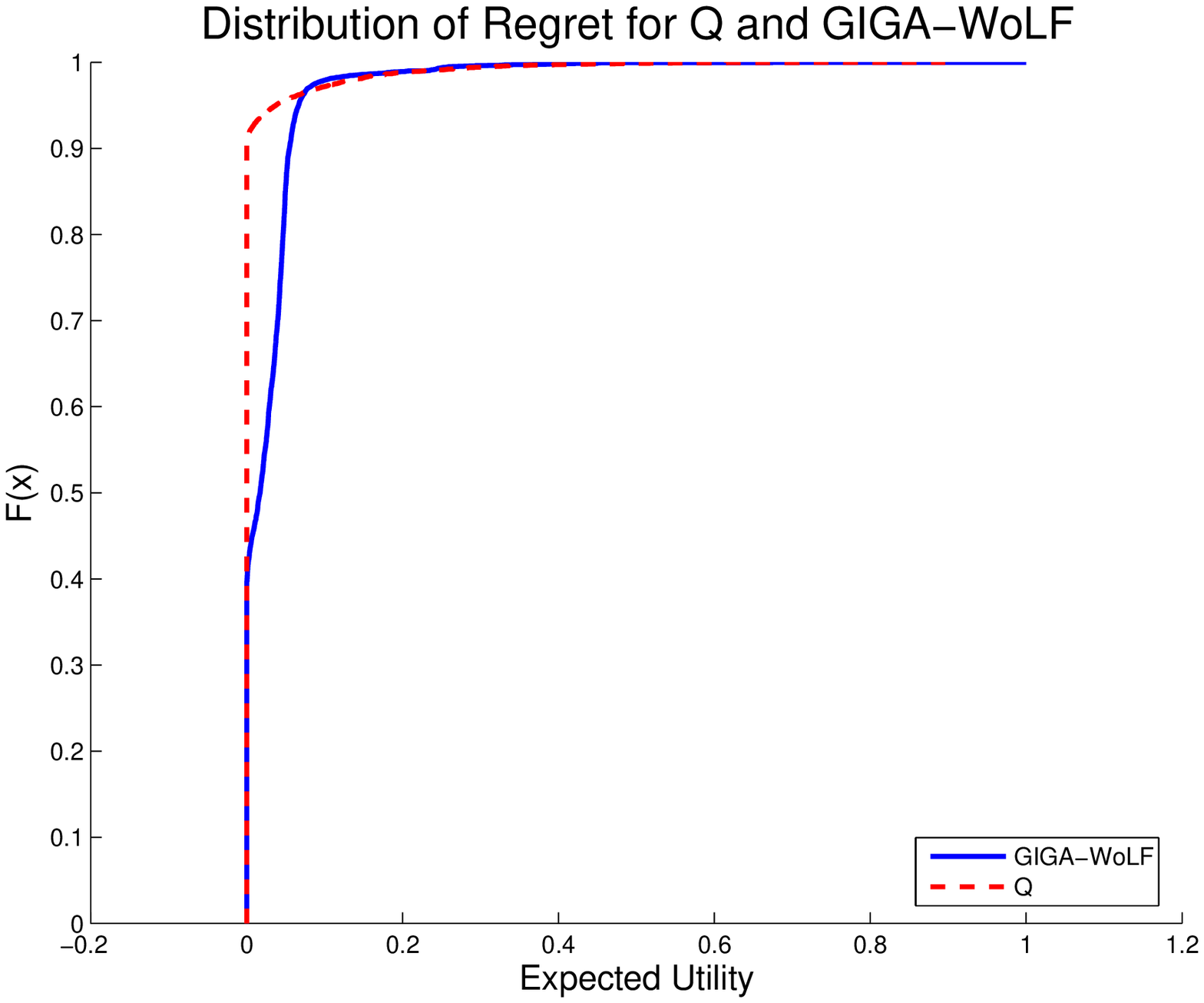}
	\caption{The distribution of regret for \q\ and \giga.}
	\label{fig:ecdf_utility_q_giga_grand}
\end{minipage}
\hfill
\begin{minipage}[t]{0.45\linewidth}
	\includegraphics[width=1\textwidth]{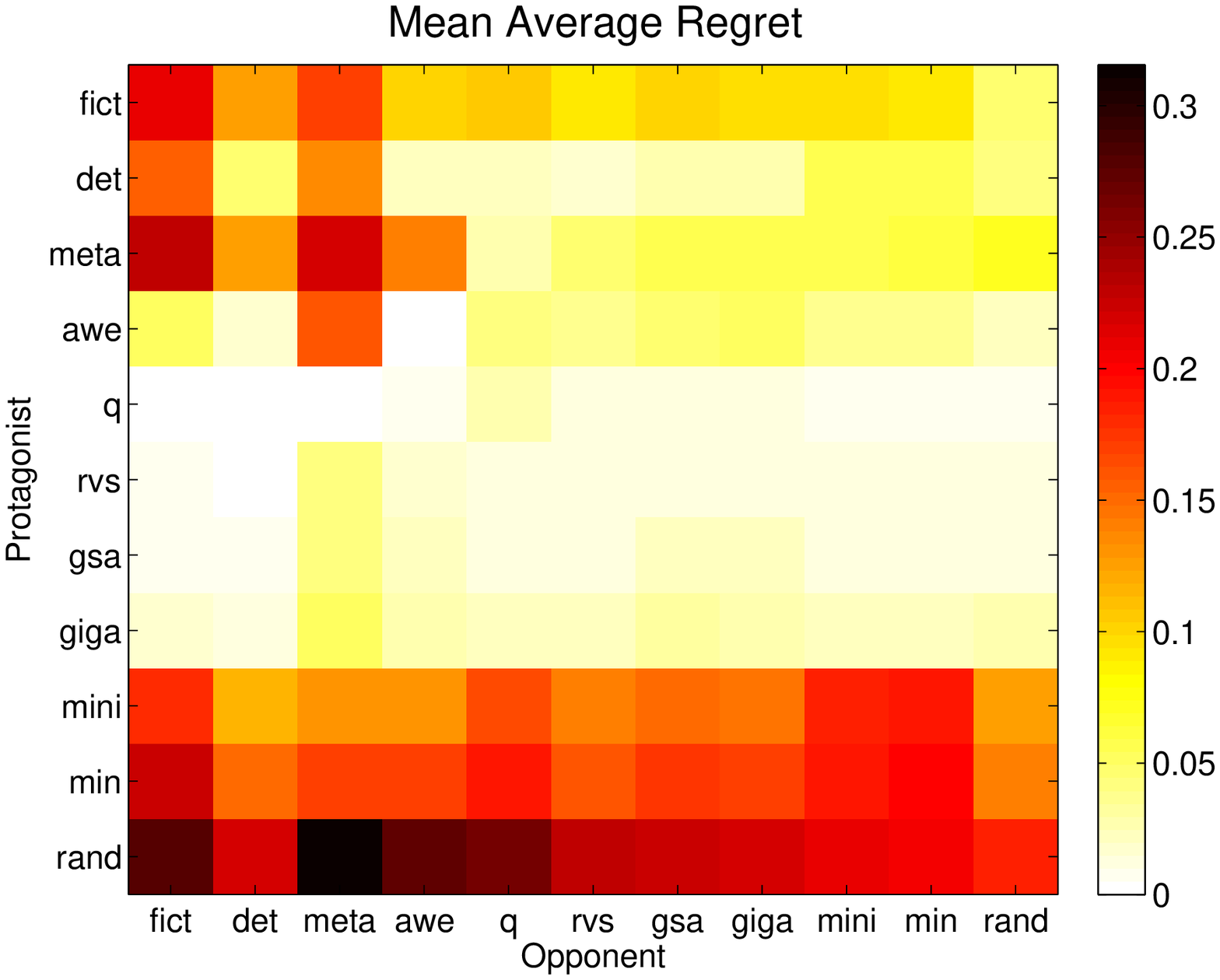}
	\caption{Mean average regret, blocked by opponent.}
	\label{fig:aa_heatmap_regret_mean_grand}
\end{minipage}
\end{zfigure}

Overall, no algorithm achieved less than very slightly negative regret: the run with the smallest regret had an average regret of $-2\times10^{-6}$. \note{So, indeed, a lens plot seems the wrong thing to use here.}The converse was not true for positive regret: in $440$ different runs some algorithm attained average regret of $1$, meaning that it took precisely the wrong action in every round. $48.6\%$ of these runs involved \fp\ or one of the algorithms that wrap around \fp\ (\awesome\ or \meta) in self play, and were on generator D4 (\zgame{Dispersion Games}), which reward miscoordination.
We can conclude that in these cases \fp\ became stuck in pathological cycling between the symmetric outcomes (where both agents play the same action), which yield zero reward. Such cycling is a well-known problem with \fp; based on claims in the literature, a judicious application of noise to the algorithm would have broken the cycle and improved \fp's performance.

Considering regret on a per-generator basis, \q\ achieved the lowest mean regret on every generator except for D13 (strategically distinct $2 \times 2$ games), on which \rvs\ was the best. \q\ was also the best algorithm to use against almost every opponent. There were only two exceptions: \rvs\ was better against \q\ and \awesome\ was better against itself. Another interesting pairing was when \q\ played against \fp: \q\ attained zero regret in every single game. This indicates that \q\ (uniquely among our algorithms) converged to a pure-strategy best response in every game against \fp.

\subsubsection{Probabilistic Domination of One Algorithm by Another}

When we consider regret distributions on a per-opponent basis, some strong probabilistic dominance trends emerge.

\begin{ob}
On a per-opponent basis, \q, \giga, \gsa\ and \rvs\ were rarely probabilistically dominated in terms of regret.
\end{ob}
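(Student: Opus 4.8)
Since this is an empirical claim, ``proving'' it means deriving it from the recorded regret data using the probabilistic-domination machinery of \S\ref{sec:prob_dom}. First I would fix, for each ordered pair of a protagonist algorithm $A$ and an opponent algorithm $O$, the collection of average-regret values $Regret(\vec{\sigma}_A,\vec{a}_O)$ that $A$ obtained against $O$ across all $600$ game instances, and summarize it by its empirical CDF---the regret SQD of $A$ against $O$. Because regret is to be \emph{minimized}, the domination order of \S\ref{sec:prob_dom} is reversed: $A$'s regret SQD probabilistically dominates $A'$'s (both against the same $O$) exactly when, for every quantile $q \in [0,1]$, the $q$-quantile of $A$'s regret is no larger than that of $A'$; equivalently, $A$'s regret CDF lies weakly above $A'$'s everywhere, so that for every threshold $r$ at least as many runs of $A$ as of $A'$ keep regret below $r$.

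The core of the argument is then a finite check. For each of the four algorithms \q, \giga, \gsa\ and \rvs, and each of the $11$ possible opponents $O$, I would test whether \emph{any} of the other ten algorithms' regret SQDs against $O$ dominates it, using the CDF-comparison test of \S\ref{sec:prob_dom}; tallying the (algorithm, opponent) pairs for which a dominator exists and showing this count is small relative to the $4\times 11$ pairs is what substantiates ``rarely''. To guard against the conclusion being an artifact of sampling noise, I would re-test every candidate domination under bootstrapping (\S\ref{sec:boots}): resample the $600$ game instances (taking, for each sampled instance, the regret of both algorithms against $O$, roughly $300$ at a time, about $2~500$ times), recompute both regret CDFs on each subsample, and retain a domination relationship only if it holds in at least $95\%$ of subsamples---the same significance convention used for the algorithm-game domination analysis in \S\ref{sec:block-opponent}. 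Any surviving exceptions (for instance an opponent against which \rvs\ or \q\ is dominated, paralleling the reward-side pattern of \S\ref{sec:reward-probdom} in which \rvs\ is dominated by \q\ against the \minimax\ variants and by \determined\ in self play) would be enumerated explicitly.

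A useful consistency check is that probabilistic domination implies domination of means. Since \q\ already attains the lowest mean regret against all but one opponent and the gradient algorithms attain the next-lowest means (the Observation in \S\ref{sec:regret}), the bootstrapped CDF test can possibly flag only a handful of pairs, which is exactly the quantitative content of the claim. The main obstacle is the boundary behavior of these SQDs. \Q's regret distribution has a large atom at zero---it attained exactly zero regret in $89.5\%$ of its runs, and in \emph{every} run against \fp---while on D4 (\zgame{Dispersion Games}) \fp\ and the algorithms that wrap around it pile mass at regret $1$; pooling over generators (as ``per-opponent basis'' demands) therefore yields CDFs that touch rather than cleanly separate, so one must fix a convention for ties---i.e., distinguish weak from strict probabilistic domination, as was done for the algorithm game---and then verify in the data that the across-generator heterogeneity is precisely what makes one regret SQD dominating another rare. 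Confirming that is the real work.
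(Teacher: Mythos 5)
Your plan is essentially the paper's own analysis: the paper defines exactly this per-opponent notion of regret-SQD domination (``$A$ dominates $B$ $k$ times if there are $k$ opponents $C$ such that \ldots''), applies the CDF-comparison criterion of \S\ref{sec:prob_dom} (with the order reversed for a minimized metric) to every algorithm pair, and tallies the outcomes in Figure~\ref{fig:heatmap_regret_strict_dominance_opponent}, finding that \giga, \gsa\ and \rvs\ were never dominated and \q\ was dominated only once (by \awesome\ against an \awesome\ opponent). The only deviation is your added bootstrap re-test of each candidate domination, which is a stricter robustness requirement than the paper imposes for this particular check, but it does not change the approach or the conclusion.
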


\begin{zfigure}[t]
\begin{minipage}[t]{0.45\linewidth}
	\includegraphics[width=1\textwidth]{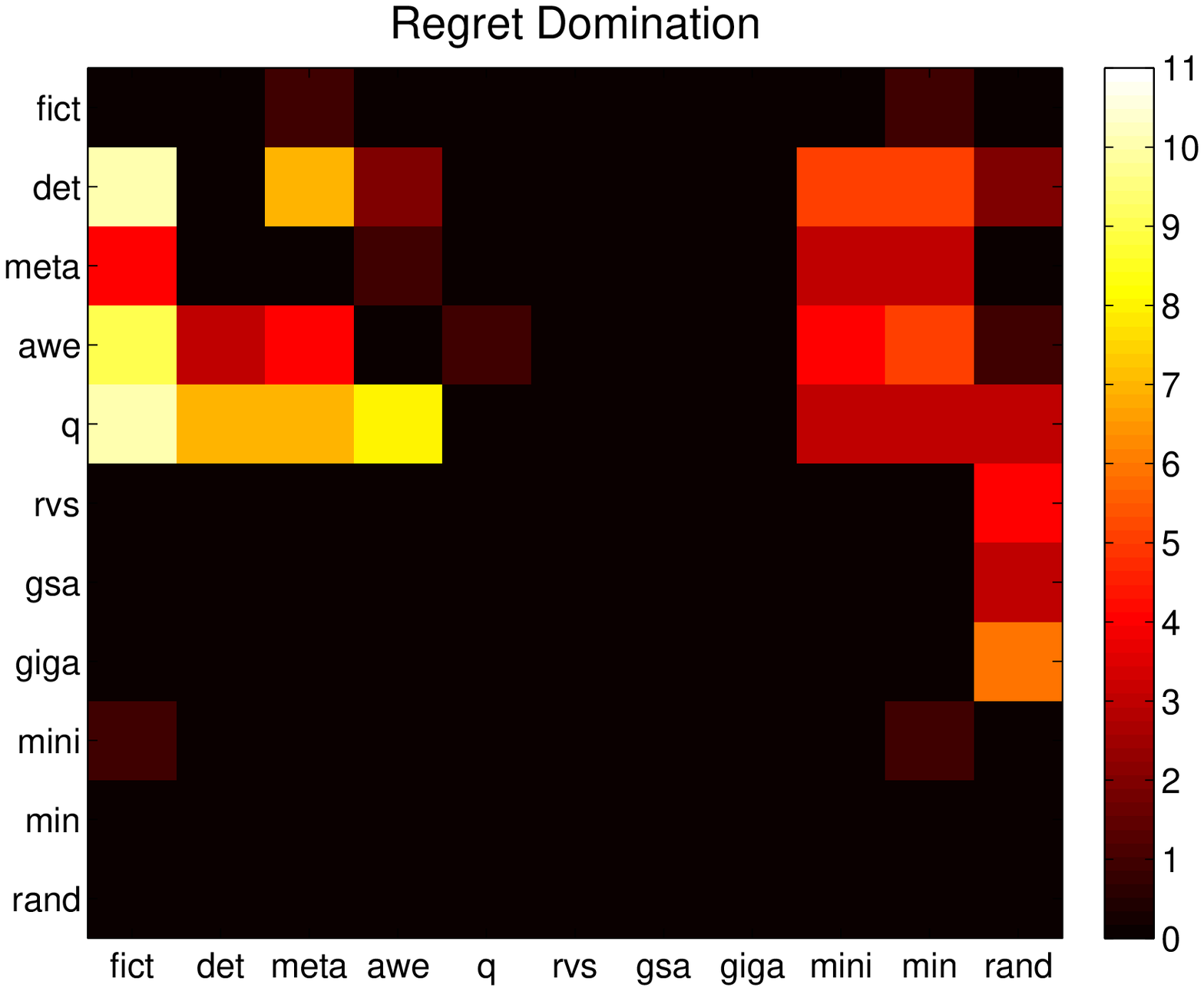}
	\caption{The number of opponents for which the algorithm on the ordinate probabilistically strictly dominates the algorithm on the abscissa. For example, \q\ probabilistically dominates \fp\ on PSMs involving ten out of eleven possible opponents.}
	\label{fig:heatmap_regret_strict_dominance_opponent}
\end{minipage}
\hfill
\begin{minipage}[t]{0.45\linewidth}
	\includegraphics[width=1\textwidth]{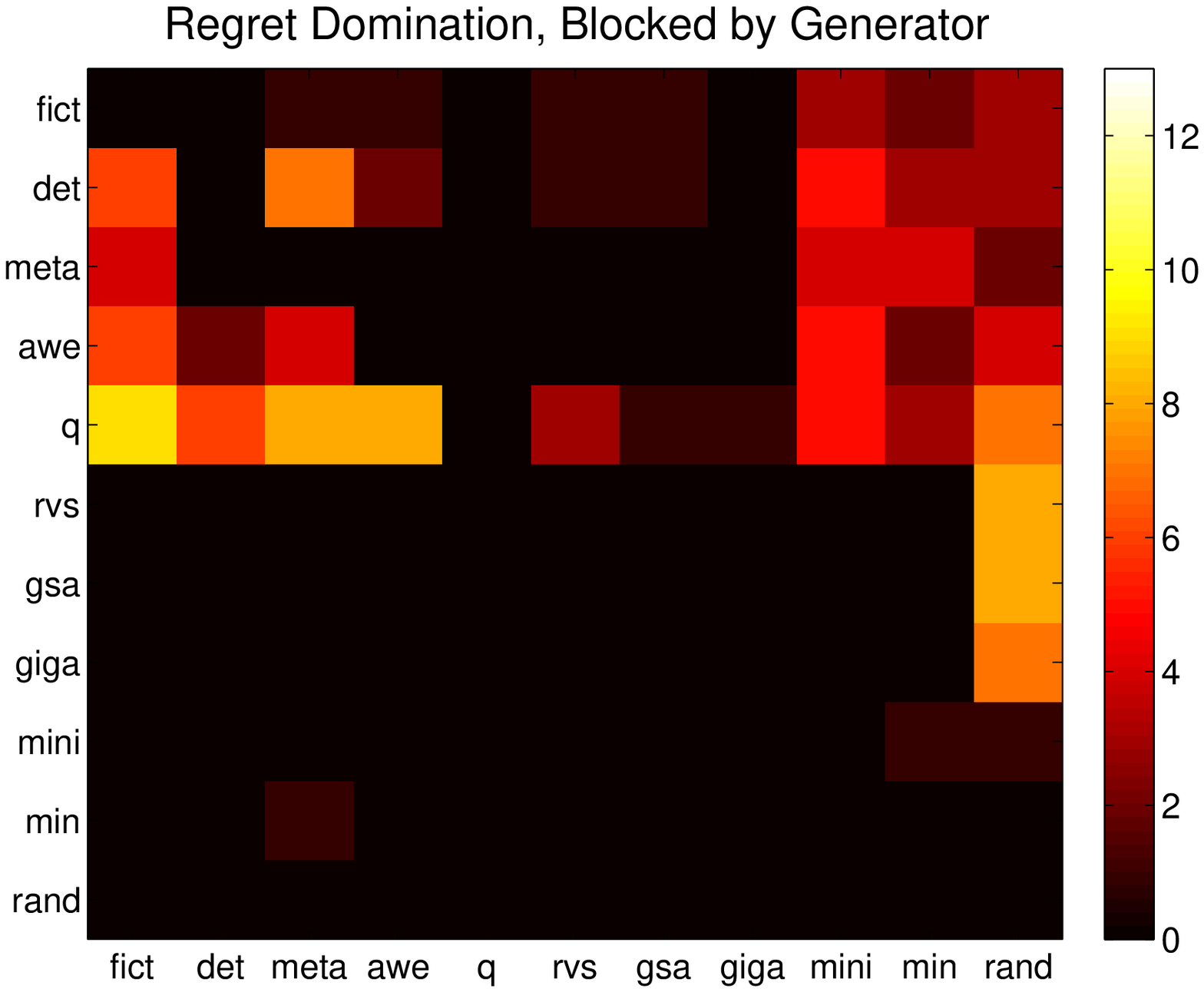}
	\caption{The number of generators for which the algorithm on the ordinate probabilistically strictly dominates the algorithm on the abscissa.}
	\label{fig:heatmap_regret_strict_dominance_generator}
\end{minipage}
\end{zfigure}

First, say that algorithm $A$ dominates $B$ $k$ times if there are $k$ opponents $C$ such that $A$'s regret distribution for matches against $C$ probabilistically dominates $B$'s regret distribution for matches against $C$. Under this notion of domination, we found that the gradient algorithms were never dominated by any other algorithm (Figure~\ref{fig:heatmap_regret_strict_dominance_opponent}). \q\ was only dominated once, by \awesome\ in the case of an \awesome\ opponent. We were not surprised by this, since \awesome\ has special machinery for converging to a stage-game Nash equilibrium in self play. (In a Nash equilibrium, of course, both agents play best responses to each other and hence both accrue zero regret.) On the other hand, \fp\ was frequently dominated, especially by \awesome, \determined, \q\ and to a lesser degree \meta. Both \determined\ and \q\ dominated \fp\ against $10$ opponents (\q\ was the exception for \determined\ and vice versa), and \awesome\ dominated \fp\ on $9$ opponents (\giga\ and \meta\ were the only opponents for which \awesome\ did not dominate \fp).

We can also define probabilistic domination in another way, saying that algorithm $A$ dominates $B$ $k$ times if there are $k$ \emph{generators} $G$ such that $A$'s regret distribution on games from $G$ probabilistically dominates $B$'s regret distribution on games from $G$. Considering domination in this sense, we can draw similar conclusions (Figure~\ref{fig:heatmap_regret_strict_dominance_generator}). \Q\ dominated other algorithms frequently---particularly \fp\ (on $9$ generators), \meta\ ($8$ generators),  and \awesome\ (on $8$ generators)---while avoiding domination by any other algorithm. \FP\ was dominated frequently: by \q\ ($9$ generators), \determined\ ($6$), \awesome\ ($6$) and \meta\ ($4$).

\subsubsection{Links Between Regret and Reward}

What is the connection between regret and reward? We expected that high reward should be correlated with low regret, and vice versa. This intuition was largely supported by our experimental data. Regret and reward were negatively correlated for all algorithms (Spearman's rank correlation test; $\alpha = 0.05$): high reward was linked with low regret. On a per-generator basis, we observed that D10 (\zgame{Traveler's Dilemma}) induced \emph{positive} correlation between regret and reward for all algorithms except \determined\ (Figure~\ref{fig:ag_heatmap_spearman_regret_reward}). This makes sense: in this game, algorithms do better when they do not play best responses, and indeed the unique Nash equilibrium is one of the worst outcomes of the game.

\begin{zfigure}[t]
\begin{minipage}[t]{0.45\linewidth}
			\includegraphics[width=\textwidth]{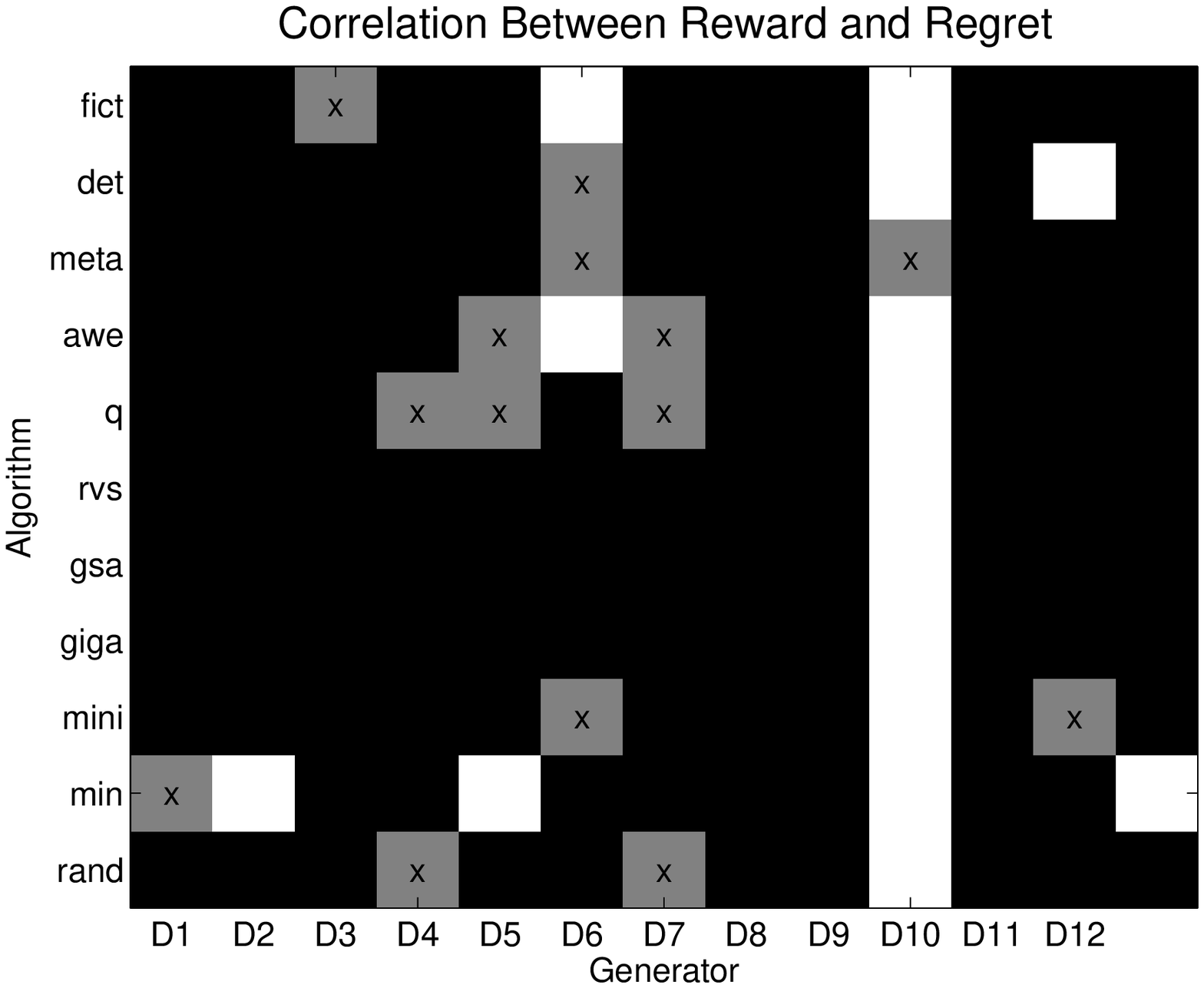}
	\caption[The sign of correlation between reward and regret]{The sign of correlation between reward and regret for each algorithm and each game generator. A white cell indicates positive correlation, a black cell indicates negative correlation, and a grey cell with an `x' indicates insignificant correlation.}
	\label{fig:ag_heatmap_spearman_regret_reward}
\end{minipage}
\hfill
\begin{minipage}[t]{0.45\linewidth}
	\includegraphics[width=\textwidth]{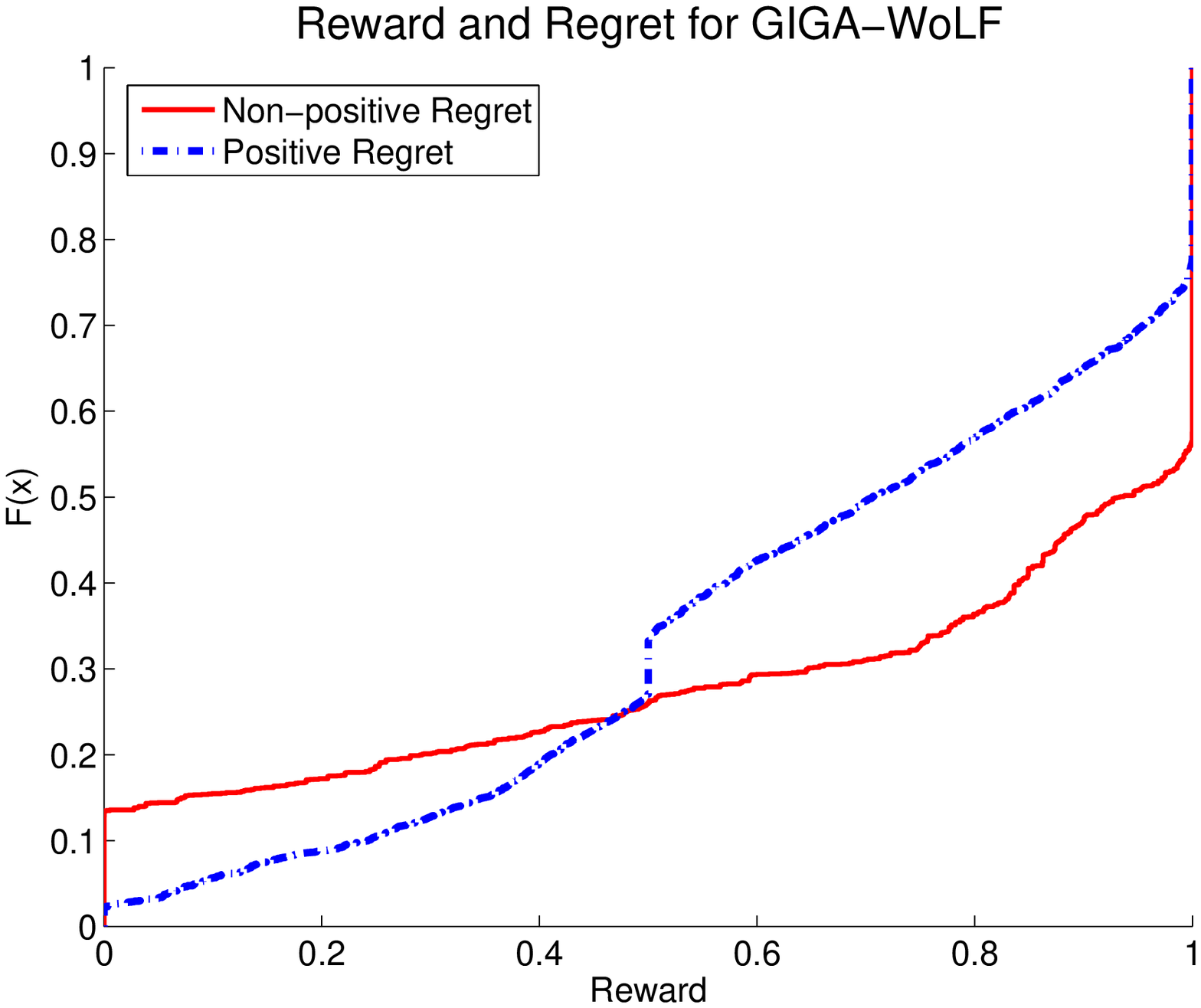}
	\caption[\giga's CDF curves for positive and non-positive reward]{A CDF plot showing \giga's average reward obtained on runs in which it obtained either positive or non-positive reward. Notice that positive-regret runs were less likely to yield zero reward.}
	\label{fig:cdf_reward_and_regret_giga}
\end{minipage}
\end{zfigure}

We compared the average reward each algorithm obtained in positive-regret runs and non-positive-regret runs. For most of the algorithms, the distribution of average reward obtained in non-positive-regret runs probabilistically dominated the distribution of average reward obtained in positive regret runs. There were some exceptions. For example, \q\ exhibited a relatively minor crossover. The same phenomenon occurred with \giga, but in a more pronounced fashion: runs that attained positive regret less often attained zero reward  (Figure~\ref{fig:cdf_reward_and_regret_giga}). Even more dramatically, the positive-regret run distributions probabilistically dominated the non-positive run distributions for \gsa\ and \rvs. These two (gradient) algorithms exhibited behavior different from the other nine: runs with positive regret had better reward characteristics than runs with zero or negative regret. This phenomenon did not seem to arise in the context of a single generator or opponent. However, we did note that the probabilistic domination seemed the weakest when PSMs involving \zgame{Traveler's Dilemma} were omitted.

\subsection{Strategic Stationarity}
\label{sec:stationarity}

All of the metrics we have discussed so far have been based on reward. We now consider several that are based on empirical frequency of action, and that ask whether these frequencies converge. The first---and weakest---notion of convergence that we consider measures whether or not an algorithm converges to a stationary strategy profile. This is interesting in its own right, and is also a necessary condition for stronger forms of convergence.

We consider a run to have been stable if the joint distribution of actions was the same in the first and second halves of the recorded iterations, tested according to the threshold criterion described in \S\ref{sec:threshold} and using $\ell_\infty$-distance. Stability is a property of a run rather than a single algorithm's play in a run, so even algorithms that always play stationary strategies can still participate in unstable runs.

To check how successful our threshold criterion was at detecting stationarity, we began by examining the results for our two algorithms that always play stationary strategies. Our criterion found \determined\ to be stable in $99.5\%$ of self-play matches and \random\ to be stable in $92.0\%$ of self-play matches. When playing each other, they were found to be stable $94.8\%$ of the time. The differences between these cases are likely because \determined\ tends to adopt mixed strategies with smaller supports than \random does, and such a mixed strategy is more likely to yield an empirical action distribution that closely resembles it.\footnote{We note that a false positive rate of between $0.5\%$ and $8\%$ is larger than might be hoped, but nevertheless defer consideration of improved criteria for measuring empirical convergence to future work.}

We found \giga\ and \gsa\ to be the least likely to be stable---particularly in self play, against each other, or against \meta\ (see Figure~\ref{fig:aa_heatmap_stability}). Their striking instability with \meta\ was potentially because they tripped \meta's internal stability test and changed its behavior. However, \awesome\ also has a similar internal check, but the stability of \giga\ and \gsa\ were not noticeably different between matches with \awesome\ and with \q\ (which has no such check). \rvs, the other gradient algorithm, was more stable than \giga\ and \gsa. This might be because \rvs\ had a more aggressive step length: the parameters used in this experiment for \giga\ and \gsa\ were taken from \inlinecite{bowling04}, who indicated that these parameters were intended to produce smooth trajectories rather than fast convergence.

\begin{zfigure}[t]
\begin{minipage}[t]{0.45\linewidth}
	\includegraphics[width=1\textwidth]{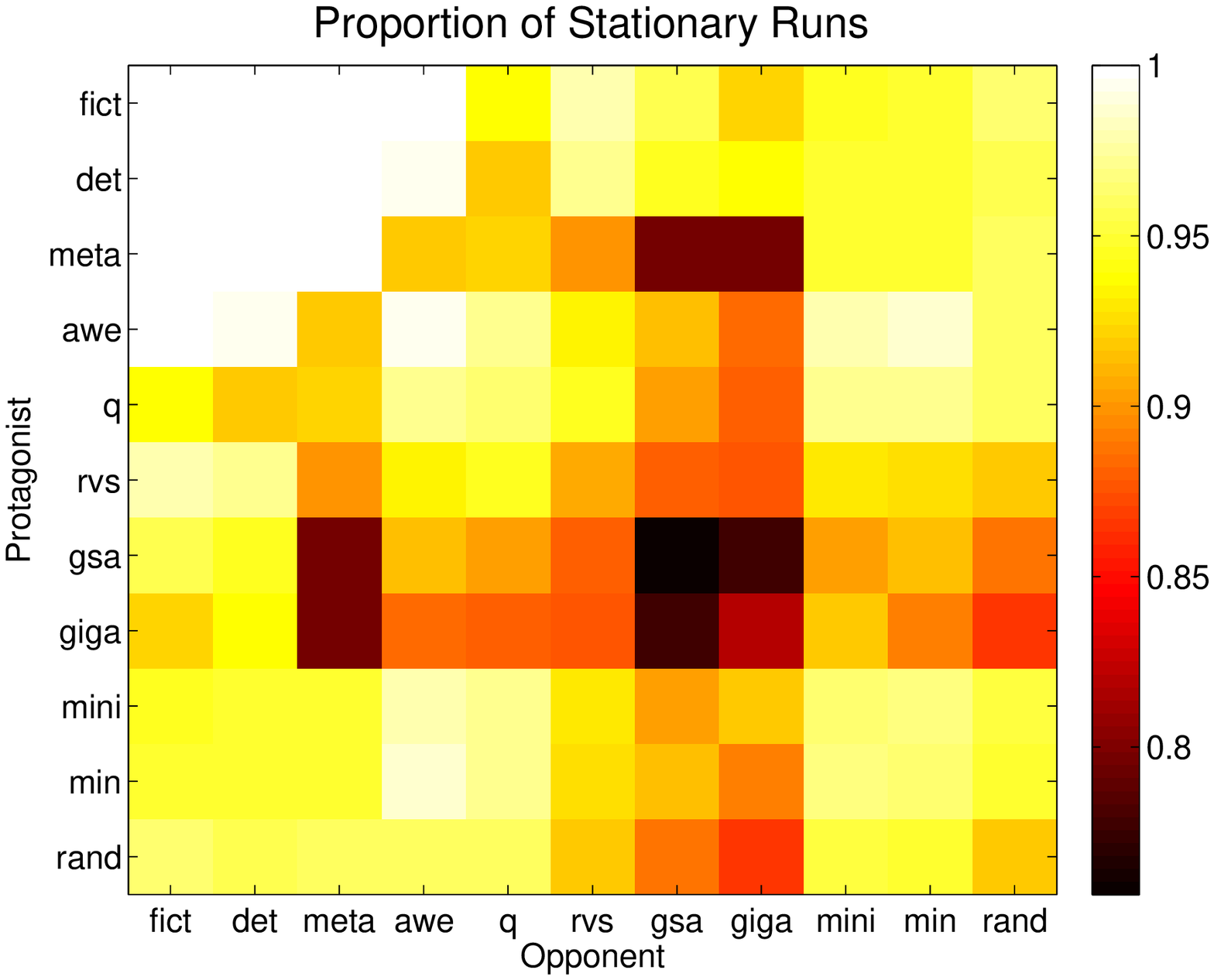}
	\caption[Proportion of stationary runs, blocked on opponent]{Proportion of stationary runs, blocked on opponent. This intensity map is symmetric; we removed redundant entries for clarity.}
\note{I think it would be better to fill in the whole square; symmetry makes it easier to read.}
	\label{fig:aa_heatmap_stability}
\end{minipage}
\hfill
\begin{minipage}[t]{0.45\linewidth}
		\includegraphics[width=1\textwidth]{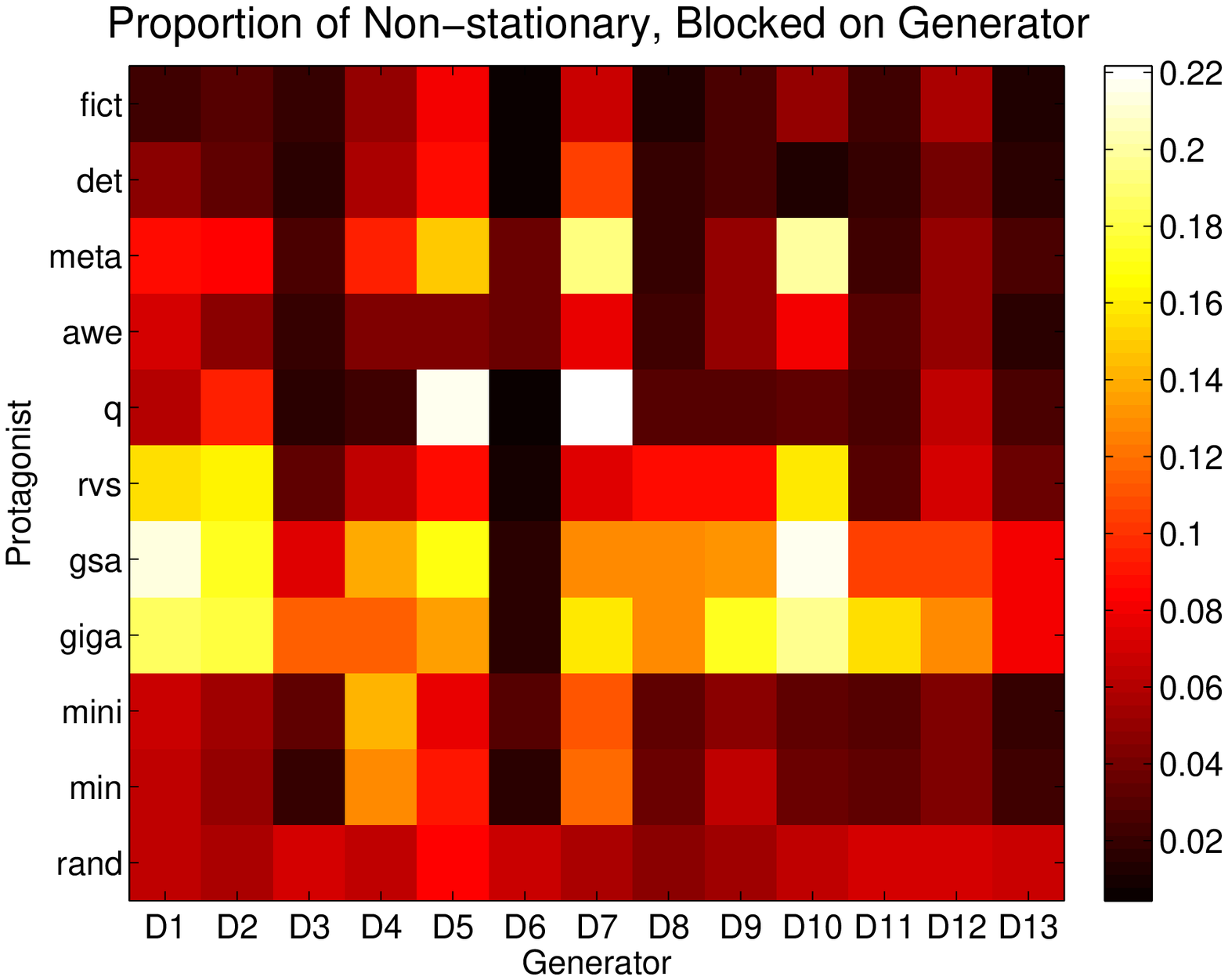}
	\caption{Proportion of non-stationary runs, blocked on generator and protagonist}
		\label{fig:ag_heatmap_stability}
\end{minipage}
\end{zfigure}

\Meta, \determined, \fp\ and \awesome\ were, for the most part, quite good at achieving stationarity. \Meta\ and \fp\ were particularly strong against each other, and always reached a stationary strategy profile. The only exception to the rule of stability in this group was \awesome\ vs. \meta; this pairing was unstable in $10.3\%$ of runs. We are not sure why this occurred, but conjecture that it arose because of the discrete behavioral changes that both algorithms undergo when their internal states are updated.

There were a number of problem generators for the different algorithms (see Figure~\ref{fig:ag_heatmap_stability}). For example: generators D1, D2, and D10 created instances that were particularly difficult for the gradient algorithm in terms of strategic stability; \Q\ was weak on both D5 and D7; and \meta\ tended to be unstable on D5, D7 and D10. However these unstable instances were rare regardless of the algorithm pairing. The vast majority of runs found a stationary strategy profile. Even \giga, which was the algorithm least likely to stabilize, found stationarity in $87.0\%$ of its runs (see Figure~\ref{fig:bar_stability}).

\subsection{Convergence to Stage-Game Nash Equilibrium}
\label{sec:ne_convergence}

Stable runs are those that converged to any strategy; we now consider which of these selected a (possibly mixed-strategy) stage-game Nash equilibrium. For some algorithms, Nash equilibrium convergence was reasonably common. \Awesome\ converged in $54.3\%$ of its runs, and \determined\ converged in $53.1\%$ of its runs. \Determined\ was better than \awesome\ at converging to a Pareto-optimal Nash equilibrium (a Nash equilibrium that was not Pareto-dominated by any other Nash equilibrium). \Awesome\ most frequently converged to a Pareto-dominated equilibrium. This was likely influenced by the way that our implementation of \awesome\ picked its `special' equilibrium:\footnote{The original paper, \inlinecite{conitzer07}, left the method of picking the `special' equilibrium unspecified.} the first equilibrium found by the Lemke-Howson algorithm, without attention to whether it was, e.g., Pareto-dominated. \awesome\ also tended to attain lower reward when it converged to a Pareto-dominated Nash equilibrium than when it did not converge or converged to a non-dominated Nash equilibrium.

Figure~\ref{fig:bar_sp_stability} shows the extent to which each algorithm converged to a stage-game Nash equilibrium in self play. Notice how often determined converged: this indicates that the games we studied often possessed one Nash equilibrium that was the best for both agents. Indeed, we can see that a surprisingly high number of games had a \emph{unique} stage-game Nash equilibrium ($58.5\%$). We expect that convergence results would look qualitatively different with generators that were much less likely to produce games with unique equilibria.

\begin{zfigure}[t]
\begin{minipage}[t]{0.45\linewidth}
					\includegraphics[width=1\textwidth]{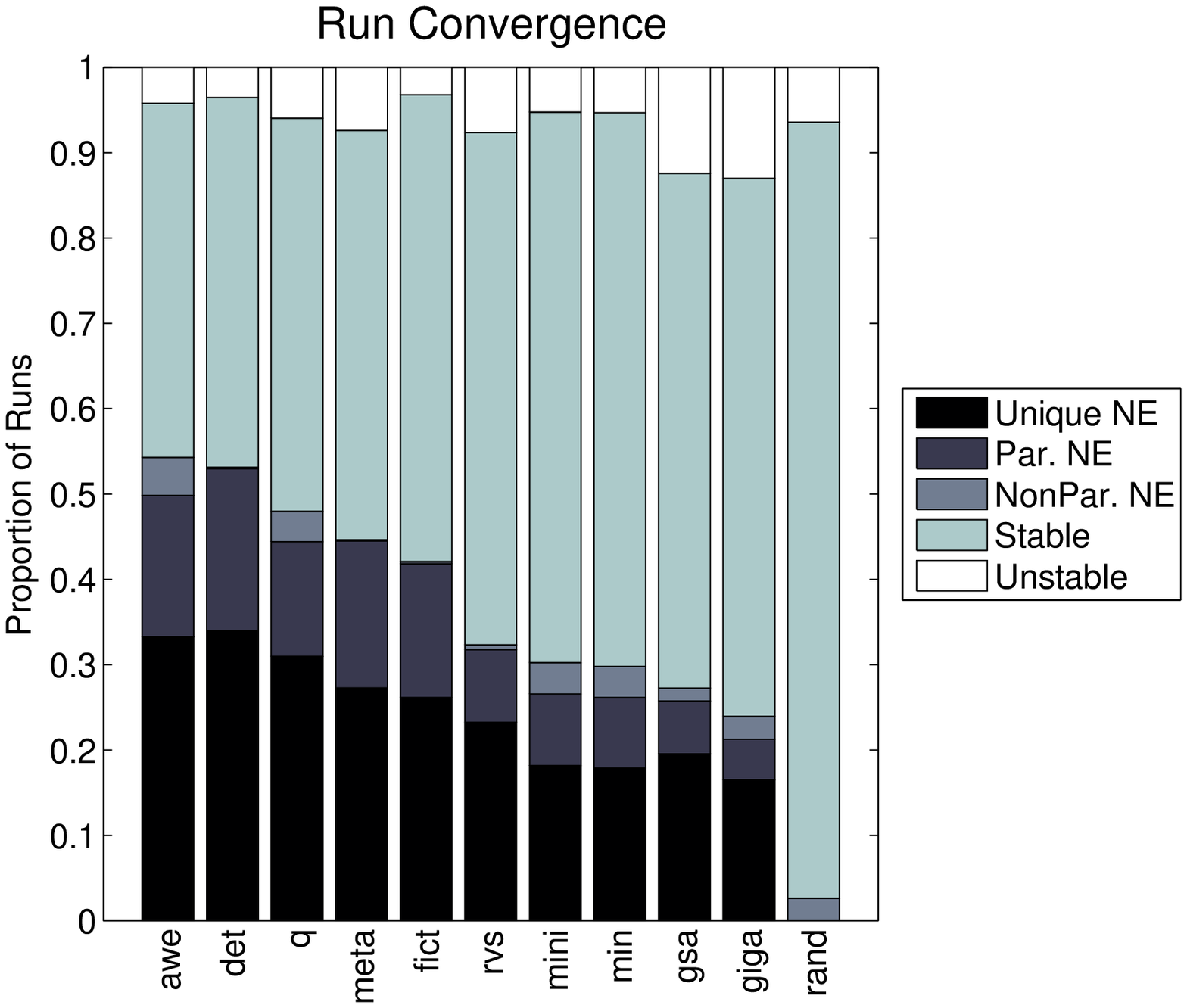}
	\caption[Convergence proportions]{The proportion of runs that were stationary, converged to a non-Pareto-optimal Nash equilibrium, or converged to a Pareto-optimal Nash equilibrium.}
\note{The x axis labels run together in this figure, and in the next.}
		\label{fig:bar_stability}
\end{minipage}
\hfill
\begin{minipage}[t]{0.45\linewidth}
			\includegraphics[width=1\textwidth]{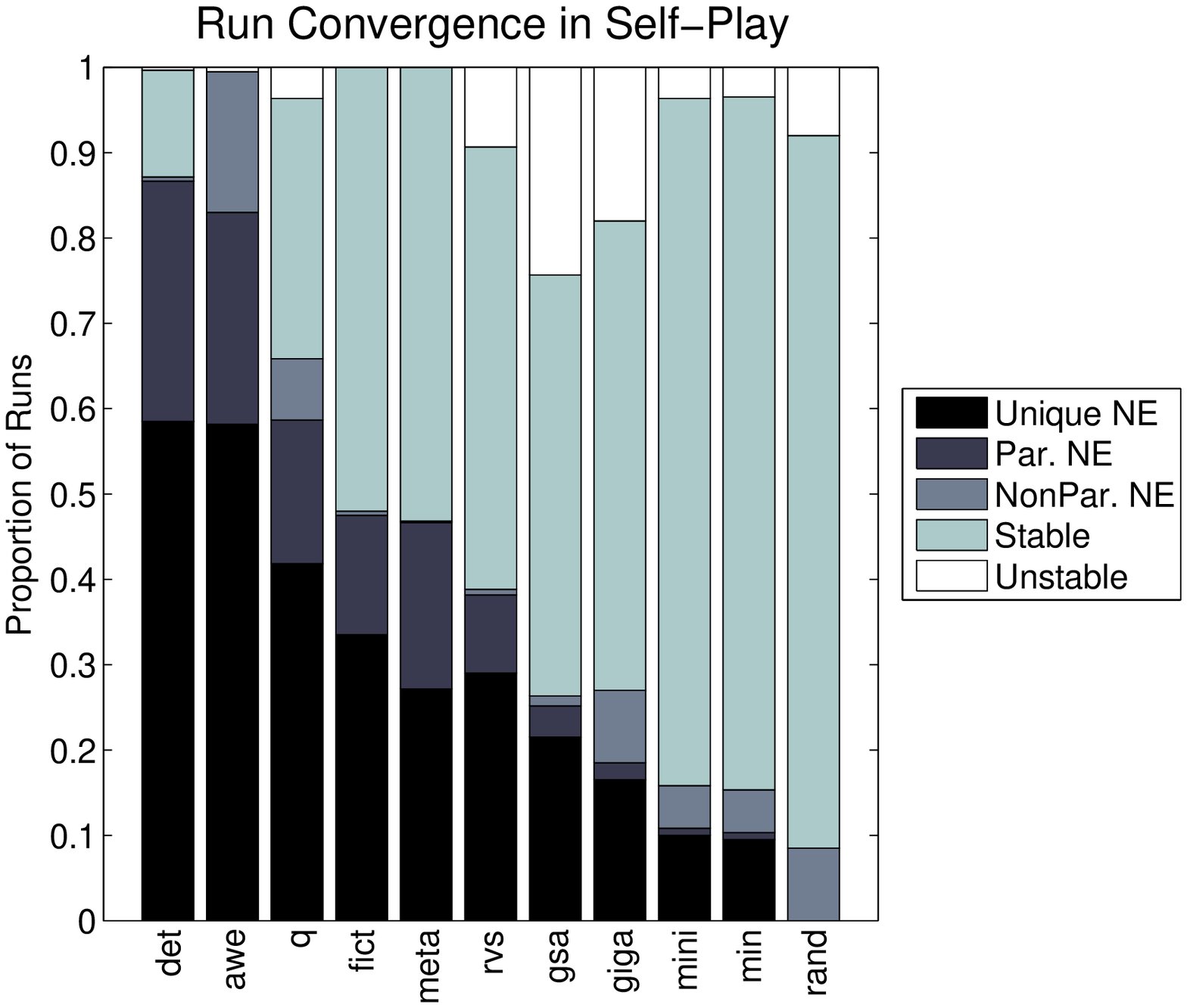}
	\caption[Self-play convergence proportions]{The proportion of self-play runs that were stationary, converged to a non-Pareto-optimal Nash equilibrium, or converged to a Pareto-optimal NE.}
\note{How can determined \emph{ever} converge to a non-Pareto-optimal NE in self play?}
\note{EPZ: how much detail do we want to spend on this? Can we just say ``the presence of non-Pareto-optimal NE in self play is caused by games where miscoordination just so happens to be close to a mixed strategy NE.''?}
		\label{fig:bar_sp_stability}
\end{minipage}
\end{zfigure}

Observe that \awesome\ nearly always converged. Recall that we previously found that \awesome\ received lower average reward in self-play than non-self-play runs (\S~\ref{sec:selfplay}). Now we can conclude that this failure to achieve high rewards was not due to a failure to reach equilibrium. An interesting modification of the \awesome\ algorithm would be to use its self-play machinery to converge to stable strategies that are not stage-game Nash equilibria, such as the socially-optimal outcome or the Stackelberg game equilibrium. The aim of this adjustment would be to improve self-play reward results while maintaining \awesome's resistance to exploitation by other algorithms.

\subsubsection{Links Between Nash Equilibrium Convergence and Reward}

Much work in the literature has aimed at MAL algorithms that converge to a stage-game Nash equilibrium. However, if the goal is high average reward, is such convergence desirable? More generally, is proximity to stage-game Nash equilibrium correlated with obtaining high reward?

\begin{ob}
Strategic proximity to stage-game Nash equilibrium was correlated with average reward for all algorithms and most algorithm--generator pairs.
\end{ob}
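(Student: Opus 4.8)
The plan is to establish this observation empirically, by fixing a quantitative notion of ``strategic proximity to a stage-game Nash equilibrium'' and then applying the Spearman rank correlation test of \S\ref{sec:spearman} between that quantity and average reward. First I would, for each game instance, recover the set of its stage-game Nash equilibria; these are already available because \determined\ and \awesome\ require them, and GAMBIT's Lemke-Howson implementation returns the extreme equilibria. For each run I would form the empirical action-frequency profile over the $10~000$ recorded iterations and define its \emph{distance} to equilibrium as the minimum vector distance from this profile to the equilibrium set, using the same distance and the same calibrated threshold $\theta = 0.02$ as in \S\ref{sec:threshold}. Runs that never settle are automatically assigned a large distance, so a single number folds together ``did not converge'' and ``converged far from any equilibrium''; ``proximity'' is then just the negation of this distance, and the hypothesis is that proximity correlates positively with reward (equivalently, distance correlates negatively).

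Next I would compute, for every one of the $72~600$ runs, the pair $(\text{distance to equilibrium},\ \bar r_i^{(T)})$. To justify the phrase ``for all algorithms,'' I would pool the $13~200$ runs in which each algorithm participated and run one Spearman test per algorithm, checking both that the result is significant at $\alpha = 0.05$ and that $\rho$ has the hypothesized (negative) sign. To justify ``most algorithm--generator pairs,'' I would repeat the test on each of the $11 \times 13 = 143$ (algorithm, generator) cells and report the fraction in which the correlation is significant and in the expected direction, calling out the exceptions. I expect D10 (\zgame{Traveler's Dilemma}) to be the prominent exception, since its unique Nash equilibrium is near the worst outcome of the game and \S\ref{sec:regret} already found its regret--reward relationship reversed.

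The main obstacle is defining the distance-to-equilibrium measure so that the conclusion reflects the data rather than the definition. The Nash correspondence of a bimatrix game is a finite union of polytopes, not a finite set of points, so ``distance to the nearest extreme equilibrium'' over-estimates the true distance to the equilibrium set; I would either compute, within each maximal Nash subset, the distance to the convex hull of its extreme points, or simply note that the over-estimate can only attenuate a correlation, never create one, so the observation survives. A secondary concern is multiple comparisons across the $143$ cells: a Holm or Bonferroni correction, or reporting the significant-cell count against the expected number of false discoveries, keeps the ``most pairs'' claim honest. Finally, proximity and reward are drawn from the same block of recorded iterations and so are not independent observations, but since Spearman's test concerns only monotone association and not a generative model, this does not undermine the test's validity.
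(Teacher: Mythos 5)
Your proposal matches the paper's approach: the paper likewise computes the $\ell_\infty$-distance from the empirical play to the closest stage-game Nash equilibrium and applies Spearman's rank correlation test at $\alpha = 0.05$ against average reward, both pooled per algorithm and per algorithm--generator pair, finding negative correlation for all algorithms and most pairs with the exceptions being D6, D12, and especially D10 (positive correlation), just as you anticipated for D10. Your extra refinements (distance to the full equilibrium polytopes, multiple-comparisons correction, folding non-converged runs into a large distance) go beyond what the paper does but do not change the substance of the argument.
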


For all algorithms, reward was negatively correlated with $\ell_\infty$-distance to the closest Nash equilibrium (Spearman's rank correlation test; $\alpha = 0.05$). Furthermore, most algorithms were negatively correlated even on a per-generator basis (Figure~\ref{fig:ag_heatmap_spearman_ne_reward}). The most notable exceptions were D6, D12, and (especially) D10, where we saw \emph{positive} correlations between distance and reward.

\begin{zfigure}[t]
\begin{minipage}[t]{0.45\linewidth}
				\includegraphics[width=\textwidth]{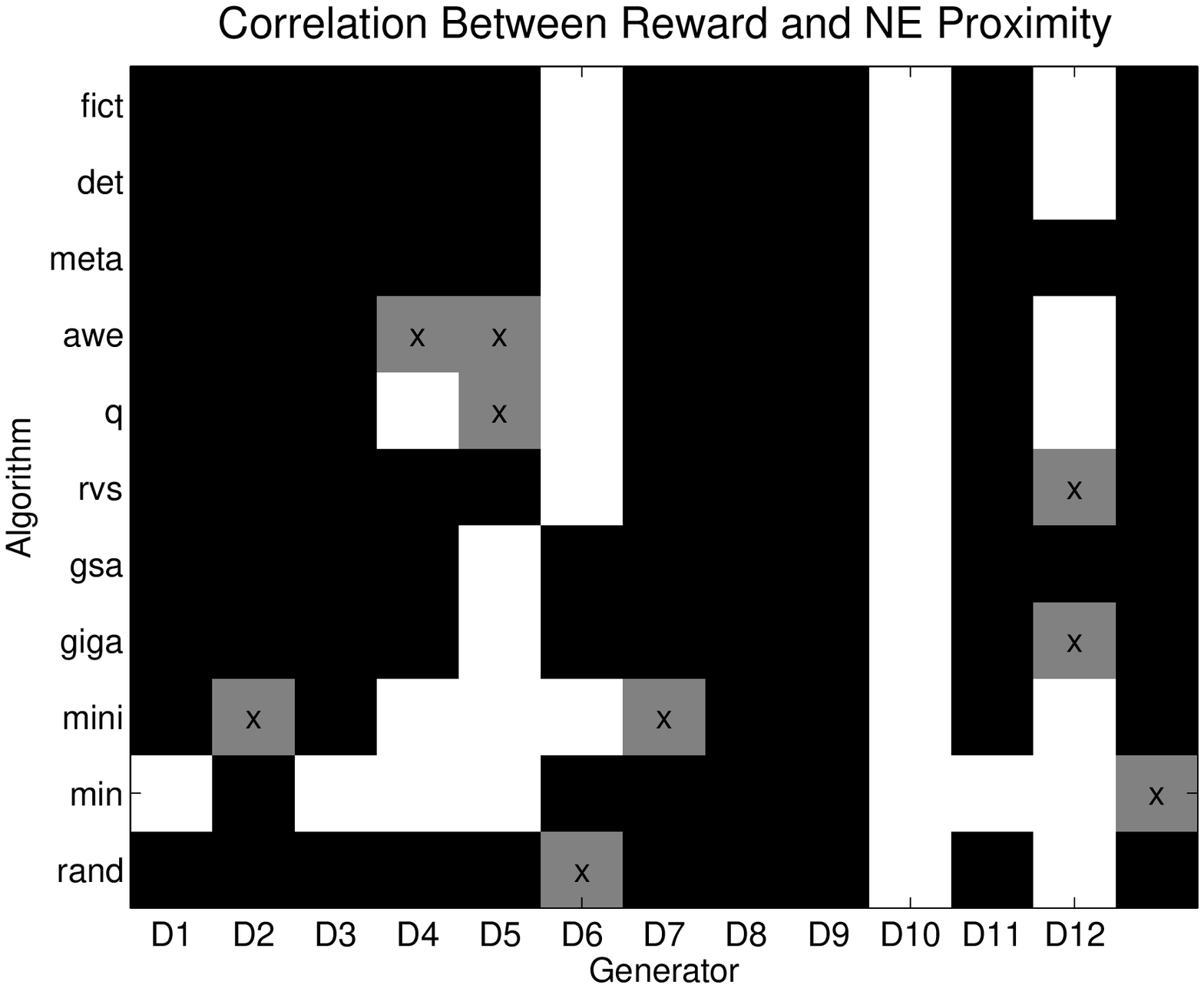}
	\caption[The sign of correlation between reward and $\ell_\infty$-distance to the closest Nash equilibrium]{The sign of correlation between reward and $\ell_\infty$-distance to the closest Nash equilibrium for each algorithm and each game generator. A white cell indicates positive correlation, a black cell indicates negative correlation, and a grey cell with an `x' indicates insignificant correlation.}
	\label{fig:ag_heatmap_spearman_ne_reward}
\end{minipage}
\hfill
\begin{minipage}[t]{0.45\linewidth}
			\includegraphics[width=1\textwidth]{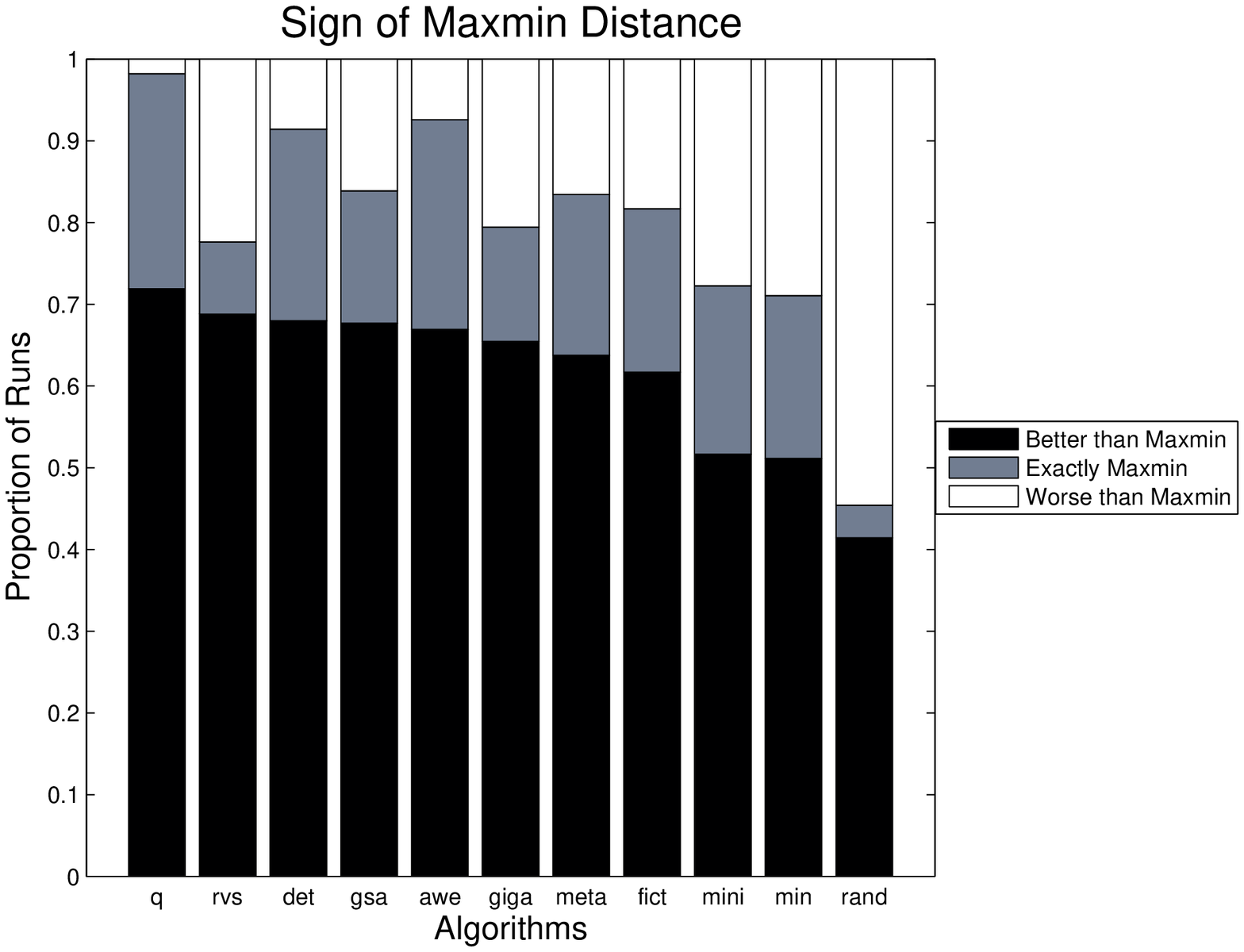}
	\caption{The sign of the maxmin distance of each run, by algorithm.}
		\label{fig:bar_minimax_frequency_grand}
\end{minipage}
\end{zfigure}

\subsection{Maxmin Distance}
\label{sec:maxmin}
An agent's maxmin value is the largest amount that it can guarantee itself regardless of its opponent's behavior. Thus, achieving average reward of at least this amount is widely seen as a necessary condition for sensible MAL behavior. Furthermore, the famous Folk Theorem of game theory demonstrates that enforceable payoffs (those with non-negative maxmin distances) are precisely those payoffs that can be achieved in equilibrium of an infinitely repeated game. We build on our results here to investigate this notion of convergence in \S\ref{sec:rne_convergence}.
In this section we consider the difference between average reward and the max\-min value of the underlying game instance:
\begin{equation}
MaxminDistance(\vec{r}_i) = \frac{\sum_{t=1}^{T}r_i^{(t)}}{T} - \max_{\sigma_i \in \prod{A_i}}\min_{\sigma_{-i} \in \prod{A_{-i}}}u(a_i,a_{-i}).
\label{eqn:maxmin_distance}
\end{equation} 
\note{You write this in terms of pure strategies. However, didn't you calculate it in terms of LPs? This would mean you should use strategies instead of actions, and maybe make a remark to this effect.}
We call this difference \textit{maxmin distance}, noting that it can be negative.

\begin{ob}
\Q\ attained an enforceable payoff more frequently than any other algorithm.
\end{ob}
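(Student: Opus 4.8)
The plan is to establish this exactly as the other observations in this section are established: directly from the experimental data, backed by a bootstrap significance check. First I would, for each of the eleven algorithms, take the full set of runs in which that algorithm participated (over all generators and all opponents of the grand distribution) and, for each run, evaluate the sign of $MaxminDistance(\vec{r}_i)$ from Equation~(\ref{eqn:maxmin_distance}): the algorithm attained an enforceable payoff on that run exactly when its average recorded reward was at least the maxmin value of the underlying game instance. The maxmin value here is the true (mixed) one, computed by the same linear program used for \minimax\ in \S\ref{sec:algorithms} but applied to the actual normalized payoff matrix rather than to learned $Q$-values. Bucketing every run into negative, zero, and positive maxmin distance and normalising produces, for each algorithm, the fraction of enforceable runs; these are precisely the quantities plotted in Figure~\ref{fig:bar_minimax_frequency_grand}.

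Next I would read off that tally and verify that \q's fraction of non-negative-distance runs is the largest among the eleven algorithms, treating ``enforceable'' as $MaxminDistance \ge 0$ so that the (possibly substantial) mass of runs sitting at zero counts as success. To make ``more frequently'' a claim that would survive re-running the experiment, I would then bootstrap as elsewhere in the paper: repeatedly subsample $6\,600$ of \q's $13\,200$ runs and $6\,600$ of the nearest competitor's runs (say $2\,500$ times), form the two distributions of the ``fraction enforceable'' statistic, and confirm that their $95\%$ percentile intervals do not overlap. If they do overlap, the honest conclusion is that \q\ was \emph{among} the best rather than strictly the best, and the observation should be weakened accordingly, exactly as is done for raw reward in \S\ref{sec:reward}.

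The main obstacle I expect is not the counting but this significance step together with the treatment of near-zero runs. Algorithms such as \minimax\ and \minimaxidr\ explicitly aim to play a maxmin strategy, albeit one derived from noisy $Q$-values, so a sizeable share of their runs may land just above or just below a maxmin distance of zero; whether \q\ is \emph{significantly} ahead can therefore depend on exactly where that boundary is drawn and on residual noise over the $10\,000$ recorded iterations. A secondary check worth carrying out is a per-generator breakdown analogous to Table~\ref{tab:reward_generator_best}, to confirm that \q's advantage on the pooled grand distribution is not merely an artifact of one or two generators contributing a disproportionate share of enforceable runs.
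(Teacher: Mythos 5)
Your proposal matches the paper's approach: the observation is supported simply by tallying, for each algorithm, the fraction of its runs with non-negative maxmin distance (computed against the true LP-derived maxmin value) and reading off that \q\ is unenforceable in only $1.8\%$ of runs versus $7.4\%$ for the next-best algorithm, \awesome---exactly the counts shown in Figure~\ref{fig:bar_minimax_frequency_grand}. Your added bootstrap significance check and per-generator breakdown are extra caution consistent with the paper's methodology elsewhere, but the paper itself rests the claim on the sizeable gap in raw frequencies.
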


\begin{zfigure}[t]
\begin{minipage}[t]{0.45\linewidth}
		\includegraphics[width=1\textwidth]{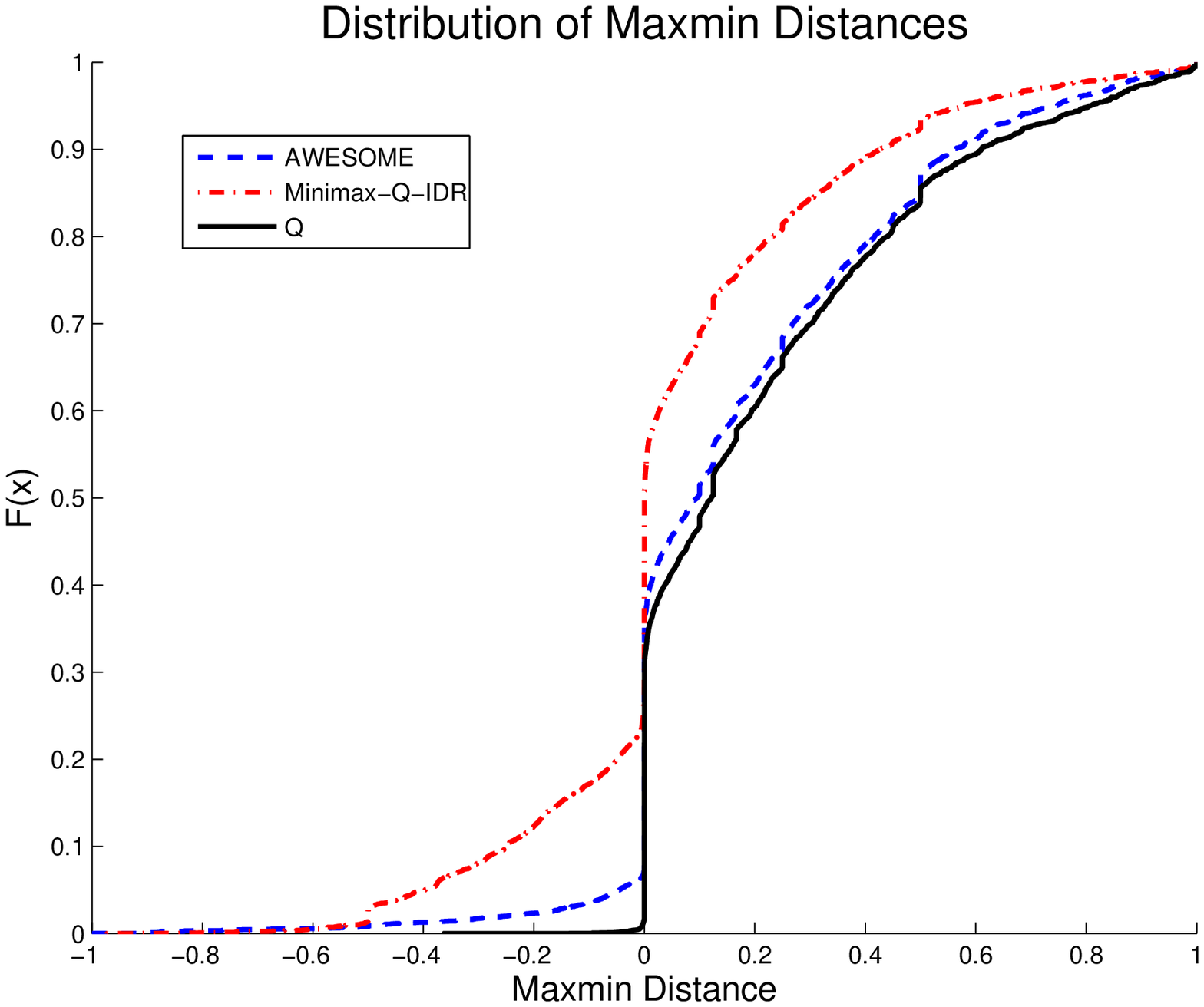}
	\caption{The distribution of maxmin distances for \awesome, \minimax\ and \q.}
		\label{fig:ecdf_minimax_grand}
\end{minipage}
\hfill
\begin{minipage}[t]{0.45\linewidth}
			\includegraphics[width=1\textwidth]{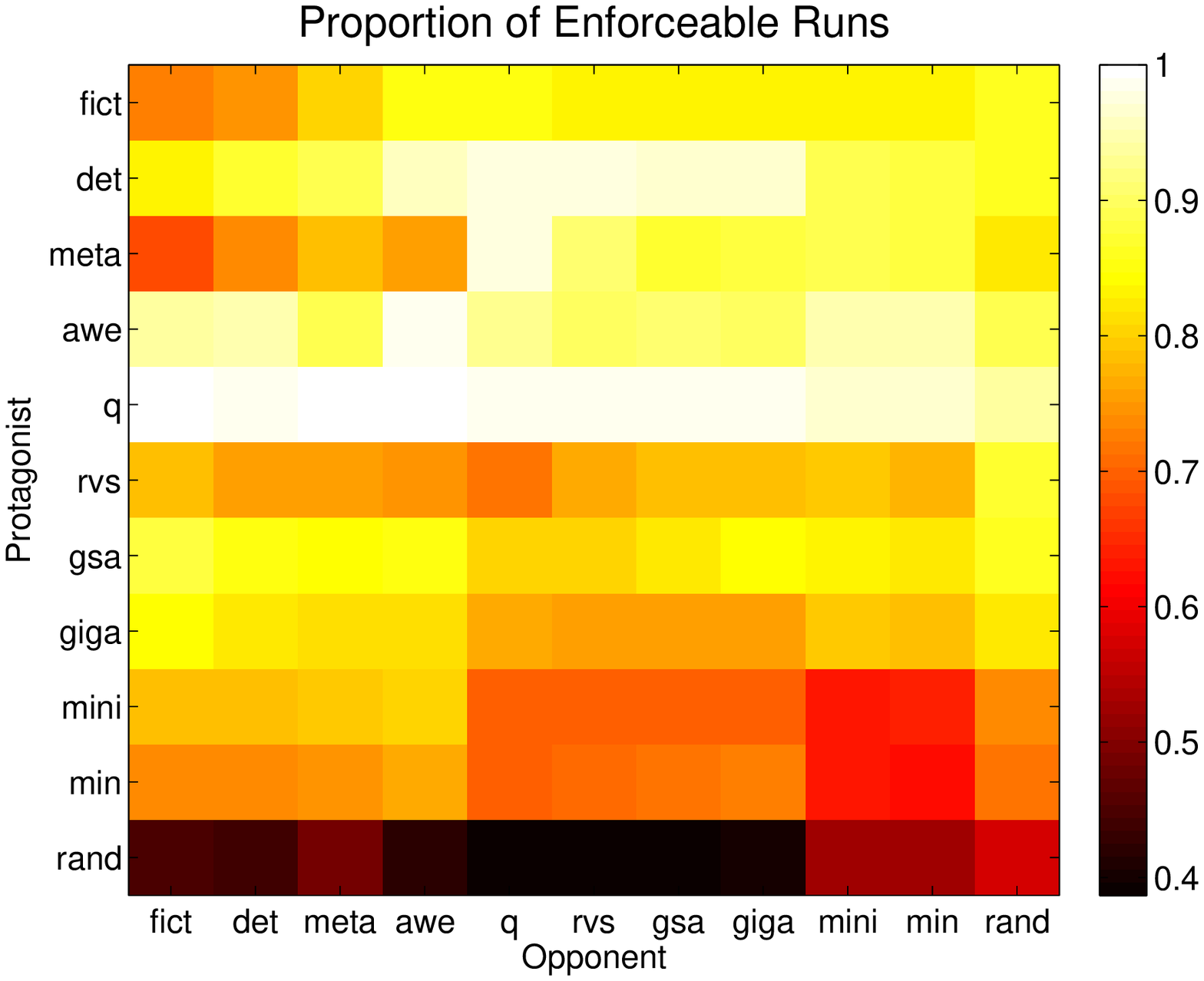}
	\caption{The proportion of enforceable runs, blocked by opponent.}
		\label{fig:aa_heatmap_minimax_enforce}
\end{minipage}
\end{zfigure}

\note{Asher: I don't think that figures 27 or 28, fig:ecdf minimax grand and fig:aa heatmap minimax enforce are referenced in the paper anywhere.}

\Q\ most frequently attained an enforceable payoff, with a negative maxmin distance in only $1.8\%$ of its runs (Figure~\ref{fig:bar_minimax_frequency_grand} and Figure~\ref{fig:ecdf_minimax_grand}). As can be seen in Figure~\ref{fig:aa_heatmap_minimax_enforce}, the runs on which \q\ failed to attain an enforceable payoff mostly came from either D4 (\zgame{Dispersion Game}; $37.6\%$ of \q's unenforceable runs) or D13 (\zgame{Two by Two Game}; $33.3\%$). They also occurred predominantly against \random\ ($29\%$ of the unenforceable runs), \minimax\ ($17.3\%$) and \minimaxidr\ ($16.0\%$). The next-best algorithm, \awesome, attained enforceable payoffs considerably less often, with a negative maxmin distance in $7.4\%$ of its runs.

After \random, \Minimax\ and \minimaxidr\ were the \emph{least} likely to attain enforceable payoffs, failing to do so in $28.9\%$ and $27.7\%$ of their runs respectively. This is interesting because these algorithms explicitly attempt to do well against adversarial opponents. One possible explanation is that they may have trouble learning accurate payoffs \note{we don't tell them the true payoffs? I thought we said in section 3 that we did for all algorithms...}, leading them to have difficulty obtaining their maxmin values. \note{I don't find this explanation very convincing. Other possibilities: (a) maybe they play ``pure-strategy maxmin'', but we're measuring mixed-strategy maxmin, leading them to be below; (b) maybe they really do play mixed strategy maxmin, but due to sampling error their average reward is a hair below their maxmin value.}

\Minimax\ and \minimaxidr\ were especially poor in self play, where conservative play can impair payoff learning. There is also a greater proportion of enforceable runs on $2 \times 2$ games ($75.2\%$) than on $10 \times 10$ games ($68.5\%$)---larger games have more payoffs to learn. Working on a more sophisticated exploration scheme looks like an especially promising place to improve our implementation of \minimax\ and its variant.

While \q\ was successful against a broad range of opponents, some other algorithms were less consistent. For example, \meta\ was quite good against all opponents except for \fp, \determined, \awesome\ and itself. \Meta\ was especially bad against \fp; in this pairing only $68.0\%$ of \meta's runs were enforceable. Compare this to \meta's excellent performance against \q, where it attained enforceable payoffs in $97.7\%$ of it runs. \FP\ also had trouble playing against \meta, \determined\ and itself. On the other hand, neither \awesome\ nor \determined\ shared this problem.

\begin{zfigure}[t]
\begin{minipage}[t]{0.45\linewidth}
	\includegraphics[width=1\textwidth]{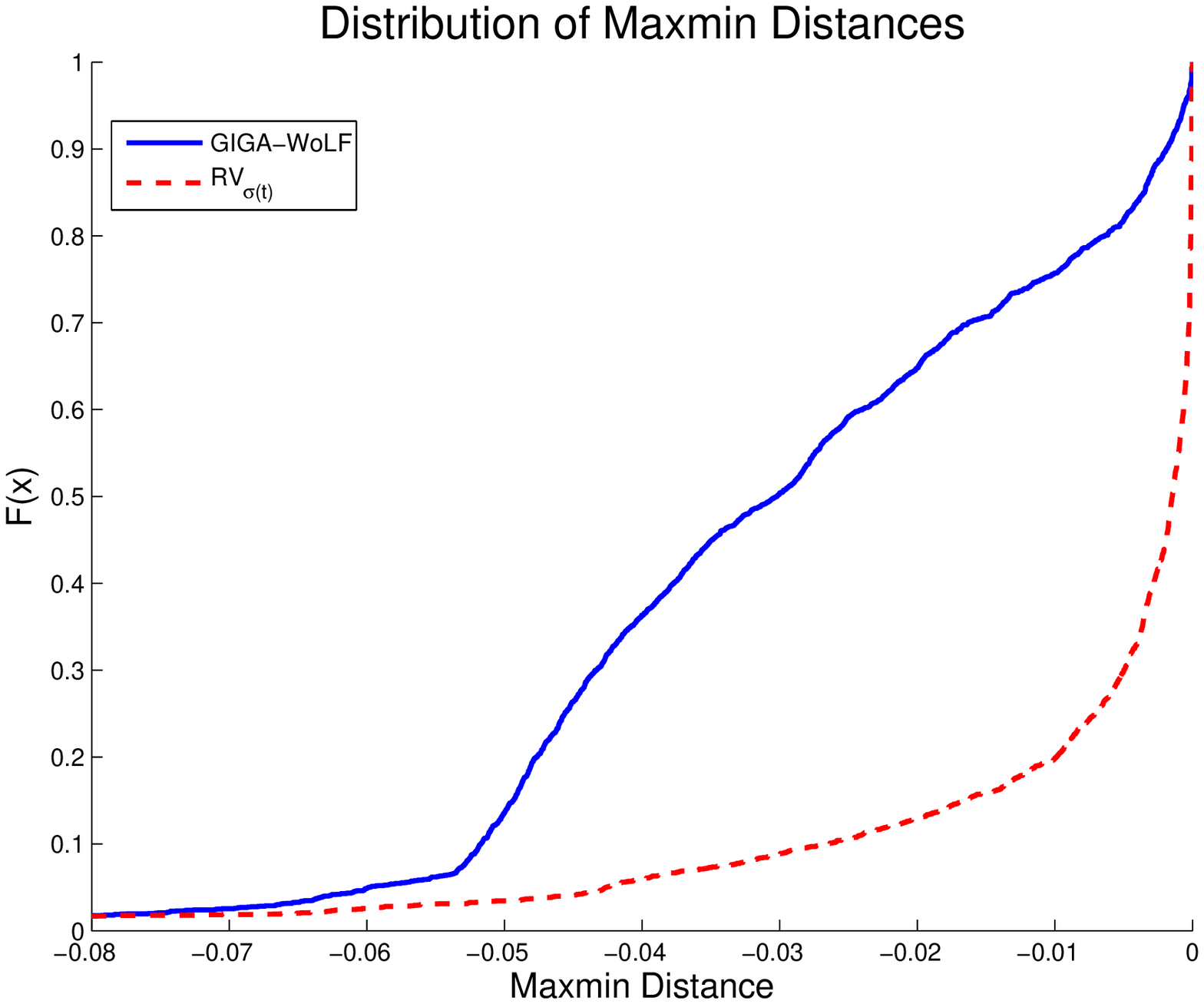}
	\caption{The distribution of negative maxmin distances for \giga\ and \rvs.}
		\label{fig:ecdf_minimax_rvs_normalize}
\end{minipage}
\hfill
%
\begin{minipage}[t]{0.45\linewidth}
					\includegraphics[width=1\textwidth]{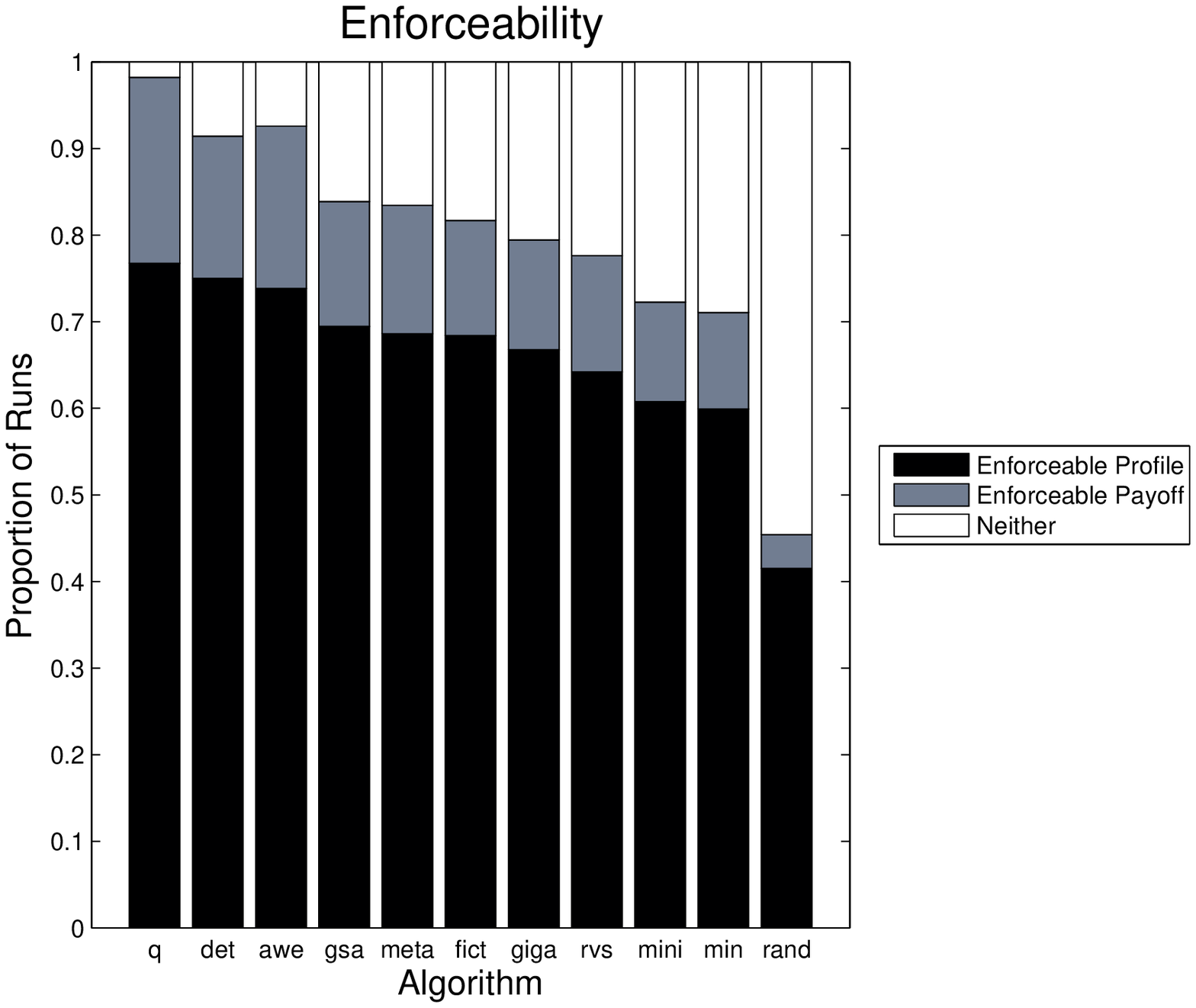}
	\caption{Proportion of PSMs with enforceable payoffs and payoffs profiles achieved, by algorithm.}
		\label{fig:bar_repeated_game}
\end{minipage}
\end{zfigure}

\rvs\ had problems attaining enforceable runs too, and although it received payoffs well above the maxmin value frequently (it had the second highest proportion of runs with strictly positive distances at $68.8\%$) there were also a large number of instances where \rvs's maxmin distance was close to, but below zero. This contrasts with \giga, which had fewer non-enforceable runs with greater negative minimax distance (see Figure~\ref{fig:ecdf_minimax_rvs_normalize}). We conjecture that this phenomenon occurred because \rvs\ maintains a small amount of probability mass on all of its actions, causing it to `tremble'. More specifically, \rvs, like all gradient algorithms, updates its mixed strategy by moving along the reward gradient. When the updated vector does not sum to one, it must be mapped back to the probability simplex. \rvs\ does this by normalizing the updated vector, while \gsa\ and \giga\ use a retraction operator that tends to drop actions from the mixed strategy's support (see \S~\ref{gradient-algs}). We conjecture that modifying \rvs\ to use \giga's retraction operator would improve \rvs's ability to achieve enforceable payoffs.

\subsubsection{Links Between Maxmin Distance and Reward}

Is there a connection between enforceable runs and high average rewards? It would be strange if some such relationship did not exist, since enforceability implies reward higher than the maxmin value. Indeed, we did observe that for all algorithms, maxmin distance was positively correlated with average reward. (Spearman's rank correlation test (\S\ref{sec:spearman}); $\alpha = 0.05$ significance level). On a per-generator basis, we again largely observed significant positive correlations.  
There were two deviations from this pattern. First, we found no significant correlation for half of the algorithms on D11, and for \minimax\ on D3. Second, there was a significant \emph{negative} correlation for \minimax\ on D11, though \minimaxidr\ still exhibited significant positive correlation.

\subsection{Convergence to Repeated-Game Nash Equilibrium}
\label{sec:rne_convergence}

In \S\ref{sec:ne_convergence} we considered algorithms' tendencies to converge to equilibria of the stage game. The algorithms actually played a repeated game, however. We now turn to analyzing this repeated game's properties. The payoff profiles achievable in Nash equilibrium of a repeated game are precisely the enforceable profiles (see, e.g., \inlinecite{OsbRub}). In order to determine whether a given strategy profile is an equilibrium of a repeated game, it is also necessary to consider how these strategies behave off the equilibrium path (e.g., how they punish deviations by the other agent). While the algorithms that we studied lack punishment mechanisms, it is still meaningful to assess how frequently they converged to payoff profiles consistent with repeated game Nash equilibria. We therefore build on the results from \S~\ref{sec:maxmin}, asking how often \emph{both} algorithms achieved enforceable payoffs.

\begin{ob}
\Q\ was involved in matches whose payoff profiles were consistent with a repeated game Nash equilibrium more often than any other algorithm.
\end{ob}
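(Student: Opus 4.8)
The plan is to derive this directly from the per-run enforceability data already computed in \S\ref{sec:maxmin}, together with the characterization (cited from \inlinecite{OsbRub}) that a payoff profile is consistent with a Nash equilibrium of the infinitely repeated game exactly when it is enforceable for both players. First I would, for every one of the $72\,600$ recorded runs, form the indicator that \emph{both} the protagonist and the opponent obtained a non-negative maxmin distance; this is the symmetric, match-level event we want to track, and it is immediate from the two maxmin-distance numbers already stored for each run. Then, for each of the eleven algorithms $A$, I would compute the fraction of the $13\,200$ runs in which $A$ participated (over all opponents, all thirteen generators, and all game sizes) for which this joint-enforceability indicator equals one. This yields one number per algorithm; the claim is that \q's number is the largest, so I would sort the eleven values and read off the top entry, cross-checking against Figure~\ref{fig:bar_repeated_game}.

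The next step is to show that the lead is not an artifact of sampling. Following \S\ref{sec:boots}, I would bootstrap: repeatedly subsample $6\,600$ of each algorithm's runs with replacement, recompute the joint-enforceability fraction on each subsample, and build the $95\%$ percentile interval of this estimator for every algorithm. If \q's interval lies entirely above the intervals of all ten competitors, the ranking is significant; if it overlaps the runner-up (which, from \S\ref{sec:maxmin}, is likely \awesome), I would instead report \q as tied for best and weaken the statement accordingly, or examine per-generator breakdowns to see whether the overlap is driven by a single distribution (e.g.\ D4, where so many runs already earn reward $1$ and hence trivially clear the maxmin bar).

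The main obstacle I expect is interpretive rather than computational: the joint-enforceability event is a property of the \emph{match}, not of one algorithm's play, so attributing it to \q really means ``\q tends to be paired into outcomes that are enforceable for both sides,'' which conflates \q's own near-perfect individual enforceability (only $1.8\%$ unenforceable runs, from \S\ref{sec:maxmin}) with the behavior it induces in its opponents. I would therefore want to check that \q's advantage here genuinely exceeds what one would predict from the marginal enforceability rates alone---for instance by comparing the observed joint fraction against the product of the two marginal rates---so that the observation adds something beyond \S\ref{sec:maxmin}. A secondary subtlety is self play: the $A$-versus-$A$ run is a single match in which ``both enforceable'' collapses to ``$A$ enforceable against a copy of itself,'' and I would make sure this case is counted exactly once and consistently in every algorithm's tally and in every bootstrap replicate.
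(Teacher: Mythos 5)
Your proposal follows essentially the same route as the paper: it treats "consistent with a repeated-game Nash equilibrium" as both players attaining non-negative maxmin distance (via the folk-theorem characterization cited from \inlinecite{OsbRub}), computes each algorithm's fraction of such runs from the \S\ref{sec:maxmin} data, and reads off that \q\ is highest (the paper reports $76.8\%$ for \q, versus $75.0\%$ for \determined\ and $73.8\%$ for \awesome, as shown in Figure~\ref{fig:bar_repeated_game}), while also noting, as the paper does, that joint enforceability is a property of the match and so depends on the opponent's behavior. The bootstrap significance check and the marginal-versus-joint comparison you add are extras beyond what the paper reports for this observation, but the core argument is the same and is correct.
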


Of the algorithms that we examined, \q\ most frequently had runs that were consistent with a repeated game Nash equilibrium (Figure~\ref{fig:bar_repeated_game}). It was consistent with a repeated game equilibrium in $76.8\%$ of its runs. \Determined\ and \awesome\ were the next most frequently consistent ($75.0\%$ and $73.8\%$ of their runs respectively). Overall, consistency with a repeated game Nash equilibrium was common, but not universal. 
It is worth emphasizing that an enforceable payoff profile depends on both agents' actions, and so the behavior of weak agents like \random\ lowered the scores for stronger opponents.

\section{Discussion and Conclusion}
In this article we described MALT, a standardized testbed for multiagent experimentation. This testbed allows researchers to focus on experimental design and analysis instead of implementation. We also presented an in-depth analysis of a large experiment we conducted ourselves using MALT.

 
The most striking conclusion from our experiment was that \q\ achieved consistently excellent results, in many senses outperforming algorithms based on deeper insights about the multiagent setting (e.g., \giga, \awesome, and \meta). We were surprised by this finding, since we had taken for granted the idea that modern, multiagent algorithms would do better in a repeated-game environment than a classical, single-agent algorithm. The evidence we have shown to the contrary suggests that it should be possible to considerably improve the empirical performance of MAL algorithms. We suggest four areas in which efforts could be worthwhile.

First, a more experimentally-driven focus seems crucial. Our experiment was large, but there are many empirical questions that it does not answer. Some promising future directions include:
\begin{packed_list}
	\item More examination of the relationship between performance and game properties like size;
	\item More detailed investigation of algorithm be\-hav\-ior on instances from single generators;
    \item Investigation of additional algorithms like Hyper-Q \cite{tesauro04} and Nash-Q \cite{hu98};
	\item Study of $N$-player repeated games and stochastic games (along the lines of \inlinecite{vu06}).
\end{packed_list}

Second, the more sophisticated algorithms have many tunable parameters. Finding optimal settings for them was beyond the scope of our paper, and we instead relied on published parameter settings. Nevertheless, it is possible that some algorithms would have performed considerably better if they had been configured differently. Indeed, \q\ had only three parameters and all were easy to set, which might partly explain its strong performance. Tuning the other algorithms would require considerable experimental effort; hopefully MALT will be of assistance. There are some interesting questions to ask:
\begin{packed_list}
	\item Is one parameter setting good for many problems, or is it the case that some parameter settings are effective on some matches and poor on others?
	\item Which of an algorithm's (e.g., \meta's) parameters are the most important?
	\item Does \awesome's performance change radically when it selects the socially optimal Nash equilibrium as its special equilibrium? How about the `Stackelberg' equilibrium?
	\item For gradient algorithms, is it better to perform retraction or normalization?
	\item Do parameter settings that yield high reward also yield low regret?
\end{packed_list}

Third, we presented two different tweaks to existing algorithms: \minimaxidr\ and \gsa. These algorithms offered several improvements over their ``parent'' algorithms, and in many cases probabilistically dominated them. It would be interesting to explore similar modifications of other existing algorithms. 

Finally, managing a portfolio of existing algorithms seems like a promising approach for designing algorithms with good empirical properties. \awesome\ and \meta\ can both be seen as portfolio algorithms: they switch between different components based on the opponent's behavior. Much remains to be learned about the best framework for building portfolio algorithms, especially if we insist on frameworks that do not require hand-construction of a portfolio. Again, this direction of research invites a host of empirical questions. What features of a game and of game play should a portfolio track? In what situations does adding an algorithm to a portfolio improve performance?



\section*{Acknowledgements}

Thanks to Nando de Freitas for his involvement in the early stages of this project, and for helping us to develop the argument in Appendix A. Thanks to Holger H. Hoos and Yevgeniy Vorobeychik for feedback on drafts, and to anonymous reviewers at MLJ for helpful suggestions, and to David Ludgate for help with coding. Finally, thanks to Vincent Conitzer for providing us with code for \awesome{} and helping us to use it.
\note{Other acks? Edit as you see fit...}

\section*{Appendix A.~~~Independent vs. Stratified Sampling}
\label{app:stratify}
For all of the experiments described in this article, we were concerned with the expected performance of a match, denoted by $f(\mu,\zeta)$. Here, $f$ is some metric function, $\mu \sim M$ is a match, and $\zeta \sim Z$ is a random seed that completely determined any non-deterministic behavior in both algorithms. The game instance/seed pairing uniquely define a run.
When designing our experiment, we needed to choose whether to stratify runs based on the match. For instance, if we had enough time to run $100$ simulations, we could either have sampled a single run on $100$ matches, or $10$ runs on $10$ matches. Stratification clearly yields more detailed data about the role that randomization plays in each match. However, for estimating common summary statistics---means and quantiles---stratification should be avoided.

Formally, consider two schemes of sampling from $M$ and $Z$. Under \emph{independent sampling}, $M$ and $Z$ are sampled separately each time, yielding a set of samples $\left\{(M_1,Z_1),\ldots,(M_n,Z_n)\right\}$. Under \emph{stratified sampling}, $k$ samples are taken from $M$ and for each sample of $M$, $Z$ is sampled $s_i$ times, yielding a set of samples $\left\{(M_1,Z_{1,1}),\ldots,(M_1,Z_{1,s_1}), \ldots,(M_k,Z_{k,s_k})\right\}$.
In both schemes, the sample mean is used as an estimate for the population mean. Since $M$ and $Z$ are sampled independently, both schemes yield unbiased estimators. However, the following result shows that the schemes differ in terms of variance.

\begin{lem}
Independent sampling yields a lower-variance estimate of $\E{f_{(M,Z)}}$ than stratified sampling.
\label{lem:single_run}
\end{lem}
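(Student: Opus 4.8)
The plan is to compute the variance of each estimator in closed form and compare them term by term. Throughout, write $g(\mu) = \mathbb{E}_{Z}\!\left[f(\mu,Z)\right]$ for the metric value of a fixed match $\mu$ averaged over the seed, and let $\theta = \E{f_{(M,Z)}} = \mathbb{E}_{M}\!\left[g(M)\right]$ be the quantity we estimate. Applying the law of total variance once, conditioning on the match, gives $\var{f(M,Z)} = \sigma_W^2 + \sigma_B^2$, where $\sigma_W^2 = \mathbb{E}_{M}\!\left[\var{f(M,Z)\mid M}\right]$ is the expected \emph{within}-match variance and $\sigma_B^2 = \var{g(M)}$ is the \emph{between}-match variance. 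Both schemes report an average of metric values over $n$ (resp.\ $N=\sum_{i=1}^{k}s_i$) runs, and since $M$ and $Z$ are drawn independently every such average is unbiased for $\theta$; the lemma is entirely about the variances.

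For independent sampling the $n$ terms $f(M_i,Z_i)$ are i.i.d., so $\var{\frac1n\sum_i f(M_i,Z_i)} = \frac1n(\sigma_W^2+\sigma_B^2)$. For stratified sampling I would condition on $M_1,\dots,M_k$ first. Given the matches, the $s_i$ runs inside block $i$ use independent seeds, hence the conditional mean of the estimator is $\frac1N\sum_i s_i\,g(M_i)$ and its conditional variance is $\frac1{N^2}\sum_i s_i\,\var{f(M_i,Z)\mid M_i}$. Taking the expectation of the conditional variance over the i.i.d.\ matches collapses the first piece to $\frac1{N^2}\bigl(\sum_i s_i\bigr)\sigma_W^2 = \frac{\sigma_W^2}{N}$, while the variance of the conditional mean equals $\frac1{N^2}\bigl(\sum_i s_i^2\bigr)\var{g(M)} = \frac{\sum_i s_i^2}{N^2}\sigma_B^2$ because the $M_i$ are independent. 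Hence $\var{\hat\theta_{\mathrm{strat}}} = \frac{\sigma_W^2}{N} + \frac{\sum_i s_i^2}{N^2}\sigma_B^2$.

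Comparing the two with $n=N$, the within-match contributions coincide and the difference is $\var{\hat\theta_{\mathrm{strat}}} - \var{\hat\theta_{\mathrm{ind}}} = \frac{\sigma_B^2}{N^2}\bigl(\sum_i s_i^2 - N\bigr) \ge 0$: each $s_i$ is a positive integer, so $s_i^2 \ge s_i$ and $\sum_i s_i^2 \ge \sum_i s_i = N$ (one could instead invoke Cauchy--Schwarz, $\sum_i s_i^2 \ge N^2/k \ge N$ since $k\le N$), with equality only when $\sigma_B^2 = 0$ or every block contains a single run, i.e.\ when there is no genuine stratification. I expect the only delicate point to be the second application of the law of total variance — keeping straight which term is the expectation of a conditional variance and which is the variance of a conditional mean — together with the observation that the within-match noise is averaged over exactly $N$ runs under \emph{either} scheme, so the entire penalty of stratification is concentrated in the between-match term, which it inflates by the factor $\sum_i s_i^2 \ge N$.
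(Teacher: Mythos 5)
Your proof is correct, but it takes a genuinely different route from the paper's. The paper never computes either estimator's variance in closed form: it writes $\var{\sum_i f(M_i,Z_i)}$ as a double sum of pairwise covariances and argues term by term that any pair drawn independently has the same (zero) covariance as a cross-stratum pair, while two runs sharing a stratum have weakly higher covariance, so the stratified sum's variance is weakly larger. You instead apply the law of total variance twice, splitting $\var{f(M,Z)}$ into within-match ($\sigma_W^2$) and between-match ($\sigma_B^2$) components and deriving the exact expressions $\frac{1}{N}(\sigma_W^2+\sigma_B^2)$ and $\frac{\sigma_W^2}{N}+\frac{\sum_i s_i^2}{N^2}\sigma_B^2$. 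The two arguments hinge on the same underlying fact --- a shared match $M_i$ induces covariance $\var{g(M)}=\sigma_B^2\ge 0$ between runs in the same stratum, which is exactly the "weakly higher covariance" inequality the paper asserts without quantifying --- but your version buys more: it isolates the entire stratification penalty as $\frac{\sigma_B^2}{N^2}\bigl(\sum_i s_i^2-N\bigr)$, makes explicit that the within-match contribution is identical under either scheme, and identifies the equality cases ($\sigma_B^2=0$ or all $s_i=1$), which also clarifies that the lemma's "lower" should be read as "weakly lower." The paper's covariance-expansion argument is terser and avoids bookkeeping over the $s_i$, at the cost of giving no quantitative handle on how much stratification hurts.
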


\begin{proof}
First, independent random variables have no covariance.
\begin{equation}
	\cov{f(M_i,Z_i), f(M_j,Z_j)} = \cov{f(M_k,Z_{k,l}), f(M_m,Z_{m,n})}\label{eq:same_cov}	
\end{equation}
On the other hand, if two samples share the same stratum (the same sample $\mu \sim M$) then they have weakly higher covariance.
\begin{equation}
	\cov{f(M_k,Z_{k,l}), f(M_k,Z_{k,m})} \geq \cov{f(M_i,Z_i), f(M_j,Z_j)} \label{eq:higher_cov}
\end{equation}
Using Equations~\eqref{eq:same_cov} and \eqref{eq:higher_cov} we can write
\begin{align*}
	\var{\sum_i f(M_i,Z_i)} &= \sum_{i,j} \cov{f(M_i,Z_i), f(M_j,Z_j)}&\\
	&\leq \sum_{k,l,m,n} \cov{f(M_k,Z_{k,l}), f(M_m,Z_{m,n})}&\\
	&= \var{\sum_{k,l} f(M_k,Z_{k,l})}.
&\Box
\end{align*}
\end{proof}

We also claimed that stratifying increases the variance of quantile point estimation. This result can be found (albeit without proof) in \inlinecite{heidelberger84}. 

\bibliographystyle{klunamed}
\bibliography{thesis}

\begin{thebibliography}{}

\bibitem[\protect\citeauthoryear{Airiau et~al.}{2007}]{airiau2007}
Airiau, S., S. Saha, and S. Sen: 2007, `Evolutionary Tournament-Based
  Comparison of Learning and Non-Learning Algorithms for Iterated Games'.
\newblock {\em Journal of Artificial Societies and Social Simulation} {\bf
  10}(3), 7.

\bibitem[\protect\citeauthoryear{Axelrod}{1987}]{axelrod87}
Axelrod, R.: 1987, `The Evolution of Strategies in the Iterated Prisoner's
  Dilemma'.
\newblock In: L. Davis (ed.): {\em Genetic Algorithms and Simulated Annealing}.
\newblock Morgan Kaufman, Los Altos, CA, pp. 32--41.

\bibitem[\protect\citeauthoryear{Banerjee and Peng}{2004}]{banerjee04}
Banerjee, B. and J. Peng: 2004, `{Performance bounded reinforcement learning in
  strategic interactions}'.
\newblock In: {\em {AAAI} 11}.

\bibitem[\protect\citeauthoryear{Banerjee and Peng}{2006}]{banerjee06}
Banerjee, B. and J. Peng: 2006, `{RV: a unifying approach to performance and
  convergence in online multiagent learning}'.
\newblock In: {\em {AAMAS '06}}. pp. 798--800.

\bibitem[\protect\citeauthoryear{Bowling}{2004a}]{bowling04}
Bowling, M.: 2004a, `{Convergence and no-regret in multiagent learning}'.
\newblock In: {\em {NIPS} 17}.

\bibitem[\protect\citeauthoryear{Bowling}{2004b}]{bowling04b}
Bowling, M.: 2004b, `Convergence and no-regret in multiagent learning'.
\newblock Technical Report TR04-11, University of Alberta.

\bibitem[\protect\citeauthoryear{Bowling and Veloso}{2001}]{bowling01:1}
Bowling, M. and M. Veloso: 2001, `{Rational and convergent learning in
  stochastic games}'.
\newblock In: {\em {IJCAI} 17}.

\bibitem[\protect\citeauthoryear{Bowling and Veloso}{2002}]{bowling02b}
Bowling, M.~H. and M.~M. Veloso: 2002, `Multiagent learning using a variable
  learning rate'.
\newblock {\em Artificial Intelligence} {\bf 136}(2), 215--250.

\bibitem[\protect\citeauthoryear{Brown}{1951}]{brown51}
Brown, G.: 1951, `{Iterative solution of games by ficticious play}'.
\newblock In: {\em {Activity Analysis of Production and Allocation}}. New York.

\bibitem[\protect\citeauthoryear{Claus and Boutilier}{1997}]{claus97}
Claus, C. and C. Boutilier: 1997, `{The dynamics of reinforcement learning in
  cooperative multiagent systems}'.
\newblock In: {\em {AAAI} 4}. pp. 746 -- 752.

\bibitem[\protect\citeauthoryear{Conitzer and Sandholm}{2003}]{conitzer03}
Conitzer, V. and T. Sandholm: 2003, `{AWESOME: A General Multiagent Learning
  Algorithm that Converges in Self-Play and Learns a Best Response Against
  Stationary Opponents}'.
\newblock In: {\em {ICML} 20}. pp. 83--90.

\bibitem[\protect\citeauthoryear{Conitzer and Sandholm}{2007}]{conitzer07}
Conitzer, V. and T. Sandholm: 2007, `{AWESOME: A General Multiagent Learning
  Algorithm that Converges in Self-Play and Learns a Best Response Against
  Stationary Opponents}'.
\newblock {\em Machine Learning} {\bf 67}(1), 23--43.

\bibitem[\protect\citeauthoryear{Fudenberg and Kreps}{1993}]{fudenberg93}
Fudenberg, D. and D.~M. Kreps: 1993, `Learning Mixed Equilibria'.
\newblock {\em Games and Economic Behavior} {\bf 5}(3), 320--367.

\bibitem[\protect\citeauthoryear{Govindan and Wilson}{2003}]{govindan03}
Govindan, S. and R. Wilson: 2003, `{A global newton method to compute Nash
  equilibria}'.
\newblock {\em {Journal of Economic Theory}} {\bf 110}(1), 65 -- 86.

\bibitem[\protect\citeauthoryear{Greenwald and Hall}{2003}]{greenwald03}
Greenwald, A. and K. Hall: 2003, `{Correlated-Q learning}'.
\newblock In: {\em {ICML} 20}.

\bibitem[\protect\citeauthoryear{Heidelberger and Lewis}{1984}]{heidelberger84}
Heidelberger, P. and P. Lewis: 1984, `{Quantile estimation in dependent
  sequences}'.
\newblock {\em Operations Research} {\bf 32}(1), 185--209.

\bibitem[\protect\citeauthoryear{Hu and Wellman}{1998}]{hu98}
Hu, J. and M. Wellman: 1998, `{Multiagent reinforcement learning: theoretical
  framework and an algorithm}'.
\newblock In: {\em {ICML} 15}. pp. 242 -- 250.

\bibitem[\protect\citeauthoryear{Hu and Wellman}{2003}]{hu03}
Hu, J. and M.~P. Wellman: 2003, `{Nash Q-learning for general-sum stochastic
  games}'.
\newblock {\em Journal of Machine Learning Research} {\bf 4}, 1039--1069.

\bibitem[\protect\citeauthoryear{Lemke and Howson}{1964}]{lemke64}
Lemke, C. and J. Howson: 1964, `{Equilibrium points of bimatrix games.}'.
\newblock In: {\em {Journal of the Society for Industrial and Applied
  Mathematics}}, Vol.~12. p. 413–423.

\bibitem[\protect\citeauthoryear{Lipson}{2005}]{lipson05}
Lipson, A.: 2005, `{An empirical evaluation of multiagent learning
  algorithms}'.
\newblock Master's thesis, University of British Columbia, Vancouver, Canada.

\bibitem[\protect\citeauthoryear{Littman}{1994}]{littman94}
Littman, M.: 1994, `{Markov games as a framework for multi-agent reinforcement
  learning}'.
\newblock In: {\em {ICML} 11}. pp. 157 -- 163.

\bibitem[\protect\citeauthoryear{Littman}{2001}]{littman01}
Littman, M.: 2001, `{Friend-or-foe Q-learning in general-sum games}'.
\newblock In: {\em {ICML} 18}. pp. 322 -- 328.

\bibitem[\protect\citeauthoryear{McKelvey et~al.}{2004}]{mckelvey04}
McKelvey, R., A. McLennan, and T. Turocy: 2004, `{Gambit: software tools for
  game theory}'.
\newblock Version 0.97.0.6. \url{http://econweb.tamu.edu/gambit}.

\bibitem[\protect\citeauthoryear{Monderer and Sela}{1996}]{monderer96}
Monderer, D. and A. Sela: 1996, `{A 2$\times$ 2 game without the fictitious
  play property}'.
\newblock {\em Games and Economic Behavior} {\bf 14}, 144--148.

\bibitem[\protect\citeauthoryear{Monderer and Shapley}{1996}]{monderer96b}
Monderer, D. and L. Shapley: 1996, `{Fictitious play property for games with
  identical interests}'.
\newblock {\em Journal of Economic Theory} {\bf 68}(1), 258--265.

\bibitem[\protect\citeauthoryear{Nudelman et~al.}{2004}]{nudelman04}
Nudelman, E., J. Wortman, K. Leyton-Brown, and Y. Shoham: 2004, `{Run the
  GAMUT: a comprehensive approach to evaluating game-theoretic algorithms}'.
\newblock In: {\em {AAMAS} 3}.

\bibitem[\protect\citeauthoryear{Osborne and Rubinstein}{1994}]{OsbRub}
Osborne, M. and A. Rubinstein: 1994, {\em A Course in Game Theory}.
\newblock MIT Press.

\bibitem[\protect\citeauthoryear{Powers and Shoham}{2005}]{powers04}
Powers, R. and Y. Shoham: 2005, `{New criteria and a new algorithm for learning
  in multi-agent systems}'.
\newblock In: {\em {NIPS}}, Vol.~17. pp. 1089--1096.

\bibitem[\protect\citeauthoryear{{R Development Core Team}}{2006}]{r06}
{R Development Core Team}: 2006, `R: a language and environment for statistical
  computing'.
\newblock R Foundation for Statistical Computing, Vienna, Austria.

\bibitem[\protect\citeauthoryear{Rapoport et~al.}{1976}]{rapoport22}
Rapoport, A., M. Guyer, and D. Gordon: 1976, {\em The 2x2 Game}.
\newblock Univeristy of Michigan Press.

\bibitem[\protect\citeauthoryear{Sandholm}{2007}]{sandholm2007}
Sandholm, T.: 2007, `{Perspectives on multiagent learning}'.
\newblock {\em Artificial Intelligence} {\bf 171}(7), 382--391.

\bibitem[\protect\citeauthoryear{Shoham and
  Leyton-Brown}{2008}]{MASfoundations08}
Shoham, Y. and K. Leyton-Brown: 2008, {\em Multiagent Systems: Algorithmic,
  Game-Theoretic, and Logical Foundations}.
\newblock New York: Cambridge University Press.

\bibitem[\protect\citeauthoryear{Shoham et~al.}{2007}]{shoham06}
Shoham, Y., R. Powers, and T. Grenager: 2007, `If multi-agent learning is the
  answer, what is the question?'.
\newblock {\em Artificial Intelligence} {\bf 171}(7), 365--377.

\bibitem[\protect\citeauthoryear{Singh et~al.}{2000}]{singh00}
Singh, S., M. Kearns, and Y. Mansour: 2000, `{Nash convergence of gradient
  dynamics in general-sum games}'.
\newblock In: {\em {UAI} 16}.

\bibitem[\protect\citeauthoryear{Spall}{2003}]{spall03}
Spall, J.~C.: 2003, {\em {Introduction to Stochastic Search and Optimization:
  Estimation, Simulation and Control}}.
\newblock {Hoboken, New Jersey}: John Wiley \& Sons.

\bibitem[\protect\citeauthoryear{Sutton and Barto}{1999}]{sutton99}
Sutton, R. and A. Barto: 1999, {\em {Reinforcement Learning, An Introduction}}.
\newblock {Cambridge, Massachusetts}: {The MIT Press}.

\bibitem[\protect\citeauthoryear{Tesauro}{2004}]{tesauro04}
Tesauro, G.: 2004, `{Extending Q-learning to general adaptive multi-agent
  systems}'.
\newblock In: {\em {NIPS} 16}.

\bibitem[\protect\citeauthoryear{Vu et~al.}{2005}]{vu06}
Vu, T., R. Powers, and Y. Shoham: 2005, `{Learning against multiple
  opponents}'.
\newblock In: {\em {AAMAS}}.

\bibitem[\protect\citeauthoryear{Watkins and Dayan}{1992}]{watkins92}
Watkins, C. and P. Dayan: 1992, `{Q-learning: technical note}'.
\newblock {\em {Machine Learning}} {\bf 8}, 279--292.

\bibitem[\protect\citeauthoryear{Zinkevich}{2003}]{zinkevich03}
Zinkevich, M.: 2003, `Online convex programming and generalized infinitesimal
  gradient ascent'.
\newblock In: {\em {ICML'03}}.

\end{thebibliography}

\end{document}